\begin{document}
\newtheorem{proposition}{Proposition}
\newtheorem{theorem}{Theorem}
\newtheorem{corollary}{Corollary}
\newtheorem{lemma}{Lemma}
\newtheorem{problem}{Problem}
\newtheorem{remark}{Remark}
\newtheorem{assumption}{Assumption}
\newtheorem{claim}{Claim}
\newtheorem{definition}{Definition}

\title{Cellular Multi-User Two-Way MIMO AF Relaying via Signal Space
Alignment: Minimum Weighted SINR Maximization}

\author{Eddy~Chiu and Vincent~K.~N.~Lau% <-this % stops a space
\thanks{E.~Chiu and V.~K.~N.~Lau are with the Department
of Electronic and Computer Engineering, Hong Kong University of
Science and Technology, Hong Kong (e-mail: echiua@ieee.org and
eeknlau@ust.hk). This work is funded by RGC~614910. The results in
this paper were presented in part at IEEE ICC'12, Jun. 2012.}}

\markboth{To appear in IEEE Transactions on Signal Processing,~2012}
{{Chiu and Lau}: Cellular Multi-User Two-Way MIMO AF Relaying via
Signal Space Alignment}

\maketitle

\begin{abstract}
In this paper, we consider linear MIMO transceiver design for a
cellular two-way amplify-and-forward relaying system consisting of a
single multi-antenna base station, a single multi-antenna relay
station, and multiple multi-antenna mobile stations (MSs). Due to
the two-way transmission, the MSs could suffer from tremendous
multi-user interference. We apply an interference management model
exploiting signal space alignment and propose a transceiver design
algorithm, which allows for alleviating the loss in spectral
efficiency due to half-duplex operation and providing flexible
performance optimization accounting for each user's quality of
service priorities. Numerical comparisons to conventional two-way
relaying schemes based on bidirectional channel inversion and
spatial division multiple access-only processing show that the
proposed scheme achieves superior error rate and average data rate
performance.
\end{abstract}

\begin{IEEEkeywords}
Two-way relaying, amplify-and-forward, signal space alignment,
interference management, linear multi-user MIMO transceiver design,
quality of service constraints, multigroup multicast, second-order
cone programming.
\end{IEEEkeywords}

% \IEEEpeerreviewmaketitle

\section{Introduction} \label{Sec:Introduction}
The use of relays to improve link reliability and coverage of
cellular wireless communication systems has attracted significant
research interest since the pioneer works
\cite{Jnl:Cooperative_diversity:Wornell,
Jnl:Cooperative_diversity:Erkip1, Jnl:Cooperative_diversity:Erkip2,
Jnl:Relay_capacity_scaling_laws:Bolcskei}, and various relaying
protocols are embraced by state-of-the-art and next generation
commercial standards \cite{Std:16m, Jnl:3GPP_LTE-A_relays,
Std:Winner_final_innovation_report}. In practice, most relaying
protocols operate in a half-duplex manner and transmission is
divided into two phases using orthogonal channel accesses: in the
first phase the source node broadcasts its message, and in the
second phase the relay station (RS) forwards the source message to
the destination node. The deficiency of half-duplex relaying is that
when there is no direct link between the source and destination
nodes the end-to-end transmission can only achieve half the degrees
of freedom (DoF) of the channel.

Two-way relaying is a promising means to alleviate the loss in
spectral efficiency due to half-duplex operation. Specifically,
given \emph{a pair} of terminal nodes that are to \emph{exchange}
data, we allow transmissions in both directions to occur
concurrently and reduce the total transmission time by half. The
bidirectional transmissions will mutually interfere with each other;
nonetheless, we can mitigate the impact of interference by employing
spatial division multiple access (SDMA) processing at the RS
\cite{Jnl:Multiuser_two_way_relaying_no_SI_cancellation:Sayed}, or
by applying the principles of analogue network coding (ANC)
\cite{Jnl:Two_way_relay_intro:Wittneben} or physical layer network
coding (PNC) \cite{Cnf:Pnc:Zhang}. By means of ANC, the RS performs
amplify-and-forward (AF) relaying, and the terminal nodes utilize
the a priori knowledge of their own transmitted signals to cancel
the self-induced backward propagated interference. On the other
hand, by means of PNC the RS attempts to decode-and-forward (DF) the
network coded version of the terminal node signals, and the terminal
nodes utilize the knowledge of their own signals to decode the
network code. Note that PNC has strict feasibility requirements for
the precoding and modulation and coding (MCS) schemes used at the
terminal nodes (cf. \cite[Proposition~1]{Cnf:Pnc:Zhang}).

It is, however, nontrivial to extend the two-way relaying protocol
to cellular multi-user systems. In this case, each node experiences
\emph{self-induced interference} as well as \emph{multi-user
interference}, and thus necessitates more sophisticated interference
management techniques. To shed insight on designing an efficient
scheme for cellular multi-user two-way relaying, we first review the
qualities and limitations of the prominent related works.

\textbf{Single-User Two-Way AF Relaying:} In
\cite{Jnl:Two_way_relay_intro:Wittneben,
Jnl:Two_way_single_user_AF_single_stream,
Thesis:Two_way_relay:Unger,
Jnl:Single_user_two_way_relaying_gradient:Korea,
Cnf:ANOMAX_rank_restored,
Jnl:Single_user_two_way_relaying_adaptive_mod:Alouini}, the authors
consider two-way AF relaying between two terminal nodes, and propose
linear transceiver designs subject to various performance metrics
(e.g., sum rate maximization and error rate minimization). In
\cite{Jnl:AF_two-way_relay_error_exponents}, the authors analyze the
random coding error exponent in two-way AF relay networks and
investigate rate and power allocation. However, these single-user
designs cannot be easily extended to multi-user systems as they do
not accommodate for the presence of multi-user interference.

\textbf{Multi-User Two-Way AF Relaying with Fixed Terminal Node
Transceivers:} In
\cite{Jnl:Multiuser_two_way_relaying_no_SI_cancellation:Sayed,
Jnl:Multiuser_two_way_relaying_no_SI_cancellation_User_Sel:Sayed},
the authors consider two-way AF relaying between multiple pairs of
terminal nodes. These works neither exploit self-interference
cancelation nor optimize the terminal node transceivers (i.e., the
transceivers are predetermined offline). Instead, they rely solely
on conventional SDMA processing at the RS to mitigate the effects of
interference. In \cite{Cnf:Two_way_one_BS_multi_MS_AF:Toh,
Jnl:Multiuser_two_way_relaying_scheme_and_analysis:Ding,
Cnf:Two_way_one_BS_multi_MS_AF:Sun}, the authors consider cellular
two-way relaying between a multi-antenna base station (BS) and
multiple single-antenna mobile stations (MSs). Since the MSs are
only equipped with a single antenna, they cannot apply MIMO
interference mitigation techniques. The focus of these works is to
jointly design the BS and RS transceivers to heuristically perform
\emph{\hbox{UL\,-DL} bidirectional channel inversion}, thereby
simultaneously zero-force the multi-user interference from the
received signals of the BS and the MSs.

\textbf{Two-Way DF Relaying:} In \cite{Cnf:Pnc:Zhang,
Jnl:Pnc_channel_coding:Zhang, Jnl:Pnc_modulation:Tarokh}, the
authors consider single-user two-way DF relaying with PNC. Albeit
theoretically promising, the application of PNC is subject to
stringent feasibility requirements that greatly restrict the choices
of the MCS schemes that could be employed, and the decoding
operation at the RS has high computational complexity. In
\cite{Jnl:Mimo_Y_channel:Lee}, the authors consider the unique
scenario of three-user three-way DF relaying using the arguments of
interference alignment (IA) \cite{Jnl:IA:Cadambe_Jafar,
Jnl:Distributed_IA:Gomadam_Cadambe_Jafar, Jnl:Piaid:Huang}. Yet,
this scheme cannot be easily extended to a general number of users
as IA may be infeasible.

In this paper, we consider a cellular system consisting of a BS, an
RS, and multiple MSs. All nodes are equipped with multiple antennas.
We seek to design linear MIMO transceiver for each node to
facilitate efficient two-way AF relaying. The contributions and
technical challenges of this work are as follows.

\begin{list}{\labelitemi}{\leftmargin=0.5em}
\item \textbf{Two-Way Relaying by Virtue of Signal Space Alignment:}
We show that for the cellular multi-user two-way relaying system
under study, the MSs could suffer from tremendous multi-user
interference. Yet, exploiting the advantage of self-interference
cancelation, we can align the signal spaces of the uplink (UL) and
downlink (DL) signals to reduce the dimensions occupied by
multi-user interference at each MS. Ultimately, this allows us to
alleviate the half-duplex loss and achieve the DoF of the channel.

The paradigm of two-way relaying exploiting signal space alignment
is also considered in different contexts in
\cite{Cnf:Two-way_relay_eigen-direction_alignment,
Jnl:Signal_alignment_CDM}. Specifically, in
\cite{Cnf:Two-way_relay_eigen-direction_alignment} the authors
consider a single-user two-way MIMO relaying system, for which they
propose a precoding design that align the two-way signals and deduce
an algorithm for optimizing the basis of the aligned signal space.
However, this single-user design cannot be easily extended to
multi-user systems as it does not accommodate for the presence of
multi-user interference. On the other hand, in \cite{
Jnl:Signal_alignment_CDM} the authors consider a multi-user two-way
multi-carrier relaying system, for which they propose different
frequency domain precoding designs based on aligning the two-way
signals of each pair of communicating terminal nodes. These
frequency domain precoding designs neglect the impact of multi-user
interference, and rely on the intrinsic high frequency diversity to
mitigate the impact of interference. Note that it is non-trivial to
extend these precoding designs to practical two-way MIMO relaying
systems whose signal space dimensions are not large.

\item \textbf{Algorithm for Two-Way Relay Transceiver
Design with Quality of Service Constraints:} In consideration of the
fact that users in cellular systems have different quality of
service (QoS) priorities, we formulate the two-way relay transceiver
design problem to maximize the minimum weighted per stream
signal-to-interference plus noise ratio (SINR) among all UL and DL
data streams. This problem does not lead to closed-form solutions
and is non-convex, and we propose to solve it using a two-stage
algorithm. In the first stage we focus on attaining signal space
alignment, and in the second stage we aim at optimizing the weighted
per stream SINRs\footnote{Note that for multi-user DL unicast
systems, it is shown in \cite{Jnl:Analysis_max-min_weighted_SINR_DL}
that max-min weighted SINR MIMO transceiver designs can achieve
optimal solutions under the special case of rank one channels.}. We
show that the second stage subproblem belongs to the class of
multigroup multicast problems, which are NP-hard
\cite[Claim~2]{Jnl:MBS-SDMA_using_SDR_with_perfect_CSI:Sidiropoulos_Luo}.
So we further propose an algorithm to efficiently solve the second
stage subproblem using second-order cone programming (SOCP)
techniques \cite[Section~4.4.2]{Bok:Convex_optimization:Boyd}.
\end{list}

\textbf{Outline:} The rest of this paper is organized as follows. In
Section~\ref{Sec:SystemModel} we present the system model. In
Section~\ref{Sec:ProblemFormulation} we discuss the interference
management model and formulate the two-way relay transceiver design
problem. In Section~\ref{Sec:ProposedSolution} we present the
proposed transceiver design algorithm. In
Section~\ref{Sec:SimsAndDiscussions} we present numerical simulation
results. Finally, in Section~\ref{Sec:Conclusions} we conclude the
paper.

\textbf{Notations:} $\mathbb{C}_{}^{\,M \times N}$ denotes the set
of complex $M \times N$ matrices. Upper and lower case bold letters
denote matrices and vectors, respectively. $\textrm{vec} (
\textbf{X} )$ denotes the column-by-column vectorization of
$\textbf{X}$. $[\, \textbf{X}_{\,1}^{} \,;\, \ldots \,;\,
\textbf{X}_{N}^{} \,]$ and $[\, \textbf{X}_{\,1}^{}, \ldots,
\textbf{X}_{N}^{} \,]$ denote the matrices obtained by vertically
and horizontally concatenating $\textbf{X}_{\,1}^{}, \ldots,
\textbf{X}_{N}^{}$, respectively. $\textrm{diag}(\,
\textbf{X}_{\,1}^{}, \ldots, \textbf{X}_{N}^{} )$ denotes a block
diagonal matrix having $\textbf{X}_{\,1}^{}, \ldots,
\textbf{X}_{N}^{}$ in the main diagonal. $[\, \textbf{X}
\,]_{(a\,:\,b\,,\,c\,:\,d)}^{}$ denotes the \hbox{$a$-th} to the
\hbox{$b$-th} row and the \hbox{$c$-th} to the \hbox{$d$-th} column
of $\textbf{X}$. $( \cdot )_{}^{T}$, $( \cdot )_{}^{\dag}$, and $(
\cdot )_{}^{\ast}$ denote transpose, Hermitian transpose, and
conjugate, respectively. $\textrm{range} ( \textbf{X} )$ denotes the
column space of $\textbf{X}$. $\textrm{null} ( \textbf{X} )$ denotes
the orthonormal basis for the null space of $\textbf{X}$.
$\textrm{rank} ( \textbf{X} )$ and $\textrm{nullity} ( \textbf{X} )$
denote the rank and the nullity of $\textbf{X}$, respectively.
$\textrm{pinv} ( \textbf{X} )$ denotes the pseudo-inverse of
$\textbf{X}$. $\Re ( y )$ denotes the real component of $y$. $||\,
\textbf{X} \,||$ denotes the Frobenius norm of $\textbf{X}$.
$\succeq_{K}^{}$ denotes the generalized inequality with respect to
the second-order cone, i.e., $[\, y \,;\, \textbf{x} \,]
\succeq_{K}^{} 0$ means that $y \ge ||\, \textbf{x} \,||$.
$\textbf{X}_{\,1}^{} \otimes \textbf{X}_{\,2}^{}$ denotes the
Kronecker product of $\textbf{X}_{\,1}^{}$ and
$\textbf{X}_{\,2}^{}$. $\textbf{x} \sim \mathcal{CN} (
\boldsymbol{\mu}, \boldsymbol{\Xi} )$ denotes that $\textbf{x}$ is
complex Gaussian distributed with mean $\boldsymbol{\mu}$ and
covariance matrix $\boldsymbol{\Xi}$. $\mathbb{E}( \cdot )$ denotes
expectation. $\mathcal{K}$ denotes the index set $\{ 1, \ldots, K
\,\}$ and $\mathcal{L}_{\,k}^{}$ denotes the index set $\{ 1,
\ldots, L_{\,k}^{} \}$. $\textbf{0}_{M \times N}^{}$ denotes an $M
\times N$ matrix of zeros and $\textbf{I}_{N}^{}$ denotes an $N
\times N$ identity matrix.

\section{System Model} \label{Sec:SystemModel} We consider a
multi-user system where a BS communicates with multiple MSs as
illustrated in \figurename~\ref{Fig:MultiUserRelaySystem}. Due to
the effects of path loss and shadowing, there is no direct link
between the BS and the MSs, and a half-duplex RS is deployed to
assist data transmission. In conventional relay systems, UL and DL
transmissions utilize non-overlapping channel accesses (cf.
\figurename~\ref{Fig:OneWayUl} and \figurename~\ref{Fig:OneWayDl}).
We adopt the two-phase two-way relaying protocol whereby UL and DL
transmissions share the channel: first in the multi-access (MAC)
phase the BS and the MSs concurrently transmit to the RS (cf.
\figurename~\ref{Fig:TwoWayMac}), then in the broadcast (BC) phase
the RS forwards the aggregate signals to the BS and the MSs (cf.
\figurename~\ref{Fig:TwoWayBc}). Specifically, we are interested in
a time division duplex (TDD) system where conventional one-way
relaying requires four time slots to complete the UL and DL
transmissions while two-way relaying requires only two time slots as
depicted in \figurename~\ref{Fig:TimeLine}.

The detailed model of the system under study is shown in
\figurename~\ref{Fig:SystemModel}. We consider two-way relaying
between one BS and $K$ MSs. For ease of exposition, we focus on the
\hbox{$k$-th} MS and the same model applies to all the MSs. The BS
is equipped with $N_{B}^{}$ antennas, the RS is equipped with
$N_{R}^{}$ antennas, and the \hbox{$k$-th} MS is equipped with
$N_{\,k}^{}$ antennas. In the DL, the BS transmits $L_{\,k}^{}$ data
streams $\textbf{s}_{D}^{(k)} \in \mathbb{C}_{}^{\,L_{k}^{} \times
1}$ to the \hbox{$k$-th} MS and a total of $L = \sum_{k = 1}^{K} \!
L_{\,k}^{}$ data streams $\textbf{s}_{D}^{} \triangleq [\,
\textbf{s}_{D}^{(1)} \,;\, \ldots \,;\, \textbf{s}_{D}^{(K)} \,]$ to
all the MSs. In the UL, the \hbox{$k$-th} MS transmits $L_{\,k}^{}$
data streams $\textbf{s}_{\,U}^{(k)} \in \mathbb{C}_{}^{\,L_{k}^{}
\times 1}$ to the BS, and altogether the MSs transmit $L$ data
streams $\textbf{s}_{\,U}^{} \triangleq [\, \textbf{s}_{\,U}^{(1)}
\,;\, \ldots \,;\, \textbf{s}_{\,U}^{(K)} \,]$ to the BS. We make
the following assumptions about the data model.

\begin{assumption}[Data Model]\label{Assumption:DataModel} All data
streams are independent and have unit power. The covariance matrix
of the DL data streams is given by $\mathbb{E}(\, \textbf{s}_{D}^{}
( \textbf{s}_{D}^{} )_{}^{\dag\,} ) = \textbf{I}_{L}^{}$, and the
covariance matrix of the UL data streams is given by $\mathbb{E}(\,
\textbf{s}_{\,U}^{(k)} ( \textbf{s}_{\,U}^{(k)} )_{}^{\dag\,} ) =
\textbf{I}_{L_{k}^{}}^{}$.~ \hfill\IEEEQEDclosed
\end{assumption}

\subsection{Two-Way Relaying MAC Phase} In the MAC phase, the BS
precodes the DL data streams $\textbf{s}_{D}^{}$ using the precoder
matrix \hbox{$\textbf{W}_{\!B}^{} \triangleq [\,
\textbf{W}_{\!B,1}^{}, \ldots, \textbf{W}_{\!B,K}^{} \,]$}, where
$\textbf{W}_{\!B,\,k}^{} \in \mathbb{C}_{}^{\,N_{B}^{} \times
L_{k}^{}}$ is the precoder matrix for data streams
$\textbf{s}_{D}^{(k)}$. Thus, the transmitted signals of the BS are
given by $\textbf{x}_{B}^{} =
\textbf{W}_{\!B}^{}\,\textbf{s}_{D}^{}$. Similarly, the
\hbox{$k$-th} MS precodes the data streams $\textbf{s}_{\,U}^{(k)}$
using the precoder matrix $\textbf{W}_{k}^{} \in
\mathbb{C}_{}^{\,N_{k}^{} \times L_{k}^{}}$, and the transmitted
signals of the \hbox{$k$-th} MS are given by $\textbf{x}_{\,k}^{} =
\textbf{W}_{k}^{}\,\textbf{s}_{\,U}^{(k)}$. We make the following
assumptions about the transmit power constraints of the BS and the
MSs.

\begin{assumption}[BS and MS Transmit Power
Constraints]\label{Assumption:BsMsTxPowerConstraints} The maximum
transmit power of the BS is given by $\mathbb{E}( ||\,
\textbf{x}_{B}^{} \,||_{}^{\,2\,} ) = ||\, \textbf{W}_{\!B}^{}
\,||_{}^{\,2\,} \leq P_{B}^{}$. The maximum transmit power of the
\hbox{$k$-th} MS is given by $\mathbb{E}( ||\, \textbf{x}_{\,k}^{}
\,||^{\,2\,} ) = ||\, \textbf{W}_{k}^{} \,||^{\,2\,} \leq
P_{\,k}^{}$.~ \hfill\IEEEQEDclosed
\end{assumption}

Let $\textbf{H}_{R\,,B}^{} \!\in\! \mathbb{C}_{}^{\,N_{R}^{} \times
N_{B}^{}}$ denote the channel matrix from the BS to the RS, and let
$\textbf{H}_{R\,,\,k}^{} \!\in\! \mathbb{C}_{}^{\,N_{R}^{} \times
N_{k}^{}}$ denote the channel matrix from the \hbox{$k$-th} MS to
the RS. It follows that the received signals of the RS can be
expressed as
\begin{IEEEeqnarray*}{l}
\textstyle\textbf{y}_{\!R}^{} = \sum_{m = 1}^{K} \!
\textbf{H}_{R\,,m}^{}\, \textbf{x}_{\,m}^{} +
\textbf{H}_{R\,,B}^{}\, \textbf{x}_{B}^{} + \textbf{n}_{R}^{} =
\textbf{U} \, \textbf{s}_{\,U}^{} + \textbf{D} \, \textbf{s}_{D}^{}
+ \textbf{n}_{R}^{},\IEEEyesnumber\label{Eqn:RsRecvSignal}
\end{IEEEeqnarray*}
where $\textbf{U} \triangleq [\, \textbf{H}_{R\,,1}^{}
\textbf{W}_{1}^{}, \ldots, \textbf{H}_{R\,,K}^{} \textbf{W}_{\!K}^{}
\,]$ and $\textbf{D} \triangleq \textbf{H}_{R\,,B}^{}
\textbf{W}_{\!B}^{}$ represent the MAC phase effective channel
matrices of the UL and DL data streams, respectively, and
$\textbf{n}_{R}^{} \sim \mathcal{CN}\big( \textbf{0}_{N_{R}^{}
\times 1}^{}, N_{\,0}^{} \, \textbf{I}_{N_{R}^{}}^{} \big)$ is the
AWGN. We make the following assumptions about the channel model.

\begin{assumption}[Channel Model]\label{Assumption:ChannelModel} All
channels are independent and exhibit quasi-static fading such that
the channel matrices $\{ \textbf{H}_{R\,,1}^{}, \ldots,
\textbf{H}_{R\,,K}^{}, \textbf{H}_{R\,,B}^{} \}$ remain unchanged
during a fading block of two time slots spanning the MAC and BC
phases. Moreover, the forward and reverse channels are
\emph{reciprocal}: the channel matrix from the RS to the
\hbox{$k$-th} MS is given by $( \textbf{H}_{R\,,\,k}^{} )_{}^{T}$,
and the channel matrix from the RS to the BS is given by $(
\textbf{H}_{R\,,B}^{} )_{}^{T}$. Without loss of generality, we
assume $\textrm{rank} ( \textbf{H}_{R\,,\,k}^{} ) \ge L_{\,k}^{}$
and $\textrm{rank} ( \textbf{H}_{R\,,B}^{} ) \ge L$.~
\hfill\IEEEQEDclosed
\end{assumption}

\subsection{Two-Way Relaying BC Phase} In the BC phase, the RS
amplifies the received signals $\textbf{y}_{\!R}^{}$ using the
transformation matrix $\textbf{W}_{\!R}^{} \in
\mathbb{C}_{}^{\,N_{R}^{} \times N_{R}^{}}$, and the forwarded
signals are given by
\begin{IEEEeqnarray*}{l}
\textbf{x}_{R}^{} = \textbf{W}_{\!R}^{}\,\textbf{y}_{\!R}^{} =
\textbf{W}_{\!R}^{} \, \textbf{U} \, \textbf{s}_{\,U}^{} +
\textbf{W}_{\!R}^{} \, \textbf{D} \, \textbf{s}_{D}^{} +
\textbf{W}_{\!R}^{} \,
\textbf{n}_{R}^{}.\IEEEyesnumber\label{Eqn:RsTxSignal}
\end{IEEEeqnarray*}
We make the following assumption about the transmit power constraint
of the RS.

\begin{assumption}[RS Transmit Power
Constraint]\label{Assumption:RsTxPowerConstraints} The maximum
transmit power of the RS is given by $\mathbb{E}( ||\,
\textbf{x}_{R}^{} \,||^{\,2\,} ) = \sum_{m = 1}^{K} \! ||\,
\textbf{W}_{\!R}^{} \, \textbf{H}_{R\,,m}^{} \textbf{W}_{m}^{}
\,||^{\,2\,} + ||\, \textbf{W}_{\!R}^{} \, \textbf{H}_{R\,,B}^{}
\textbf{W}_{\!B}^{} \,||^{2\,} + N_{\,0}^{} ||\, \textbf{W}_{\!R}^{}
\,||^{\,2\,} \leq P_{R}^{}$. ~ \hfill\IEEEQEDclosed
\end{assumption}

Accordingly, the received signals of the BS are given by
\begin{IEEEeqnarray*}{l}
\textbf{y}_{\!B}^{} = ( \textbf{H}_{R\,,B}^{} )_{}^{T}
\textbf{x}_{R}^{} + \textbf{n}_{B}^{} = \underbrace{(
\textbf{H}_{R\,,B}^{} )_{}^{T\,} \textbf{W}_{\!R}^{} \, \textbf{U}
\, \textbf{s}_{\,U}^{}}_{\textrm{desired signals}} + \underbrace{(
\textbf{H}_{R\,,B}^{} )_{}^{T\,} \textbf{W}_{\!R}^{} \;
\textbf{n}_{R}^{} + \textbf{n}_{B}^{}}_{\textrm{aggregate noise}}
+\; \textbf{i}_{B}^{},\IEEEyesnumber\label{Eqn:BsRecvSignal}
\end{IEEEeqnarray*}
where $\textbf{n}_{B}^{} \sim \mathcal{CN}\big( \textbf{0}_{N_{B}^{}
\times 1}^{}, N_{\,0}^{} \, \textbf{I}_{N_{B}^{}}^{} \big)$ is the
AWGN and $\textbf{i}_{B}^{} \triangleq ( \textbf{H}_{R\,,B}^{}
)_{}^{T\,} \textbf{W}_{\!R}^{} \, \textbf{D} \, \textbf{s}_{D}^{}$
is the backward propagated self-interference. Likewise, the received
signals of the \hbox{$k$-th} MS are given by
\begin{IEEEeqnarray*}{ll}
\textbf{y}_{k}^{} &\;= ( \textbf{H}_{R\,,\,k}^{} )_{}^{T}
\textbf{x}_{R}^{} +
\textbf{n}_{k}^{}\IEEEyesnumber\label{Eqn:MsRecvSignal}\\
&\;=\underbrace{( \textbf{H}_{R\,,\,k}^{} )_{}^{T\,}
\textbf{W}_{\!R}^{} \, \textbf{H}_{R\,,B}^{}
\textbf{W}_{\!B,\,k}^{}\, \textbf{s}_{D}^{(k)}}_{\textrm{desired
signals}} + \underbrace{( \textbf{H}_{R\,,\,k}^{} )_{}^{T\,}
\textbf{W}_{\!R\,}^{} ( \widetilde{\textbf{D}}_{k\,}^{}
\textbf{s}_{D}^{} + \widetilde{\textbf{U}}_{k\,}^{}
\textbf{s}_{\,U}^{} )}_{\textrm{multi-user interference}} +
\underbrace{( \textbf{H}_{R\,,\,k}^{} )_{}^{T\,} \textbf{W}_{\!R}^{}
\; \textbf{n}_{R}^{} + \textbf{n}_{\,k}^{}}_{\textrm{aggregate
noise}} +\; \textbf{i}_{\,k}^{},
\end{IEEEeqnarray*}
where $\textbf{n}_{\,k}^{} \sim \mathcal{CN}\big(
\textbf{0}_{N_{k}^{} \times 1}^{}, N_{\,0}^{} \,
\textbf{I}_{N_{k}^{}}^{}\big)$ is the AWGN, $\textbf{i}_{\,k}^{}
\triangleq ( \textbf{H}_{R\,,\,k}^{} )_{}^{T\,} \textbf{W}_{\!R}^{}
\; \textbf{H}_{R\,,\,k}^{} \textbf{W}_{k}^{}
\,\textbf{s}_{\,U}^{(k)}$ is the self-interference, and
$\widetilde{\textbf{U}}_{k}^{} \triangleq [\, \textbf{H}_{R\,,1}^{}
\textbf{W}_{1}^{}, \ldots, \textbf{H}_{R\,,\,k-1}^{}
\textbf{W}_{k-1}^{}, \textbf{0}_{N_{R}^{} \times L_{k}^{}}^{},
\textbf{H}_{R\,,\,k+1}^{} \textbf{W}_{k+1}^{}, \ldots,
\textbf{H}_{R\,,K}^{} \textbf{W}_{\!K}^{} \,]$ and
$\widetilde{\textbf{D}}_{k}^{} \triangleq \textbf{H}_{\,R,B}^{} \:
[\, \textbf{W}_{\!B,1}^{}, \ldots, \textbf{W}_{\!B,\,k-1}^{},
\textbf{0}_{N_{B}^{} \times L_{k}^{}}^{}, \textbf{W}_{\!B,\,k+1}^{},
\ldots, \textbf{W}_{\!B,K}^{} \,]$ are the MAC phase effective
channel matrices of the UL and DL interference streams,
respectively.

\subsection{Receive Processing} The BS and MS receive processing
consists of two steps. Inherent to the two-way relaying protocol,
the BS and the MSs can exploit the a priori knowledge of their own
transmitted signals to cancel the backward propagated
self-interference\footnote{We shall elaborate in
Section~\ref{Sec:Implementation} the assumptions on the side
information available at each node to facilitate transceiver design
and self-interference cancelation.}. After that, the BS and the MSs
process the resultant signals using linear equalizers to produce
data stream estimates. Specifically, the BS cancels the
self-interference $\textbf{i}_{B}^{}$ from the received signals
$\textbf{y}_{\!B}^{}$ and processes them using the equalizer matrix
$\textbf{V}_{\!B}^{} \triangleq [\, \textbf{V}_{\!B}^{\,(1)} \,;\,
\ldots \,;\, \textbf{V}_{\!B}^{\,(K)} \,]$, where
$\textbf{V}_{\!B}^{\,(k)} \in \mathbb{C}_{}^{\,L_{k}^{} \times
N_{B}^{}}$ is the equalizer matrix for data streams
$\textbf{s}_{\,U}^{(k)}$. The UL data stream estimates are given by
\begin{IEEEeqnarray*}{l}
\label{Eqn:BsSignalEstimates} \widetilde{\textbf{s}}_{\,U}^{} =
\textbf{V}_{\!B}^{} \, (\, \textbf{y}_{\!B}^{} \!- \textbf{i}_{B}^{}
\,) = [\, \widetilde{\textbf{s}}_{\,U}^{\,(1)} \,;\, \ldots \,;\,
\widetilde{\textbf{s}}_{\,U}^{\,(K)} \,],\IEEEyessubnumber\\
\widetilde{\textbf{s}}_{\,U}^{\,(k)} = \textbf{V}_{\!B}^{\,(k)} (
\textbf{H}_{R\,,B}^{} )_{}^{T\,} \textbf{W}_{\!R}^{} \, \textbf{U}
\, \textbf{s}_{\,U}^{} + \textbf{V}_{\!B}^{\,(k)} (
\textbf{H}_{R\,,B}^{} )_{}^{T\,} \textbf{W}_{\!R}^{} \,
\textbf{n}_{R}^{} \!+\! \textbf{V}_{\!B}^{\,(k)} \,
\textbf{n}_{B}^{}.\;\;\;\IEEEyessubnumber
\end{IEEEeqnarray*}
In the same way, the \hbox{$k$-th} MS cancels the self-interference
$\textbf{i}_{\,k}^{}$ from the received signals $\textbf{y}_{k}^{}$
and processes them using the equalizer matrix $\textbf{V}_{k}^{} \in
\mathbb{C}_{}^{\,L_{k}^{} \times N_{k}^{}}$ to produce the DL data
stream estimates
\begin{IEEEeqnarray*}{l}
\widetilde{\textbf{s}}_{D}^{\,(k)} = \textbf{V}_{k}^{} \, (\,
\textbf{y}_{k}^{} \!- \textbf{i}_{\,k}^{}
\,)\IEEEyesnumber\label{Eqn:MsSignalEstimates}\\
=\! \textbf{V}_{k}^{} ( \textbf{H}_{R\,,\,k}^{} )_{}^{T\,}
\textbf{W}_{\!R}^{} \, \textbf{H}_{R\,,B}^{}
\textbf{W}_{\!B,\,k}^{}\, \textbf{s}_{D}^{(k)} \!+\!
\textbf{V}_{k}^{} ( \textbf{H}_{R\,,\,k}^{} )_{}^{T\,}
\textbf{W}_{\!R\,}^{} ( \widetilde{\textbf{D}}_{k\,}^{}
\textbf{s}_{D}^{} \!+\! \widetilde{\textbf{U}}_{k\,}^{}
\textbf{s}_{\,U}^{} ) \!+\! \textbf{V}_{k}^{} (
\textbf{H}_{R\,,\,k}^{} )_{}^{T\,} \textbf{W}_{\!R}^{} \;
\textbf{n}_{R}^{} \!+\! \textbf{V}_{k}^{} \, \textbf{n}_{\,k}^{}.
\end{IEEEeqnarray*}

In the UL the SINR of the data stream estimate
$[\,\widetilde{\textbf{s}}_{\,U}^{\,(k)}\,]_{(l)}^{}$ is given by
\begin{IEEEeqnarray*}{l} \gamma_{\,U}^{(k,\,l)}
\triangleq \frac{\substack{\textstyle|\, [\,
\textbf{V}_{\!B}^{\,(k)} \,]_{(l,\,:)}^{} ( \textbf{H}_{R\,,B}^{}
)_{}^{T\,} \textbf{W}_{\!R}^{} \, \textbf{H}_{R\,,\,k}^{} \, [\,
\textbf{W}_{k}^{} \,]_{(:,\,l)}^{}
\,|_{}^{\,2\,}}}{\left(\substack{\textstyle||\, [\,
\textbf{V}_{\!B}^{\,(k)} \,]_{(l,\,:)}^{} ( \textbf{H}_{R\,,B}^{}
)_{}^{T\,} \textbf{W}_{\!R}^{} \, \textbf{U} \,||_{}^{\,2\,} -
\textstyle|\, [\, \textbf{V}_{\!B}^{\,(k)} \,]_{(l,\,:)}^{} (
\textbf{H}_{R\,,B}^{} )_{}^{T\,} \textbf{W}_{\!R}^{} \,
\textbf{H}_{R\,,\,k}^{} \, [\,
\textbf{W}_{k}^{} \,]_{(:,\,l)}^{} \,|_{}^{\,2\,}\\
\textstyle+ N_{\,0}^{} ( ||\, [\, \textbf{V}_{\!B}^{\,(k)}
\,]_{(l,\,:)}^{} ( \textbf{H}_{R\,,B}^{} )_{}^{T\,}
\textbf{W}_{\!R}^{} \,||_{}^{\,2\,} + ||\, [\,
\textbf{V}_{\!B}^{\,(k)} \,]_{(l,\,:)}^{} \,||_{}^{\,2\,}
)}\right)},\IEEEyesnumber\label{Eqn:UlSinr}
\end{IEEEeqnarray*}
and in the DL the SINR of the data stream estimate
$[\,\widetilde{\textbf{s}}_{D}^{\,(k)}\,]_{(l)}^{}$ is given by
\begin{IEEEeqnarray*}{l} \gamma_{D}^{(k,\,l)}
\triangleq \frac{\substack{\textstyle|\, [\, \textbf{V}_{k}^{}
\,]_{(l,\,:)}^{} ( \textbf{H}_{R\,,\,k}^{} )_{}^{T\,}
\textbf{W}_{\!R}^{} \, \textbf{H}_{R\,,B}^{} \, [\,
\textbf{W}_{\!B,\,k}^{} \,]_{(:,\,l)}^{}
\,|_{}^{\,2\,}}}{\left(\substack{\textstyle||\, [\,
\textbf{V}_{k}^{} \,]_{(l,\,:)}^{} ( \textbf{H}_{R\,,\,k}^{}
)_{}^{T\,} \textbf{W}_{\!R}^{} \, [\, \textbf{D},
\widetilde{\textbf{U}}_{k}^{} \,] \,||_{}^{\,2\,} \!- |\, [\,
\textbf{V}_{k}^{} \,]_{(l,\,:)}^{} ( \textbf{H}_{R\,,\,k}^{}
)_{}^{T\,} \textbf{W}_{\!R}^{} \, \textbf{H}_{R\,,B}^{} \, [\,
\textbf{W}_{\!B,\,k}^{} \,]_{(:,\,l)}^{} \,|_{}^{\,2\,}\\
\textstyle+ N_{\,0}^{} ( ||\, [\, \textbf{V}_{k}^{} \,]_{(l,\,:)}^{}
( \textbf{H}_{R\,,\,k}^{} )_{}^{T\,} \textbf{W}_{\!R}^{}
\,||_{}^{\,2\,} \!+ ||\, [\, \textbf{V}_{k}^{} \,]_{(l,\,:)}^{}
\,||_{}^{\,2\,})}\right)}.\;\;\;\;\IEEEyesnumber\label{Eqn:DlSinr}
\end{IEEEeqnarray*}
Furthermore, the achievable data rate for each data stream can be
expressed as
\begin{IEEEeqnarray*}{l}
\textrm{UL: } C_{\,U}^{\,(k,\,l)} = \frac{1}{2} \log_2( 1 +
\gamma_{\,U}^{(k,\,l)} ), \; \textrm{DL: } C_{D}^{\,(k,\,l)} =
\frac{1}{2} \log_2( 1 + \gamma_{D}^{(k,\,l)}
),\IEEEyesnumber\label{Eqn:Cap}
\end{IEEEeqnarray*}
where the factor of $1/2$ accounts for the half-duplex loss.

\section{Interference Management and Transceiver Design Problem
Formulation} \label{Sec:ProblemFormulation} In this section, we
first discuss the motivations behind the interference management
model to exploit signal space alignment. We then proceed to
formulate the transceiver design problem.

\subsection{Interference Management via Signal Space Alignment}
\label{Sec:InterferenceManagement} As shown in
\eqref{Eqn:BsSignalEstimates} and \eqref{Eqn:MsSignalEstimates}, the
UL and DL data stream estimates are given by
\begin{IEEEeqnarray*}{l}
\widetilde{\textbf{s}}_{\,U}^{\,(k)} \!=\! \textbf{V}_{\!B}^{\,(k)}
( \textbf{H}_{R\,,B}^{} )_{}^{T\,} \textbf{W}_{\!R}^{} \, \textbf{U}
\, \textbf{s}_{\,U}^{} + \textbf{V}_{\!B}^{\,(k)} (
\textbf{H}_{R\,,B}^{} )_{}^{T\,} \textbf{W}_{\!R}^{} \,
\textbf{n}_{R}^{} + \textbf{V}_{\!B}^{\,(k)} \, \textbf{n}_{B}^{},\\
\widetilde{\textbf{s}}_{D}^{\,(k)} \!=\!
\underbrace{\textbf{V}_{k}^{} ( \textbf{H}_{R\,,\,k}^{} )_{}^{T\,}
\textbf{W}_{\!R}^{} \, \textbf{H}_{R\,,B}^{}
\textbf{W}_{\!B,\,k}^{}\, \textbf{s}_{D}^{(k)}}_{\textrm{desired
signals}} \!+\! \underbrace{\textbf{V}_{k}^{} (
\textbf{H}_{R\,,\,k}^{} )_{}^{T\,} \textbf{W}_{\!R\,}^{} (
\widetilde{\textbf{D}}_{k\,}^{} \textbf{s}_{D}^{} \!+\!
\widetilde{\textbf{U}}_{k\,}^{} \textbf{s}_{\,U}^{}
)}_{\textrm{multi-user interference}} \!+ \textbf{V}_{k}^{} (
\textbf{H}_{R\,,\,k}^{} )_{}^{T\,} \textbf{W}_{\!R}^{} \;
\textbf{n}_{R}^{} \!+\! \textbf{V}_{k}^{} \, \textbf{n}_{\,k}^{},
\end{IEEEeqnarray*}
where the DL data stream estimates
$\widetilde{\textbf{s}}_{D}^{\,(k)}$ are prone to multi-user
interference and it is nontrivial to mitigate its effects. On the
one hand, it is detrimental to performance if we naively treat the
multi-user interference as noise since its strength could be
comparable to the desired signals. On the other hand, it is not
always spectrally efficient if we were to mitigate interference by
solely using conventional SDMA processing at the RS
\cite{Jnl:Multiuser_two_way_relaying_no_SI_cancellation:Sayed}.
Under this approach, \emph{all} the signal streams that constitute
the RS forwarded signals \eqref{Eqn:RsTxSignal} must be linearly
independent, which implies that \hbox{$\textrm{rank} ( [\,
\textbf{W}_{\!R}^{} \, \textbf{U}, \textbf{W}_{\!R}^{} \, \textbf{D}
\,] ) = 2L$} and the channel matrices must satisfy
\begin{IEEEeqnarray*}{l}
L_{\,k}^{} \leq \textrm{rank} ( \textbf{H}_{R\,,\,k}^{} ) \leq
N_{\,k}^{}, \;\forall k \in \mathcal{K}, \;\; L \leq \textrm{rank} (
\textbf{H}_{R\,,B}^{}
) \leq N_{B}^{},\IEEEyessubnumber\label{Eqn:SdmaChannelConditionsRank}\\
2L \leq \textrm{rank} ( \textbf{H}_{R\,,\,k}^{} ) \!+\!
\textrm{nullity} ( \textbf{H}_{R\,,\,k}^{} ), \; \textrm{rank} (
\textbf{H}_{R\,,B}^{} ) \!+\! \textrm{nullity} (
\textbf{H}_{R\,,B}^{} ) =
N_{R}^{},\IEEEyessubnumber\label{Eqn:SdmaChannelConditions}
\end{IEEEeqnarray*}
so the MSs (BS) can only transmit $L \leq N_{R}^{} \,/ 2$ data
streams in the UL (DL).

Taking into consideration that each node is capable of canceling the
backward propagated self-interference in the received signals, we
can allow the self-interference to overlap with the desired signals
since ultimately it does not affect the decoding of the desired
signals. As such, we can facilitate interference management by
\emph{perfectly aligning} the UL and DL signal spaces to reduce the
dimension of the multi-user interference space at each node as
exemplified in \figurename~\ref{Fig:SignalSpace}. Mathematically,
aligning the UL and DL signal spaces can be represented as
\begin{IEEEeqnarray*}{Rl}
\label{Eqn:SpaceAlign}\textrm{BS: }&\textrm{range} ( (
\textbf{H}_{R\,,B}^{} )_{}^{T\,} \textbf{W}_{\!R}^{} \,
\textbf{H}_{R\,,m}^{} \textbf{W}_{\!m\,}^{} ) = \textrm{range} ( (
\textbf{H}_{R\,,B}^{} )_{}^{T\,} \textbf{W}_{\!R}^{} \,
\textbf{H}_{R\,,B}^{} \textbf{W}_{\!B,\,m\,}^{} ), \;\;\forall m \in
\mathcal{K},\;\;\;\;\IEEEyessubnumber\label{Eqn:SpaceAlignBs}\\
\textrm{\hbox{$k$-th} MS: }&\textrm{range} ( (
\textbf{H}_{R\,,\,k}^{} )_{}^{T\,} \textbf{W}_{\!R}^{} \,
\textbf{H}_{R\,,B}^{} \textbf{W}_{\!B,\,m\,}^{} ) = \textrm{range} (
( \textbf{H}_{R\,,\,k}^{} )_{}^{T\,} \textbf{W}_{\!R}^{} \,
\textbf{H}_{R\,,m}^{} \textbf{W}_{\!m\,}^{} ), \;\;\forall m \in
\mathcal{K},\;\;\;\;\IEEEyessubnumber\label{Eqn:SpaceAlignMs}
\end{IEEEeqnarray*}
and this can be manifested by constructing the RS forwarded signals
\eqref{Eqn:RsTxSignal} such that
\begin{IEEEeqnarray*}{l}
\underbrace{\textrm{range} ( \textbf{W}_{\!R}^{} \,
\textbf{H}_{R\,,m}^{} \textbf{W}_{\!m\,}^{} )}_{\textrm{UL signals}}
= \underbrace{\textrm{range} ( \textbf{W}_{\!R}^{} \,
\textbf{H}_{R\,,B}^{} \textbf{W}_{\!B,\,m\,}^{} )}_{\textrm{DL
signals}}.\IEEEyesnumber\label{Eqn:SpaceAlignCondition}
\end{IEEEeqnarray*}
In order for all the UL and DL signal streams to be linearly
independent, the rank of the RS forwarded signals should be
\begin{IEEEeqnarray*}{l}
\label{Eqn:SpaceAlignRankAll} \textrm{UL: }\textrm{rank} ( [\,
\textbf{W}_{\!R}^{} \, \textbf{H}_{R\,,1}^{} \textbf{W}_{1\,}^{},
\ldots, \textbf{W}_{\!R}^{} \, \textbf{H}_{R\,,K}^{}
\textbf{W}_{\!K\,}^{}
\,] ) = L,\IEEEyessubnumber\label{Eqn:SpaceAlignRankSeparateUL}\\
\textrm{DL: }\textrm{rank} ( [\, \textbf{W}_{\!R}^{} \,
\textbf{H}_{R\,,B}^{} \textbf{W}_{\!B,\,1\,}^{}, \ldots,
\textbf{W}_{\!R}^{} \, \textbf{H}_{R\,,B}^{}
\textbf{W}_{\!B,\,K\,}^{} \,] ) =
L,\IEEEyessubnumber\label{Eqn:SpaceAlignRankSeparateDL}\\
\textrm{rank} ( [\, \underbrace{\textbf{W}_{\!R}^{} \,
\textbf{H}_{R\,,1}^{} \textbf{W}_{1\,}^{}, \ldots,
\textbf{W}_{\!R}^{} \, \textbf{H}_{R\,,K}^{}
\textbf{W}_{\!K\,}^{}\!}_{\textrm{UL signals}},
\underbrace{\textbf{W}_{\!R}^{} \, \textbf{H}_{R\,,B}^{}
\textbf{W}_{\!B,\,1\,}^{}, \ldots, \textbf{W}_{\!R}^{} \,
\textbf{H}_{R\,,B}^{} \textbf{W}_{\!B,\,K\,}^{}\!}_{\textrm{DL
signals}} \,] ) = L,\IEEEyessubnumber\label{Eqn:SpaceAlignRank}
\end{IEEEeqnarray*}
and from \eqref{Eqn:SpaceAlignRankAll} it suffices that the channel
matrices satisfy
\begin{IEEEeqnarray*}{l}
\label{Eqn:AlignmentChannelConditionsAll} L_{\,k}^{} \leq
\textrm{rank} ( \textbf{H}_{R\,,\,k}^{} ) \leq N_{\,k}^{}, \;\forall
k \in \mathcal{K}, \;\; L \leq \textrm{rank} ( \textbf{H}_{R\,,B}^{}
) \leq
N_{B}^{},\IEEEyessubnumber\label{Eqn:AlignmentChannelConditionsRank}\\
L \leq \textrm{rank} ( \textbf{H}_{R\,,\,k}^{} ) \!+\!
\textrm{nullity} ( \textbf{H}_{R\,,\,k}^{} ), \; \textrm{rank} (
\textbf{H}_{R\,,B}^{} ) \!+\! \textrm{nullity} (
\textbf{H}_{R\,,B}^{} ) =
N_{R}^{},\IEEEyessubnumber\label{Eqn:AlignmentChannelConditions}
\end{IEEEeqnarray*}
so the MSs (BS) can transmit $L \leq N_{R}^{}$ data streams in the
UL (DL). Comparing \eqref{Eqn:AlignmentChannelConditions} and
\eqref{Eqn:SdmaChannelConditions} shows that we can achieve superior
multiplexing gain by exploiting signal space alignment than by
performing conventional SDMA processing.

Consider again the UL and DL data stream estimates in
\eqref{Eqn:BsSignalEstimates} and \eqref{Eqn:MsSignalEstimates}.
Exploiting signal space alignment, in the DL data stream estimates
$\widetilde{\textbf{s}}_{D}^{\,(k)}$ the UL and DL multi-user
interference streams span the same signal space and \emph{appear} as
if they were one set of streams. In effect, UL and DL transmissions
perform similarly to \emph{separated} one-way relaying
transmissions.

\begin{remark}[Feasibility of Signal Space
Alignment]\label{Remark:AlignmentFeasibility} Aligning the UL and DL
signal spaces as per \eqref{Eqn:SpaceAlignCondition} requires that
the two-way signals between the BS and the \hbox{$k$-th} MS be
aligned when received by the RS (i.e., a \emph{single} node), which
then broadcasts the aligned signals back to the BS and the
\hbox{$k$-th} MS. Using the arguments of coordinated transmission
and reception \cite{Jnl:Interference_channel_DoF:Jafar}, it can be
shown that the alignment operation is feasible if the number of
antennas and data streams for each node and the rank of the channel
matrices satisfy \eqref{Eqn:AlignmentChannelConditionsAll}. Note
that this is unlike conventional IA for interference channels, which
is subject to stringent feasibility conditions
\cite{Jnl:IA:Cadambe_Jafar,
Jnl:Distributed_IA:Gomadam_Cadambe_Jafar, Jnl:Piaid:Huang}, due to
the requirement to \emph{simultaneously} align interference at
\emph{multiple} nodes.~ \hfill\IEEEQEDclosed
\end{remark}

\begin{remark}[DoF of One- and Two-Way AF
Relaying]\label{Remark:RelayChannelDoF} As per
\cite{Jnl:AF_relay_scaling:Gallager,
Jnl:Interference_channel_DoF:Jafar}, the achievable DoF of
multi-user one-way half-duplex AF relaying is $1/2 \min\{
\textrm{rank} ( \textbf{H}_{R\,,B}^{} ), \textrm{rank} ( [\,
\textbf{H}_{R\,,1}^{}, \ldots, \textbf{H}_{R\,,K}^{} \,] ) \}$,
where the factor of $1/2$ accounts for the half-duplex loss.
Exploiting signal space alignment, the achievable DoF of two-way
relaying is $\min\{ \textrm{rank} ( \textbf{H}_{R\,,B}^{} ),
\textrm{rank} ( [\, \textbf{H}_{R\,,1}^{}, \ldots,
\textbf{H}_{R\,,K}^{} \,] ) \}$.~ \hfill\IEEEQEDclosed
\end{remark}

\begin{corollary}\label{Corollary:RelayChannelDoFIid}
For independent and identically distributed (i.i.d.) Rayleigh fading
channels, with probability 1, $\textrm{rank} ( \textbf{H}_{R\,,k}^{}
) = \min\{ N_{\,k}^{}, N_{R}^{} \,\}$, $\textrm{rank} ( [\,
\textbf{H}_{R\,,1}^{}, \ldots, \textbf{H}_{R\,,K}^{} \,] ) = \min\{
N_{R}^{}, \sum_{k = 1}^{K} \! N_{\,k}^{} \}$, and $\textrm{rank} (
\textbf{H}_{R\,,B}^{} ) = \min\{ N_{B}^{}, N_{R}^{} \,\}$. Thus, the
achievable DoF of two-way relaying is given by\\
\indent\indent\indent\indent\indent$\min\{ \min\{ N_{B}^{}, N_{R}^{}
\,\}, \min\{ N_{R}^{}, \sum_{k = 1}^{K} \! N_{\,k}^{} \} \} = \min\{
N_{B}^{}, N_{R}^{}, \sum_{k = 1}^{K} \! N_{\,k}^{} \}$.~
\hfill\IEEEQEDclosed
\end{corollary}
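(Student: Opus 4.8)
The plan is to establish Corollary~\ref{Corollary:RelayChannelDoFIid} as a direct specialization of Remark~\ref{Remark:RelayChannelDoF}, so the work reduces to computing the relevant ranks for i.i.d.\ Rayleigh fading and then simplifying the resulting nested minimum. First I would invoke the standard fact that a matrix with independent continuous (in particular, i.i.d.\ complex Gaussian) entries has full rank almost surely: the set of rank-deficient matrices is the zero set of the determinants of all maximal square submatrices, a proper algebraic variety of measure zero, so with probability~1 each such matrix attains the maximum rank permitted by its dimensions. Applying this to $\textbf{H}_{R,k}^{} \in \mathbb{C}^{N_R \times N_k}$ gives $\textrm{rank}(\textbf{H}_{R,k}^{}) = \min\{N_k, N_R\}$ with probability~1; applying it to $\textbf{H}_{R,B}^{} \in \mathbb{C}^{N_R \times N_B}$ gives $\textrm{rank}(\textbf{H}_{R,B}^{}) = \min\{N_B, N_R\}$; and applying it to the horizontally concatenated matrix $[\,\textbf{H}_{R,1}^{}, \ldots, \textbf{H}_{R,K}^{}\,] \in \mathbb{C}^{N_R \times \sum_k N_k}$ — whose entries are jointly i.i.d.\ since the channels are independent by Assumption~\ref{Assumption:ChannelModel} — gives $\textrm{rank}([\,\textbf{H}_{R,1}^{}, \ldots, \textbf{H}_{R,K}^{}\,]) = \min\{N_R, \sum_{k=1}^{K} N_k\}$. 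Since there are finitely many such matrices, the intersection of these probability-1 events still has probability~1.

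Next I would substitute these rank values into the DoF expression from Remark~\ref{Remark:RelayChannelDoF}, namely $\min\{\textrm{rank}(\textbf{H}_{R,B}^{}), \textrm{rank}([\,\textbf{H}_{R,1}^{}, \ldots, \textbf{H}_{R,K}^{}\,])\}$, obtaining $\min\{\min\{N_B, N_R\}, \min\{N_R, \sum_{k=1}^{K} N_k\}\}$. The final step is the elementary simplification of this nested minimum: using associativity and idempotence of $\min$, the two copies of $N_R$ collapse to one, yielding $\min\{N_B, N_R, \sum_{k=1}^{K} N_k\}$, which is exactly the claimed formula.

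There is no real obstacle here — the statement is essentially a corollary in the literal sense, combining one probabilistic genericity fact with one algebraic identity. If anything, the only point requiring a sentence of care is verifying that the genericity argument applies to the \emph{concatenated} matrix $[\,\textbf{H}_{R,1}^{}, \ldots, \textbf{H}_{R,K}^{}\,]$ and not merely to the individual blocks: this follows because independence of the $K$ channel matrices (Assumption~\ref{Assumption:ChannelModel}) makes the full collection of entries jointly continuous, so the rank-deficiency locus of the concatenation is again a proper subvariety. One could alternatively cite a standard reference on the almost-sure full rank of Gaussian matrices in lieu of reproducing the measure-zero argument.
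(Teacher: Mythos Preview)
Your proposal is correct and follows exactly the approach the paper intends: the paper presents the corollary as an immediate consequence of Remark~\ref{Remark:RelayChannelDoF}, with the rank computations and the nested-$\min$ simplification stated inline and closed by a QED box rather than given a separate proof. Your write-up simply supplies the routine justification (almost-sure full rank of i.i.d.\ Gaussian matrices, including the concatenated block) that the paper leaves implicit.
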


\subsection{Transceiver Design
Optimization}\label{Sec:OptimizationProb} In the preceding
discussion, we have focused on interference management without
regard for QoS considerations. In practice, the users might have
different service priorities, and their data streams might have
heterogenous requirements (for example, in terms of throughput and
reliability). However, as shown in
\eqref{Eqn:UlSinr}--\eqref{Eqn:Cap}, the achievable data rate for
each data stream is intricately related to the channel qualities of
all links and the transceiver matrices of all nodes. One issue is
that typically the transmit power of the BS is substantially higher
than the MSs, and the transceiver design should accommodate for the
unequal transmit powers to ensure that both UL and DL transmissions
are of satisfactory performance. Toward this end, we seek to
optimize the transceiver design to maximize the minimum
\emph{weighted SINR} among all data streams. Specifically, we
associate with each UL and DL data stream a weight factor that
corresponds to its priority, where the higher the priority of a data
stream the larger its weight factor. In the UL let $[\,
\boldsymbol{\omega}_{U}^{(k)} \,]_{(l)}^{} \ge 1$ denote the weight
factor for data stream $[\,\textbf{s}_{\,U}^{(k)}\,]_{(l)}^{}$, and
in the DL let $[\, \boldsymbol{\omega}_{\!D}^{(k)} \,]_{(l)}^{} \ge
1$ denote the weight factor for data stream
$[\,\textbf{s}_{D}^{(k)}\,]_{(l)}^{}$. We define the weighted per
stream SINRs as
\vspace{-0.5em}
\begin{IEEEeqnarray*}{l}
\textrm{UL: }\big( [\, \boldsymbol{\omega}_{U}^{(k)} \,]_{(l)}^{}
\big)_{}^{-1} \gamma_{\,U}^{(k,\,l)}, \; \textrm{DL: }\big( [\,
\boldsymbol{\omega}_{\!D}^{(k)} \,]_{(l)}^{} \big)_{}^{-1}
\gamma_{D}^{(k,\,l)}.\IEEEyesnumber\label{Eqn:WeighingSinrDefn}
\end{IEEEeqnarray*}
By maximizing the minimum weighted per stream SINR,
we can simultaneously enhance the performance of all data streams
while accounting for their relative priorities. Altogether, we
formulate the two-way relaying transceiver design problem as
follows.

\begin{problem}[Two-Way Relaying Transceiver Design with QoS Constraints]
\label{Prob:ProbMain} Given the transmit power constraint of each
node and the priority weight factor of each data stream, we design
the two-way relaying transceiver processing -- exploiting signal
space alignment -- to maximize the minimum weighted per stream
SINR.\\\indent\indent$\big\{ \textbf{W}_{\!B}^{\,\star}, \{\:\!
\textbf{W}_{k}^{\,\star\,} \}_{k=1}^{K},\!
\textbf{W}_{\!R}^{\,\star}, \textbf{V}_{\!B}^{\,\star}, \{\:\!
\textbf{V}_{k}^{\,\star\,} \}_{k=1}^{K} \big\} := \mathcal{Q}\big\{
P_{B}^{}, \{ P_{\,k}^{} \}_{k=1}^{K}, P_{R}^{}, \{
\boldsymbol{\omega}_{U}^{(k)}\!, \boldsymbol{\omega}_{\!D}^{(k)\,}
\}_{k=1}^{K} \big\}$\nopagebreak[4]
\begin{subnumcases}{\label{Eqn:ProbMain}}
\!\underset{\substack{\textbf{W}_{\!B}^{}, \{\:\!
\textbf{W}_{\!k}^{} \}_{k=1}^{K}\\\textbf{W}_{\!R}^{},
\textbf{V}_{\!B}^{}, \{\:\! \textbf{V}_{\!k}^{} \}_{k=1}^{K}}}{\arg
\max} \!\!\!\!\!\!\!\!\!\! & $\displaystyle\min_{\substack{\forall\,
k \,\in\, \mathcal{K}\\\forall\, l \,\in\, \mathcal{L}_{k}^{}}}
\textstyle\big\{ \big( [\, \boldsymbol{\omega}_{U}^{(k)}
\,]_{(l)}^{} \big)_{}^{-1} \gamma_{\,U}^{(k,\,l)}, \big( [\,
\boldsymbol{\omega}_{\!D}^{(k)} \,]_{(l)}^{} \big)_{}^{-1}
\gamma_{D}^{(k,\,l)} \, \big\}$\label{Eqn:ProbMainObjective}\\
\;\;\;\;\;\;\textrm{s.t.} & $||\, \textbf{W}_{\!B}^{}
\,||_{}^{\,2\,} \leq P_{B}^{}, \;\;||\, \textbf{W}_{k}^{}
\,||^{\,2\,} \leq P_{\,k}^{}, \;\forall k \in
\mathcal{K},$\label{Eqn:ProbMainBsMsPowerConstraint}\\
 & $\textstyle\sum_{m = 1}^{K} \! ||\,
\textbf{W}_{\!R}^{} \, \textbf{H}_{R\,,\,m}^{} \textbf{W}_{m}^{}
\,||^{\,2\,} + ||\, \textbf{W}_{\!R}^{} \, \textbf{H}_{R\,,B}^{}
\textbf{W}_{\!B}^{} \,||^{\,2\,} + N_{\,0}^{} ||\,
\textbf{W}_{\!R}^{} \,||^{\,2\,} \leq
P_{R}^{},$\label{Eqn:ProbMainRsPowerConstraint}\\
 & $\textrm{range} ( \textbf{W}_{\!R}^{} \,
\textbf{H}_{R\,,B}^{} \textbf{W}_{\!B,\,k}^{} ) = \textrm{range} (
\textbf{W}_{\!R}^{} \, \textbf{H}_{R\,,\,k}^{} \textbf{W}_{k}^{} ),
\;\forall k \in
\mathcal{K}.\;\;\;\;\;$\label{Eqn:ProbMainSpaceAlignDesired}
\end{subnumcases}~ \hfill\IEEEQEDclosed
\end{problem}

\vspace{-0.75em} Note that Problem~$\mathcal{Q}$ is difficult to
solve since it does not lead to closed-form solutions and is
non-convex. As we show in the next section, it is also nontrivial to
reformulate Problem~$\mathcal{Q}$ in order to take advantage of its
structures and solve it more easily\footnote{The general two-way
transceiver optimization problem is extremely complex. By imposing
the signal alignment structure, we can simplify the freedoms of
optimization to obtain effective and pragmatic transceiver
designs.}.

\vspace{-0.75em}
\section{Proposed Solution} \label{Sec:ProposedSolution}
\vspace{-0.25em}
\subsection{Two-Stage Transceiver Design Paradigm}
\label{Sec:DesignParadigm} The transceiver design problem,
Problem~$\mathcal{Q}$, does not lead to closed-form solutions and is
non-convex (since the objective function and constraints are not
jointly convex in all the optimization variables); thus, we cannot
efficiently solve for all the transceiver matrices in a
\emph{single-shot} manner. We propose to solve the transceiver
design problem using a two-stage paradigm: in the first stage we
focus on attaining alignment between the UL and DL signal streams,
and in the second stage we aim at optimizing the weighted per stream
SINRs.

To facilitate decomposing the transceiver design problem into two
stages, we first extend the signal model as follows. We divide the
RS transformation matrix into two components
\begin{IEEEeqnarray*}{l}
\textbf{W}_{\!R}^{} \triangleq \textbf{F}_{\!R\,}^{}
\textbf{A}_{R}^{};\IEEEyesnumber\label{Eqn:RsPrecoderDecompose}
\end{IEEEeqnarray*}
we define $\textbf{A}_{R}^{} \triangleq [\, \textbf{A}_{R}^{\!(1)}
\,;\, \ldots \,;\, \textbf{A}_{R}^{\!(K)} \,]$ as the \emph{RS
equalizer matrix} and $\textbf{F}_{\!R}^{} \triangleq [\,
\textbf{F}_{\!R}^{\,(1)}, \ldots, \textbf{F}_{\!R}^{\,(K)} \,]$ as
the \emph{RS precoder matrix}, where the submatrices
$\textbf{A}_{R}^{\!(k)} \in \mathbb{C}_{}^{\,L_{k}^{} \times
N_{R}^{}}$ and $\textbf{F}_{\!R}^{\,(k)} \in
\mathbb{C}_{}^{\,N_{R}^{} \times L_{k}^{}}$ are associated with the
signal streams of the \hbox{$k$-th} MS. Substituting
\eqref{Eqn:RsPrecoderDecompose} into \eqref{Eqn:RsTxSignal}, the RS
forwarded signals are given by $\textbf{x}_{R}^{} =
\textbf{F}_{\!R\,}^{} \big( \textbf{A}_{R}^{} \, [\,
\textbf{H}_{R\,,1}^{} \textbf{W}_{1}^{}, \ldots,
\textbf{H}_{R\,,K}^{} \textbf{W}_{\!K}^{} \,] \, \textbf{s}_{\,U}^{}
+ \textbf{A}_{R}^{} \, [\, \textbf{H}_{R\,,B}^{}
\textbf{W}_{\!B,1}^{}, \ldots, \textbf{H}_{R\,,B}^{}
\textbf{W}_{\!B,K}^{} \,] \, \textbf{s}_{D}^{} + \textbf{n}_{R}^{}
\big)$. Analogous to interference alignment (cf.
\cite{Jnl:Piaid:Huang} and references therein), aligning the UL and
DL signal streams \eqref{Eqn:ProbMainSpaceAlignDesired} can be
encompassed by the following conditions:
\begin{IEEEeqnarray*}{l}
\label{Eqn:AlignmentConditionsPerUser}\textrm{UL:
}\textbf{A}_{R}^{\!(k)} \, \textbf{H}_{R\,,\,k}^{} \textbf{W}_{k}^{}
= \textrm{diag}( \boldsymbol{\phi}_{}^{\,(k)} \,), \;\;
\textbf{A}_{R}^{\!(k)} \, \textbf{H}_{R\,,\,m}^{} \textbf{W}_{m}^{}
= \textbf{0}_{L_{k}^{} \times L_{m}^{}}^{}, \;k \neq
m,\IEEEyessubnumber\label{Eqn:AlignmentConditionsPerUserDl}\\
\textrm{DL: }\textbf{A}_{R}^{\!(k)} \, \textbf{H}_{R\,,B}^{}
\textbf{W}_{\!B,\,k}^{} = \textrm{diag}(
\boldsymbol{\psi}_{}^{\,(k)} \,), \;\; \textbf{A}_{R}^{\!(k)} \,
\textbf{H}_{R\,,B}^{} \textbf{W}_{\!B,\,m}^{} = \textbf{0}_{L_{k}^{}
\times L_{m}^{}}^{}, \;k \neq
m.\IEEEyessubnumber\label{Eqn:AlignmentConditionsPerUserUl}
\end{IEEEeqnarray*}
\vspace{-1em}
\begin{remark}[Interpretation of
\eqref{Eqn:AlignmentConditionsPerUser}]\label{Remark:AlignmentConditionsInterpretation}
We design the precoder matrices $\textbf{W}_{\!B,\,k}^{}$ and
$\textbf{W}_{k}^{}$ such that the two-way signals between the BS and
the \hbox{$k$-th} MS are perfectly aligned at the RS and linearly
independent to other users' signals.~ \hfill\IEEEQEDclosed
\end{remark}
Accordingly, the RS forwarded signals can be expressed as
\vspace{-1em}
\begin{IEEEeqnarray*}{l}
\textbf{x}_{R}^{} = \textbf{F}_{\!R\,}^{} ( \boldsymbol{\Phi} \,
\textbf{s}_{\,U}^{} + \boldsymbol{\Psi} \, \textbf{s}_{D}^{} +
\textbf{A}_{R}^{} \, \textbf{n}_{R}^{}
\,),\IEEEyesnumber\label{Eqn:RsTxSignal2}
\end{IEEEeqnarray*}
where $\boldsymbol{\Phi} \triangleq \textrm{diag}(
\boldsymbol{\phi}_{}^{\,(1)}\!, \ldots, \boldsymbol{\phi}_{}^{\,(K)}
\,)$ and $\boldsymbol{\Psi} \triangleq \textrm{diag}(
\boldsymbol{\psi}_{}^{\,(1)}\!, \ldots, \boldsymbol{\psi}_{}^{\,(K)}
\,)$ represent the effective gains of the UL and DL data streams,
respectively. The transmit power constraint of the RS forwarded
signals is given by $\mathbb{E}( ||\, \textbf{x}_{R}^{} \,||^{\,2\,}
) = ||\, \textbf{F}_{\!R\,}^{} \boldsymbol{\Phi} ||^{\,2\,} + ||\,
\textbf{F}_{\!R\,}^{} \boldsymbol{\Psi} ||^{\,2\,} + N_{\,0}^{} ||\,
\textbf{F}_{\!R\,}^{} \textbf{A}_{R}^{} \,||^{\,2\,} \!\leq
P_{R}^{}$. We define the SINRs of the RS forwarded signals as
\begin{IEEEeqnarray*}{l}\label{Eqn:Sinr1st1}
\textrm{UL: }\widetilde{\gamma}_{\,U}^{(k,\,l)} \!\triangleq\!
\frac{\textstyle|\, [\, \boldsymbol{\phi}_{}^{\,(k)} \,]_{(l)}^{}
\,|_{}^{\,2\,}}{N_{\,0}^{} ||\, [\, \textbf{A}_{R}^{\!(k)}
\,]_{(l,\,:)}^{} \,||_{}^{\,2\,}} \!=\! \frac{\textstyle|\, [\,
\textbf{A}_{R}^{\!(k)} \,]_{(l,\,:)}^{} \, \textbf{H}_{R\,,\,k}^{}
\, [\, \textbf{W}_{k}^{} \,]_{(:,\,l)} \,|_{}^{\,2\,}}{N_{\,0}^{}
||\, [\, \textbf{A}_{R}^{\!(k)}
\,]_{(l,\,:)}^{} \,||_{}^{\,2\,}},\IEEEyessubnumber\label{Eqn:Sinr1st1Ul}\\
\textrm{DL: }\widetilde{\gamma}_{D}^{(k,\,l)} \!\triangleq\!
\frac{\textstyle|\, [\, \boldsymbol{\psi}_{}^{\,(k)}
\,]_{(l)}^{}\,|_{}^{\,2\,}}{N_{\,0}^{} ||\, [\,
\textbf{A}_{R}^{\!(k)} \,]_{(l,\,:)}^{} \,||_{}^{\,2\,}} \!=\!
\frac{\textstyle|\, [\, \textbf{A}_{R}^{\!(k)} \,]_{(l,\,:)}^{} \,
\textbf{H}_{R\,,B\,}^{}\, [\, \textbf{W}_{\!B,\,k}^{}
\,]_{(:,\,l)}^{} \,|_{}^{\,2\,}}{N_{\,0}^{} ||\, [\,
\textbf{A}_{R}^{\!(k)} \,]_{(l,\,:)}^{}
\,||_{}^{\,2\,}}.\IEEEyessubnumber\label{Eqn:Sinr1st1Dl}
\end{IEEEeqnarray*}
From \eqref{Eqn:RsTxSignal2}, \eqref{Eqn:BsRecvSignal} and
\eqref{Eqn:MsRecvSignal}, the end-to-end received signals of the BS
can be expressed as
\begin{IEEEeqnarray*}{l}
\textbf{y}_{\!B}^{} = \underbrace{( \textbf{H}_{R\,,B}^{} )_{}^{T\,}
\textbf{F}_{\!R\,}^{} \boldsymbol{\Phi} \,
\textbf{s}_{\,U}^{}}_{\textrm{desired signals}} + \underbrace{(
\textbf{H}_{R\,,B}^{} )_{}^{T\,} \textbf{F}_{\!R\,}^{}
\textbf{A}_{R}^{} \, \textbf{n}_{R}^{} +
\textbf{n}_{B}^{}}_{\textrm{aggregate noise}} +\;
\textbf{i}_{B}^{},\IEEEyesnumber\label{Eqn:BsRecvSignal2}
\end{IEEEeqnarray*}
and the end-to-end received signals of the \hbox{$k$-th} MS can be
expressed as
\begin{IEEEeqnarray*}{ll}
\textbf{y}_{k}^{} &\;=\underbrace{( \textbf{H}_{R\,,\,k}^{}
)_{}^{T\,} \textbf{F}_{\!R}^{\,(k)} \boldsymbol{\psi}_{}^{\,(k)\,}
\textbf{s}_{D}^{(k)}}_{\textrm{desired signals}} + \underbrace{(
\textbf{H}_{R\,,\,k}^{} )_{}^{T\,} \textbf{F}_{\!R\,}^{} (
\widetilde{\boldsymbol{\Psi}}_{k\,}^{} \textbf{s}_{D}^{} \!+\!
\widetilde{\boldsymbol{\Phi}}_{k\,}^{} \textbf{s}_{\,U}^{}
)}_{\textrm{multi-user interference}} + \underbrace{(
\textbf{H}_{R\,,\,k}^{} )_{}^{T\,} \textbf{F}_{\!R\,}^{}
\textbf{A}_{R}^{} \, \textbf{n}_{R}^{} +\!
\textbf{n}_{\,k}^{}}_{\textrm{aggregate noise}} +\;
\textbf{i}_{\,k}^{},\;\;\;\;\IEEEyesnumber\label{Eqn:MsRecvSignal2}
\end{IEEEeqnarray*}
where $\widetilde{\boldsymbol{\Phi}}_{k}^{} \triangleq
\textrm{diag}( \boldsymbol{\phi}_{}^{\,(1)}\!, \ldots,
\boldsymbol{\phi}_{}^{\,(k-1)}\!, \textbf{0}_{L_{k}^{} \times 1}^{},
\boldsymbol{\phi}_{}^{\,(k+1)}\!, \ldots,
\boldsymbol{\phi}_{}^{\,(K)} \,)$ represent the effective gains of
the UL interference streams, and
$\widetilde{\boldsymbol{\Psi}}_{k}^{} \triangleq \textrm{diag}(
\boldsymbol{\psi}_{}^{\,(1)}\!, \ldots,
\boldsymbol{\psi}_{}^{\,(k-1)}\!, \textbf{0}_{L_{k}^{} \times 1}^{},
\boldsymbol{\psi}_{}^{\,(k+1)}\!, \ldots,
\boldsymbol{\psi}_{}^{\,(K)} \,)$ represent the effective gains of
the DL interference streams. Therefore, the UL and DL data stream
estimates \eqref{Eqn:BsSignalEstimates},
\eqref{Eqn:MsSignalEstimates} can be expressed as
\begin{IEEEeqnarray*}{l}
\!\!\!\!\widetilde{\textbf{s}}_{\,U}^{\,(k)} \!=\!
\textbf{V}_{\!B}^{\,(k)} ( \textbf{H}_{R\,,B}^{} )_{}^{T\,}
\textbf{F}_{\!R\,}^{} \boldsymbol{\Phi} \, \textbf{s}_{\,U}^{} +
\textbf{V}_{\!B}^{\,(k)} ( \textbf{H}_{R\,,B}^{} )_{}^{T\,}
\textbf{F}_{\!R\,}^{} \textbf{A}_{R}^{} \, \textbf{n}_{R}^{} +
\textbf{V}_{\!B}^{\,(k)} \,
\textbf{n}_{B}^{},\IEEEyesnumber\label{Eqn:SignalEstimates2Ul}\\
\!\!\!\!\widetilde{\textbf{s}}_{D}^{\,(k)} \!=\! \textbf{V}_{k}^{} (
\textbf{H}_{R\,,\,k}^{} )_{}^{T\,} \textbf{F}_{\!R}^{\,(k)}
\boldsymbol{\psi}_{}^{\,(k)\,} \textbf{s}_{D}^{(k)} \!+\!
\textbf{V}_{k}^{} ( \textbf{H}_{R\,,\,k}^{} )_{}^{T\,}
\textbf{F}_{\!R\,}^{} ( \widetilde{\boldsymbol{\Psi}}_{k\,}^{}
\textbf{s}_{D}^{} \!+\! \widetilde{\boldsymbol{\Phi}}_{k\,}^{}
\textbf{s}_{\,U}^{} ) \!+\! \textbf{V}_{k}^{} (
\textbf{H}_{R\,,\,k}^{} )_{}^{T\,} \textbf{F}_{\!R\,}^{}
\textbf{A}_{R}^{} \, \textbf{n}_{R}^{} +\! \textbf{V}_{k}^{} \,
\textbf{n}_{\,k}^{},\;\;\;\;\;\;\;\IEEEyesnumber\label{Eqn:SignalEstimates2Dl}
\end{IEEEeqnarray*}
and the end-to-end SINRs of the data stream estimates
\eqref{Eqn:UlSinr}, \eqref{Eqn:DlSinr} are \emph{equivalently} given
by
\begin{IEEEeqnarray*}{l}
\gamma_{\,U}^{(k,\,l)} =
\frac{\substack{\textstyle|\, [\, \textbf{V}_{\!B}^{\,(k)}
\,]_{(l,\,:)}^{} ( \textbf{H}_{R\,,B}^{} )_{}^{T} [\,
\textbf{F}_{\!R}^{\,(k)} \,]_{(:,\,l)}^{} [\,
\boldsymbol{\phi}_{}^{\,(k)} \,]_{(l)}^{}
\,|_{}^{\,2\,}}}{\left(\substack{\textstyle||\, [\,
\textbf{V}_{\!B}^{\,(k)} \,]_{(l,\,:)}^{} ( \textbf{H}_{R\,,B}^{}
)_{}^{T\,} \textbf{F}_{\!R}^{} \, \boldsymbol{\Phi} \,||_{}^{\,2\,}
\!- |\, [\, \textbf{V}_{\!B}^{\,(k)} \,]_{(l,\,:)}^{} (
\textbf{H}_{R\,,B}^{} )_{}^{T} [\, \textbf{F}_{\!R}^{\,(k)}
\,]_{(:,\,l)}^{} [\,
\boldsymbol{\phi}_{}^{\,(k)} \,]_{(l)}^{} \,|_{}^{\,2\,}\\
\textstyle+ N_{\,0}^{} ( ||\, [\, \textbf{V}_{\!B}^{\,(k)}
\,]_{(l,\,:)}^{} ( \textbf{H}_{R\,,B}^{} )_{}^{T\,}
\textbf{F}_{\!R\,}^{} \textbf{A}_{R}^{} \,||_{}^{\,2\,} \!+ ||\, [\,
\textbf{V}_{\!B}^{\,(k)} \,]_{(l,\,:)}^{} \,||_{}^{\,2\,}
)}\right)},\IEEEyesnumber\label{Eqn:Sinr2Ul}\\
\gamma_{D}^{(k,\,l)} = \frac{\substack{\textstyle|\, [\,
\textbf{V}_{k}^{} \,]_{(l,\,:)}^{} ( \textbf{H}_{R\,,\,k}^{}
)_{}^{T} [\, \textbf{F}_{\!R}^{\,(k)} \,]_{(:,\,l)}^{} [\,
\boldsymbol{\psi}_{}^{\,(k)}
\,]_{(l)}^{}\,|_{}^{\,2\,}}}{\left(\substack{\textstyle||\, [\,
\textbf{V}_{k}^{} \,]_{(l,\,:)}^{} ( \textbf{H}_{R\,,\,k}^{}
)_{}^{T\,} \textbf{F}_{\!R}^{} \, [\, \boldsymbol{\Psi},
\widetilde{\boldsymbol{\Phi}}_{k}^{} \,] \,||_{}^{\,2\,} \!- |\, [\,
\textbf{V}_{k}^{} \,]_{(l,\,:)}^{} ( \textbf{H}_{R\,,\,k}^{}
)_{}^{T} [\, \textbf{F}_{\!R}^{\,(k)} \,]_{(:,\,l)}^{} [\,
\boldsymbol{\psi}_{}^{\,(k)} \,]_{(l)}^{}\,|_{}^{\,2\,}\\
\textstyle+ N_{\,0}^{} ( ||\, [\, \textbf{V}_{k}^{} \,]_{(l,\,:)}^{}
( \textbf{H}_{R\,,\,k}^{} )_{}^{T\,} \textbf{F}_{\!R\,}^{}
\textbf{A}_{R}^{} \,||_{}^{\,2\,} \!+ ||\, [\, \textbf{V}_{k}^{}
\,]_{(l,\,:)}^{}
\,||_{}^{\,2\,})}\right)}.\;\;\;\;\IEEEyesnumber\label{Eqn:Sinr2Dl}
\end{IEEEeqnarray*}

\begin{lemma}[Decomposition of Transceiver
Design]\label{Lemma:Decomposition} The transceiver design problem,
Problem~$\mathcal{Q}$, can be equivalently decomposed into two
stages. The first stage processing finds the BS and MS precoder
matrices and the RS equalizer matrix $\big\{ \textbf{W}_{\!B}^{}, \{
\textbf{W}_{k}^{} \}_{k=1}^{K}, \textbf{A}_{R}^{} \big\}$, subject
to the alignment conditions \eqref{Eqn:AlignmentConditionsPerUser},
to maximize the minimum weighted SINR of the RS forwarded
signals.\\\underline{First Stage Processing}\\\indent\;\;\;$\big\{
\textbf{W}_{\!B}^{\,\star}, \{\:\! \textbf{W}_{k}^{\,\star\,}
\}_{k=1}^{K}, \textbf{A}_{R}^{\!\star} \big\} := \mathcal{M}\big\{
P_{B}^{}, \{ P_{\,k}^{} \}_{k=1}^{K}, \{
\boldsymbol{\omega}_{U}^{(k)}\!, \boldsymbol{\omega}_{\!D}^{(k)\,}
\}_{k=1}^{K} \big\}$\nopagebreak[4]
\begin{subnumcases}{\label{Eqn:ProbFstStage}}
\!\underset{\substack{\textbf{W}_{\!B}^{}, \{\:\!
\textbf{W}_{\!k}^{} \}_{k=1\!}^{K},\, \textbf{A}_{R}^{}}}{\arg \max}
\!\!\!\!\!\!\!\! & $\displaystyle\min_{\substack{\forall\, k \,\in\,
\mathcal{K}\\\forall\, l \,\in\, \mathcal{L}_{k}^{}}}
\textstyle\big\{ \big( [\, \boldsymbol{\omega}_{U}^{(k)}
\,]_{(l)}^{} \big)_{}^{-1} \widetilde{\gamma}_{\,U}^{(k,\,l)}, \big(
[\, \boldsymbol{\omega}_{\!D}^{(k)} \,]_{(l)}^{} \big)_{}^{-1}
\widetilde{\gamma}_{D}^{(k,\,l)} \, \big\}$\label{Eqn:ProbFstStageObjective}\\
\;\;\;\;\;\;\textrm{s.t.} & $||\, \textbf{W}_{\!B}^{}
\,||_{}^{\,2\,} \leq P_{B}^{}, \;\;||\, \textbf{W}_{k}^{}
\,||^{\,2\,} \leq P_{\,k}^{}, \;\forall k \in
\mathcal{K},$\label{Eqn:ProbFstStageBsMsPowerConstraint}\\
& $\textbf{A}_{R}^{\!(k)} \, \textbf{H}_{R\,,\,k}^{}
\textbf{W}_{k}^{} = \textrm{diag}( \boldsymbol{\phi}_{}^{\,(k)} \,),
\;\; \textbf{A}_{R}^{\!(k)} \, \textbf{H}_{R\,,\,m}^{}
\textbf{W}_{m}^{} = \textbf{0}_{L_{k}^{} \times L_{m}^{}}^{}, \;k
\neq m,$\label{Eqn:ProbFstStageAlignUl}\\
& $\textbf{A}_{R}^{\!(k)} \, \textbf{H}_{R\,,B}^{}
\textbf{W}_{\!B,\,k}^{} = \textrm{diag}(
\boldsymbol{\psi}_{}^{\,(k)} \,), \;\; \textbf{A}_{R}^{\!(k)} \,
\textbf{H}_{R\,,B}^{} \textbf{W}_{\!B,\,m}^{} = \textbf{0}_{L_{k}^{}
\times L_{m}^{}}^{}, \;k \neq m.$\label{Eqn:ProbFstStageAlignDl}
\end{subnumcases}
The second stage processing finds the RS precoder matrix and the BS
and MS equalizer matrices $\big\{ \textbf{F}_{\!R}^{},\!
\textbf{V}_{\!B}^{}, \{ \textbf{V}_{k}^{} \}_{k=1}^{K} \big\}$ to
maximize the minimum weighted end-to-end SINR of the data stream
estimates.
\\\underline{Second Stage
Processing}\\\indent\indent\indent\indent\indent\indent\;$\big\{
\textbf{F}_{\!R}^{\,\star}, \textbf{V}_{\!B}^{\,\star}, \{
\textbf{V}_{k}^{\,\star\,} \}_{k=1}^{K} \big\} \!:=
\mathcal{B}\big\{ \textbf{W}_{\!B}^{\,\star}, \{\:\!
\textbf{W}_{k}^{\,\star\,} \}_{k=1}^{K}, \textbf{A}_{R}^{\!\star},
P_{R}^{}, \{ \boldsymbol{\omega}_{U}^{(k)}\!,
\boldsymbol{\omega}_{\!D}^{(k)\,} \}_{k=1}^{K}
\big\}$\nopagebreak[4]
\begin{subnumcases}{\label{Eqn:ProbSndStage}}
\!\underset{\substack{\textbf{F}_{\!R}^{}, \textbf{V}_{\!B}^{}, \{
\textbf{V}_{k}^{} \}_{k=1}^{K}}}{\arg \max} \!\!\!\!\!\!\!\! &
$\displaystyle\min_{\substack{\forall\, k \,\in\,
\mathcal{K}\\\forall\, l \,\in\, \mathcal{L}_{k}^{}}}
\textstyle\big\{ \big( [\, \boldsymbol{\omega}_{U}^{(k)}
\,]_{(l)}^{} \big)_{}^{-1} \gamma_{\,U}^{(k,\,l)}, \big( [\,
\boldsymbol{\omega}_{\!D}^{(k)} \,]_{(l)}^{} \big)_{}^{-1}
\gamma_{D}^{(k,\,l)} \, \big\}$\label{Eqn:ProbSndStageObjective}\\
\;\;\;\;\;\;\textrm{s.t.} & $\textstyle||\, \textbf{F}_{\!R\,}^{}
\boldsymbol{\Phi} ||^{\,2\,} + ||\, \textbf{F}_{\!R\,}^{}
\boldsymbol{\Psi} ||^{\,2\,} + N_{\,0}^{} ||\, \textbf{F}_{\!R\,}^{}
\textbf{A}_{R}^{} \,||^{\,2\,} \!\leq
P_{R\,}^{}.$\label{Eqn:ProbSndStageRsPowerConstraint}
\end{subnumcases}
\end{lemma}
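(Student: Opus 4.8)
The plan is to show that the feasible set and objective of Problem~$\mathcal{Q}$ split cleanly along the factorization $\textbf{W}_{\!R}^{} = \textbf{F}_{\!R\,}^{} \textbf{A}_{R}^{}$ of~\eqref{Eqn:RsPrecoderDecompose}, once it is established that, without loss of optimality, $\textbf{A}_{R}^{}$ may be taken of size $L \times N_{R}^{}$ and to satisfy the per-user alignment conditions~\eqref{Eqn:AlignmentConditionsPerUser}. I would first prove this reduction in both directions. In the easy direction, if~\eqref{Eqn:AlignmentConditionsPerUser} holds then $\textbf{W}_{\!R}^{} \textbf{H}_{R\,,m}^{} \textbf{W}_{m}^{} = \textbf{F}_{\!R}^{\,(m)} \textrm{diag}( \boldsymbol{\phi}_{}^{\,(m)} )$ and $\textbf{W}_{\!R}^{} \textbf{H}_{R\,,B}^{} \textbf{W}_{\!B,\,m}^{} = \textbf{F}_{\!R}^{\,(m)} \textrm{diag}( \boldsymbol{\psi}_{}^{\,(m)} )$, so both ranges in~\eqref{Eqn:ProbMainSpaceAlignDesired} equal $\textrm{range}( \textbf{F}_{\!R}^{\,(m)} )$ whenever the diagonal gains have nonzero entries (which holds at any optimum, as otherwise some stream carries no signal); such a point is therefore $\mathcal{Q}$-feasible with identical SINRs and power usage. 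For the converse, starting from any $\mathcal{Q}$-feasible point I would discard any component of $\textbf{W}_{\!R}^{}$ that forwards only noise --- without loss at an optimum, since it only wastes relay power --- reducing the useful rank of $\textbf{W}_{\!R}^{}$ to $L$, and then construct $\textbf{A}_{R}^{}$ from bases adapted to the aligned subspaces $\mathcal{S}_{\,k}^{} \triangleq \textrm{range}( \textbf{W}_{\!R}^{} \textbf{H}_{R\,,\,k}^{} \textbf{W}_{k}^{} ) = \textrm{range}( \textbf{W}_{\!R}^{} \textbf{H}_{R\,,B}^{} \textbf{W}_{\!B,\,k}^{} )$, choosing the rows of each $\textbf{A}_{R}^{\!(k)}$ so that $\textbf{A}_{R}^{\!(k)} \textbf{H}_{R\,,\,k}^{} \textbf{W}_{k}^{}$ and $\textbf{A}_{R}^{\!(k)} \textbf{H}_{R\,,B}^{} \textbf{W}_{\!B,\,k}^{}$ are diagonal and the cross terms vanish (possible because the $\mathcal{S}_{\,k}^{}$ are in direct sum, as decodability requires, so the dimension counts~\eqref{Eqn:AlignmentChannelConditionsAll} are met), and finally set $\textbf{F}_{\!R}^{}$ so that $\textbf{W}_{\!R}^{} = \textbf{F}_{\!R\,}^{} \textbf{A}_{R}^{}$. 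This reparametrization only changes the internal basis of each aligned subspace; it leaves $\textbf{W}_{\!R}^{}$, the received signals, the SINRs~\eqref{Eqn:UlSinr}--\eqref{Eqn:DlSinr} and all power constraints unchanged, hence preserves feasibility and objective value.

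Next I would check that under $\textbf{W}_{\!R}^{} = \textbf{F}_{\!R\,}^{} \textbf{A}_{R}^{}$ and~\eqref{Eqn:AlignmentConditionsPerUser} the constraints decouple across the two variable groups $\{\, \textbf{W}_{\!B}^{}, \{ \textbf{W}_{k}^{} \}_{k=1}^{K}, \textbf{A}_{R}^{} \,\}$ and $\{\, \textbf{F}_{\!R}^{}, \textbf{V}_{\!B}^{}, \{ \textbf{V}_{k}^{} \}_{k=1}^{K} \,\}$. The BS and MS power constraints~\eqref{Eqn:ProbMainBsMsPowerConstraint} and the alignment constraints~\eqref{Eqn:AlignmentConditionsPerUser} involve only the first group. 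Using $\textbf{A}_{R}^{} \textbf{U} = \boldsymbol{\Phi}$, $\textbf{A}_{R}^{} \textbf{D} = \boldsymbol{\Psi}$ and the block structure $\textbf{F}_{\!R}^{} = [\, \textbf{F}_{\!R}^{\,(1)}, \ldots, \textbf{F}_{\!R}^{\,(K)} \,]$, the RS power constraint~\eqref{Eqn:ProbMainRsPowerConstraint} collapses to $||\, \textbf{F}_{\!R\,}^{} \boldsymbol{\Phi} ||^{\,2\,} + ||\, \textbf{F}_{\!R\,}^{} \boldsymbol{\Psi} ||^{\,2\,} + N_{\,0}^{} ||\, \textbf{F}_{\!R\,}^{} \textbf{A}_{R}^{} \,||^{\,2\,} \leq P_{R}^{}$, which is~\eqref{Eqn:ProbSndStageRsPowerConstraint} and, once the first group is fixed, constrains $\textbf{F}_{\!R}^{}$ alone; the equalizers $\textbf{V}_{\!B}^{}$ and $\textbf{V}_{k}^{}$ are unconstrained. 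Combined with the re-expression of the end-to-end SINRs in~\eqref{Eqn:Sinr2Ul}--\eqref{Eqn:Sinr2Dl} (already derived above), this shows that Problem~$\mathcal{Q}$ is exactly the nested maximization: maximize over the first group, subject to~\eqref{Eqn:ProbMainBsMsPowerConstraint} and~\eqref{Eqn:AlignmentConditionsPerUser}, of the quantity obtained by maximizing over the second group, subject to~\eqref{Eqn:ProbSndStageRsPowerConstraint}, the minimum weighted end-to-end SINR --- the inner maximization being precisely the second-stage problem~$\mathcal{B}$.

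The last step, which I expect to be the main obstacle, is to justify that the first stage may be driven by the surrogate objective of~$\mathcal{M}$, i.e.\ that a first-group choice maximizing the minimum weighted \emph{RS-forwarded-signal} SINR $\widetilde{\gamma}$ also permits~$\mathcal{B}$ to attain the overall optimum of~$\mathcal{Q}$. The tool I would use is the cascade bound: each stream traverses a two-hop amplify-and-forward path whose first hop is, by alignment, interference-free with signal-to-noise ratio $\widetilde{\gamma}_{\,U}^{(k,\,l)}$ (resp.\ $\widetilde{\gamma}_{D}^{(k,\,l)}$), so appending the second hop and the reinstated multi-user interference can only lower it, giving $\gamma_{\,U}^{(k,\,l)} \leq \widetilde{\gamma}_{\,U}^{(k,\,l)}$ and $\gamma_{D}^{(k,\,l)} \leq \widetilde{\gamma}_{D}^{(k,\,l)}$ for every feasible transceiver. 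Hence the inner objective at any first-group point is at most that point's minimum weighted $\widetilde{\gamma}$, so the value of~$\mathcal{Q}$ does not exceed that of~$\mathcal{M}$; the matching lower bound --- that solving~$\mathcal{M}$ and then~$\mathcal{B}$ actually achieves it --- I would obtain from a monotonicity argument: the optimal value of~$\mathcal{B}$ is nondecreasing in each of the quantities $\widetilde{\gamma}_{\,U}^{(k,\,l)}$ and $\widetilde{\gamma}_{D}^{(k,\,l)}$, and since both stages use the egalitarian max--min criterion and $\textbf{F}_{\!R}^{}$ retains full freedom to reallocate per-stream power subject only to~\eqref{Eqn:ProbSndStageRsPowerConstraint}, no first-group point that trades the weakest stream's $\widetilde{\gamma}$ against the others can improve the end-to-end bottleneck. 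The delicate point is exactly this exchange argument, because~\eqref{Eqn:ProbSndStageRsPowerConstraint} couples all streams through $\boldsymbol{\Phi}, \boldsymbol{\Psi}, \textbf{A}_{R}^{}$; once it is settled, composing the three steps yields the claimed equivalence together with the maps~$\mathcal{M}$ and~$\mathcal{B}$.
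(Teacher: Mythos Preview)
Your central device---the cascade bound $\gamma_{\,U}^{(k,\,l)} \leq \widetilde{\gamma}_{\,U}^{(k,\,l)}$ and $\gamma_{D}^{(k,\,l)} \leq \widetilde{\gamma}_{D}^{(k,\,l)}$---is exactly the one the paper uses. Beyond that shared step, however, you do considerably more than the paper's proof. The paper does not argue either direction of the equivalence between the range constraints~\eqref{Eqn:ProbMainSpaceAlignDesired} and the per-user diagonal conditions~\eqref{Eqn:AlignmentConditionsPerUser}; it simply introduces the latter as conditions that ``encompass'' the former. Nor does it verify the constraint decoupling you spell out. Its derivation of $\xi_{\,U}^{(k,\,l)},\,\xi_{D}^{(k,\,l)} \leq 1$ is in fact heuristic: it replaces an equality by an approximation (the step marked $(a)$, justified by ``the terms \ldots\ should be negligible to suppress interference''), so the cascade bound itself is only argued informally. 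Finally, the paper establishes \emph{only} the upper bound $\min\{\cdot\,\gamma\} \leq \min\{\cdot\,\widetilde{\gamma}\}$ and then asserts that ``by this property'' the problem decomposes, with the first stage ``implicitly'' maximizing the achievable end-to-end minimum; it never attempts the matching lower bound. The monotonicity/exchange argument you flag as delicate is precisely the piece the paper omits---so your instinct that it is the crux is right, but be aware that the paper treats the ``equivalence'' as a well-motivated design heuristic rather than a theorem with a complete two-sided proof.
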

\begin{proof}
Refer to Appendix~A.
\end{proof}

The top-level steps of the proposed two-stage transceiver design are
summarized in Algorithm~\ref{Algorithm:TopLevel} and illustrated in
\figurename~\ref{Fig:Problems}. We shall elaborate the details of
the first and second stage processing in the following subsections.

\subsection{First Stage Processing} \label{Sec:FirstStage} To solve
the first stage processing, Problem~$\mathcal{M}$, we focus our
attention on \emph{coordinative eigenmode transmission} at the MSs,
\emph{zero-forcing equalization} at the RS, and \emph{zero-forcing
transmission} at the BS \cite{Jnl:Coordinated_beamforming:Chae}.
\begin{list}{\labelitemi}{\leftmargin=0.5em}
\item \emph{MS Precoder Matrices:} Let $\mathcal{G}_{\,k}^{}$ denote
the set of the right singular vectors of the channel matrix
$\textbf{H}_{R\,,\,k}^{}$. The \hbox{$k$-th} MS precoder matrix
$\textbf{W}_{k}^{}$ is given as $[\, \textbf{W}_{k}^{} \,]_{(:,\,l)}
= \sqrt{\lambda_{\,k}^{(l)}} \textbf{g}_{\,k}^{\,(l)}$, where
$\textbf{g}_{\,k}^{\,(l)} \in \mathcal{G}_{\,k}^{}$ is the beam
direction, and $\lambda_{\,k}^{(l)}$ is the allocated power
satisfying the transmit power constraint $||\, \textbf{W}_{k}^{}
\,||^{\,2\,} = \sum_{\,l = 1}^{\,L_{k}^{}} \lambda_{\,k}^{(l)} =
P_{\,k}^{}$.
\item \emph{RS Equalizer Matrix:} The zero-forcing equalizer
matrix is given as
\begin{IEEEeqnarray*}{l}
\textbf{A}_{R}^{} = \textrm{pinv}( [\, \textbf{H}_{R\,,1}^{} [\,
\textbf{g}_{\,1}^{\,(1)}, \ldots, \textbf{g}_{\,1}^{\,(L_{1}^{})}
\,], \ldots, \textbf{H}_{R\,,K\,}^{} [\, \textbf{g}_{K}^{\,(1)},
\ldots, \textbf{g}_{K}^{\,(L_{K}^{})} \,] \,]
).\IEEEyesnumber\label{Eqn:RsEqualizer}
\end{IEEEeqnarray*}
\item \emph{BS Precoder Matrix:} Let
$\widetilde{\textbf{A}}_{R}^{(k,\,l)}$ denote the matrix obtained by
\emph{removing} $[\, \textbf{A}_{R}^{\!(k)} \,]_{(l,\,:)}^{}$ from
$\textbf{A}_{R}^{}$ and let $\mathcal{G}_{B}^{\,(k,\,l)} =
\textrm{null}( \widetilde{\textbf{A}}_{R}^{(k,\,l)\,}
\textbf{H}_{R\,,B\,}^{} )$. The BS precoder matrix
$\textbf{W}_{\!B}^{}$ is given as \hbox{$[\, \textbf{W}_{\!B,\,k}^{}
\,]_{(:,\,l)}^{} = \sqrt{\lambda_{B}^{(k,\,l)}}
\textbf{g}_{B}^{\,(k,\,l)}$}, where $\textbf{g}_{B}^{\,(k,\,l)} \in
\mathcal{G}_{B}^{\,(k,\,l)}$ is the beam direction, and
$\lambda_{B}^{(k,\,l)}$ is the allocated power satisfying the
transmit power constraint $||\, \textbf{W}_{\!B}^{} \,||_{}^{\,2\,}
= \sum_{k = 1}^{K} \sum_{\,l = 1}^{\,L_{k}^{}} \lambda_{B}^{(k,\,l)}
= P_{B}^{}$.
\end{list}
As such, the SINRs of the RS forwarded signals \eqref{Eqn:Sinr1st1}
can be \emph{equivalently} expressed as
\begin{IEEEeqnarray*}{c}
\widetilde{\gamma}_{\,U}^{(k,\,l)} = \kappa_{\,U}^{(k,\,l)}
\lambda_{\,k}^{(l)}, \;\; \widetilde{\gamma}_{D}^{(k,\,l)} =
\kappa_{D}^{(k,\,l)}
\lambda_{B}^{(k,\,l)},\IEEEyesnumber\label{Eqn:Sinr1st2}
\end{IEEEeqnarray*}
where $\kappa_{\,U}^{(k,\,l)} \triangleq \frac{\textstyle|\, [\,
\textbf{A}_{R}^{\!(k)} \,]_{(l,\,:)}^{} \, \textbf{H}_{R\,,\,k}^{}
\, \textbf{g}_{\,k}^{\,(l)} \,|_{}^{\,2}}{\textstyle N_{\,0}^{} ||\,
[\, \textbf{A}_{R}^{\!(k)} \,]_{(l,\,:)}^{} \,||_{}^{\,2}}$ and
$\kappa_{D}^{(k,\,l)} \triangleq \frac{\textstyle|\, [\,
\textbf{A}_{R}^{\!(k)} \,]_{(l,\,:)}^{} \, \textbf{H}_{R\,,B\,}^{}\,
\textbf{g}_{B}^{\,(k,\,l)} \,|_{}^{\,2}}{\textstyle N_{\,0}^{} ||\,
[\, \textbf{A}_{R}^{\!(k)} \,]_{(l,\,:)}^{} \,||_{}^{\,2}}$.
\begin{lemma}[Beam Directions and Power Allocation at the BS and MSs]
\label{Lemma:BeamSelPowerAllocation} To maximize the minimum
weighted SINR of the RS forwarded signals, the power allocation at
the BS and MSs are, respectively, given by
\begin{IEEEeqnarray*}{l}
\big( \lambda_{B}^{(k,\,l)} \big)_{}^{\!\star} \!\!=\!\!
\frac{\textstyle[\, \boldsymbol{\omega}_{\!D}^{(k)} \,]_{(l)}^{}
\big( \kappa_{D}^{(k,\,l)} \big)_{}^{-1} P_{B}^{}}{\textstyle\sum_{m
= 1}^{K} \! \sum_{\,q = 1}^{\,L_{m}^{}} \, [\,
\boldsymbol{\omega}_{\!D}^{(m)} \,]_{(q)}^{} \big(
\kappa_{D}^{(m,\,q)} \big)_{}^{-1}}, \;\; \big( \lambda_{\,k}^{(l)}
\big)_{}^{\!\star} \!\!=\!\! \frac{\textstyle[\,
\boldsymbol{\omega}_{U}^{(k)} \,]_{(l)}^{} \big(
\kappa_{\,U}^{(k,\,l)} \big)_{}^{-1} P_{\,k}^{}}{\textstyle\sum_{\,q
= 1}^{\,L_{k}^{}} \, [\, \boldsymbol{\omega}_{U}^{(k)} \,]_{(q)}^{}
\big( \kappa_{\,U}^{(k,\,q)}
\big)_{}^{-1}}.\IEEEyesnumber\label{Eqn:PowerAllocation}
\end{IEEEeqnarray*}
It follows that the weighted SINRs of the RS forwarded signals can
be expressed as
\begin{IEEEeqnarray*}{l}\label{Eqn:Eqn:Sinr1stPowerAllocation}
\textrm{UL: }\big( [\, \boldsymbol{\omega}_{U}^{(k)} \,]_{(l)}^{}
\big)_{}^{-1} \widetilde{\gamma}_{\,U}^{(k,\,l)} \!=\!
\frac{P_{\,k}^{}}{\sum_{\,q = 1}^{\,L_{k}^{}} \, [\,
\boldsymbol{\omega}_{U}^{(k)} \,]_{(q)}^{} \big(
\kappa_{\,U}^{(k,\,q)}
\big)_{}^{-1}},\IEEEyessubnumber\label{Eqn:Eqn:Sinr1stPowerAllocationUl}\\
\textrm{DL: }\big( [\, \boldsymbol{\omega}_{\!D}^{(k)} \,]_{(l)}^{}
\big)_{}^{-1} \widetilde{\gamma}_{D}^{(k,\,l)} \!=\!
\frac{P_{B}^{}}{\sum_{m = 1}^{K} \sum_{\,q = 1}^{\,L_{m}^{}} \, [\,
\boldsymbol{\omega}_{\!D}^{(m)} \,]_{(q)}^{} \big(
\kappa_{D}^{(m,\,q)}
\big)_{}^{-1}},\IEEEyessubnumber\label{Eqn:Eqn:Sinr1stPowerAllocationDl}
\end{IEEEeqnarray*}
and selection of the beam directions to maximize the minimum
weighted SINR can be performed using combinatorial search.
\end{lemma}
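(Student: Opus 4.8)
The plan is to exploit the fact that, once the eigenmode MS precoders, zero-forcing RS equalizer, and zero-forcing BS precoder are fixed, the first-stage objective is governed by \eqref{Eqn:Sinr1st2}, in which the gains $\kappa_U^{(k,\,l)}$ and $\kappa_D^{(k,\,l)}$ are determined by the beam directions alone and do not depend on the power variables $\{\lambda_k^{(l)}\}$ and $\{\lambda_B^{(k,\,l)}\}$ (these gains are positive for any non-degenerate beam choice). I would first note in passing that this structure automatically meets the alignment constraints \eqref{Eqn:ProbFstStageAlignUl}--\eqref{Eqn:ProbFstStageAlignDl}: under the feasibility conditions \eqref{Eqn:AlignmentChannelConditionsAll} the columns $\textbf{H}_{R\,,m}^{}\textbf{g}_m^{\,(q)}$ are linearly independent, so the pseudo-inverse in \eqref{Eqn:RsEqualizer} gives $\textbf{A}_R^{\!(k)}\textbf{H}_{R\,,m}^{}\textbf{W}_m^{}=\textbf{0}$ for $m\neq k$ and a diagonal block for $m=k$; and $\textbf{g}_B^{\,(k,\,l)}\in\mathcal{G}_B^{\,(k,\,l)}$ annihilates every row of $\textbf{A}_R^{}$ except $[\,\textbf{A}_R^{\!(k)}\,]_{(l,\,:)}^{}$, so the DL cross blocks vanish and $\textbf{A}_R^{\!(k)}\textbf{H}_{R\,,B}^{}\textbf{W}_{\!B,\,k}^{}$ is diagonal. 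Hence $\boldsymbol{\Phi},\boldsymbol{\Psi}$ are diagonal, \eqref{Eqn:Sinr1st2} holds, and for fixed beam directions Problem~$\mathcal{M}$ reduces to a pure power-allocation problem.

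Next I would use the separability of the remaining optimization. The UL weighted SINR $\big([\,\boldsymbol{\omega}_{U}^{(k)}\,]_{(l)}^{}\big)^{-1}\kappa_U^{(k,\,l)}\lambda_k^{(l)}$ involves only the $k$-th MS powers, constrained by $\sum_l\lambda_k^{(l)}\leq P_k^{}$, whereas the DL weighted SINR $\big([\,\boldsymbol{\omega}_{\!D}^{(k)}\,]_{(l)}^{}\big)^{-1}\kappa_D^{(k,\,l)}\lambda_B^{(k,\,l)}$ involves only the BS powers, constrained by $\sum_{k,l}\lambda_B^{(k,\,l)}\leq P_B^{}$. Applying $\max_{x,y}\min\{f(x),g(y)\}=\min\{\max_x f(x),\max_y g(y)\}$ across these disjoint variable/constraint blocks, the max-min in \eqref{Eqn:ProbFstStageObjective} decouples into $K$ independent UL subproblems, one per MS, and a single DL subproblem. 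Since every weighted SINR is strictly increasing in its own power, each budget may be taken with equality without loss of optimality.

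Each subproblem then has the canonical form $\max\ \min_i c_i x_i$ subject to $\sum_i x_i = P$, $x_i\geq 0$, with constants $c_i>0$, and I would settle it by the standard equalization argument: if a feasible $\{x_i'\}$ achieved $\min_i c_i x_i'=\beta$, then $x_i'\geq\beta/c_i$ for every $i$, so $P=\sum_i x_i'\geq\beta\sum_i c_i^{-1}$, i.e.\ $\beta\leq P/\sum_i c_i^{-1}$; this bound is attained, and attained only, by $x_i^\star=c_i^{-1}P/\sum_j c_j^{-1}$, which equalizes all $c_i x_i^\star=P/\sum_j c_j^{-1}$. Putting $c_i=\big([\,\boldsymbol{\omega}_{U}^{(k)}\,]_{(l)}^{}\big)^{-1}\kappa_U^{(k,\,l)}$ and $P=P_k^{}$ in the $k$-th UL subproblem, and $c_i=\big([\,\boldsymbol{\omega}_{\!D}^{(k)}\,]_{(l)}^{}\big)^{-1}\kappa_D^{(k,\,l)}$, $P=P_B^{}$ with the index running over all $(m,q)$ in the DL subproblem, one recovers exactly $\big(\lambda_B^{(k,\,l)}\big)^{\star}$ and $\big(\lambda_k^{(l)}\big)^{\star}$ in \eqref{Eqn:PowerAllocation} together with the weighted-SINR values \eqref{Eqn:Eqn:Sinr1stPowerAllocation}.

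Finally, with the powers fixed at their optima, the minimum weighted SINR of the RS forwarded signals becomes $\min\big\{\min_k P_k^{}/\sum_q[\,\boldsymbol{\omega}_{U}^{(k)}\,]_{(q)}^{}(\kappa_U^{(k,\,q)})^{-1},\ P_B^{}/\sum_{m,q}[\,\boldsymbol{\omega}_{\!D}^{(m)}\,]_{(q)}^{}(\kappa_D^{(m,\,q)})^{-1}\big\}$, a quantity that depends on the design only through the beam directions: the MS beams $\{\textbf{g}_k^{\,(l)}\}\subset\mathcal{G}_k^{}$ determine $\textbf{A}_R^{}$ via \eqref{Eqn:RsEqualizer}, hence $\{\kappa_U^{(k,\,l)}\}$ and the null-space sets $\mathcal{G}_B^{\,(k,\,l)}$, from which the BS beams $\{\textbf{g}_B^{\,(k,\,l)}\}$ determine $\{\kappa_D^{(k,\,l)}\}$. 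As $\mathcal{G}_k^{}$ and each $\mathcal{G}_B^{\,(k,\,l)}$ are finite sets, maximizing this expression over the beam directions is a finite combinatorial search, which completes the proof. I expect the only delicate step to be the separability reduction -- verifying that the UL, DL, and per-MS sub-blocks genuinely share neither variables nor constraints once the beam directions are frozen -- after which everything collapses to the one-line equalization lemma.
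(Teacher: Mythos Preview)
Your proposal is correct and follows essentially the same approach as the paper's Appendix~B: fix the beam directions, separate the max--min into independent UL (per-MS) and DL power-allocation subproblems, solve each by equalizing the weighted SINRs, and leave the beam-direction choice to combinatorial search. Your treatment is in fact slightly more rigorous than the paper's, since you supply the elementary inequality argument showing that equalization is optimal for the canonical $\max\min_i c_i x_i$ problem, whereas the paper simply asserts the equalization condition \eqref{Eqn:PowerAllocationMsCondition} and verifies the formula.
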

\begin{proof}
Refer to Appendix~B.
\end{proof}

\subsection{Second Stage Processing} \label{Sec:SecondStage} We now proceed to
describe the algorithm for solving the second stage processing,
Problem~$\mathcal{B}$. As per
\eqref{Eqn:Sinr2Ul}-\eqref{Eqn:Sinr2Dl}, the SINRs of the data
stream estimates are not jointly convex in the RS precoder matrix
and the BS and MS equalizer matrices $\big\{ \textbf{F}_{\!R}^{},\!
\textbf{V}_{\!B}^{}, \{ \textbf{V}_{k}^{} \}_{k=1}^{K} \big\}$.
However, for a fixed precoder matrix $\textbf{F}_{\!R\,}^{}$ there
are closed-form solutions for the equalizer matrices
$\textbf{V}_{\!B}^{}$ and $\textbf{V}_{k}^{}$; conversely, for fixed
$\textbf{V}_{\!B}^{}$ and $\textbf{V}_{k}^{}$ we can cast the
problem of solving for $\textbf{F}_{\!R\,}^{}$ as a
\emph{quasi-convex} problem. This motivates the approach to
progressively refine the transceiver matrices by iteratively
alternate between solving for the BS and MS equalizer matrices and
the RS precoder matrix. In this regard, we alternatingly optimize
each one of the RS precoder matrix and the BS and MS equalizer
matrices in the form of the following subproblems, and the
convergence proof is provided in Appendix~C.\\\underline{RS Precoder
Matrix}\;\;$\big\{ \textbf{F}_{\!R}^{\,\star}, \gamma_{0}^{} \big\}
\!:= \mathcal{B}_{R}^{}\big\{ \textbf{V}_{\!B}^{}, \{
\textbf{V}_{k}^{} \}_{k=1}^{K}, \textbf{W}_{\!B}^{\,\star}, \{\:\!
\textbf{W}_{k}^{\,\star\,} \}_{k=1}^{K}, \textbf{A}_{R}^{\!\star},
P_{R}^{}, \{ \boldsymbol{\omega}_{U}^{(k)}\!,
\boldsymbol{\omega}_{\!D}^{(k)\,} \}_{k=1}^{K}
\big\}$\nopagebreak[4]
\begin{subnumcases}{\qquad\qquad\qquad\qquad\label{Eqn:ProbSndStagePrecoderStdProb}}
\!\underset{\substack{\textbf{F}_{\!R\,}^{},\, \gamma_{0}^{}}}{\max}
\!\!\!\!\!\!\!\!\!\! & \!\!\!\!\!\!\!\!
$\gamma_{\,0}^{}$\label{Eqn:ProbSndStagePrecoderStdProbObjective}\\
\;\textrm{s.t.} & \!\!\!\!\!\!\!\!\!\! $\big( [\,
\boldsymbol{\omega}_{U}^{(k)} \,]_{(l)}^{} \big)_{}^{-1}
\gamma_{\,U}^{(k,\,l)}, \, \big( [\, \boldsymbol{\omega}_{\!D}^{(k)}
\,]_{(l)}^{} \big)_{}^{-1} \gamma_{D}^{(k,\,l)} \ge \gamma_{\,0}^{},
\,\forall k \!\in\! \mathcal{K}, \forall l
\!\in\! \mathcal{L}_{\,k}^{},\;\;\;\;$\label{Eqn:ProbSndStagePrecoderStdProbSinrConstraints}\\
& \!\!\!\!\!\!\!\! $\textstyle||\, \textbf{F}_{\!R\,}^{}
\boldsymbol{\Phi} ||^{\,2\,} + ||\, \textbf{F}_{\!R\,}^{}
\boldsymbol{\Psi} ||^{\,2\,} + N_{\,0}^{} ||\, \textbf{F}_{\!R\,}^{}
\textbf{A}_{R}^{} \,||^{\,2\,} \!\leq
P_{R\,}^{}.$\label{Eqn:ProbSndStagePrecoderStdProbPowerConstraint}
\end{subnumcases}
\underline{BS
Equalizer}\indent\indent\indent\indent\indent~~$\textbf{V}_{\!B}^{\,\star}
:= \mathcal{B}_{B}^{}\big\{ \textbf{F}_{\!R}^{},
\textbf{W}_{\!B}^{\,\star}, \{\:\! \textbf{W}_{k}^{\,\star\,}
\}_{k=1}^{K}, \textbf{A}_{R}^{\!\star}, \{
\boldsymbol{\omega}_{U}^{(k)} \}_{k=1}^{K} \big\}$\nopagebreak[4]
\begin{numcases}{}
\!\!\underset{\textbf{V}_{\!B}^{}}{\arg \max} \!\!\!\!\!\!\!\!\!\! &
$\displaystyle\min_{\substack{\forall\, k \,\in\,
\mathcal{K}\\\forall\, l \,\in\, \mathcal{L}_{k}^{}}}
\textstyle\big\{ \big( [\, \boldsymbol{\omega}_{U}^{(k)}
\,]_{(l)}^{} \big)_{}^{-1} \gamma_{\,U}^{(k,\,l)}
\big\}$.\label{Eqn:ProbSndStageBsEqualizer}
\end{numcases}
\underline{\hbox{$k$-th} MS
Equalizer}\indent\indent\indent~$\textbf{V}_{k}^{\,\star\,} :=
\mathcal{B}_{k}^{}\big\{ \textbf{F}_{\!R}^{},
\textbf{W}_{\!B}^{\,\star}, \{\:\! \textbf{W}_{k}^{\,\star\,}
\}_{k=1}^{K}, \textbf{A}_{R}^{\!\star},
\boldsymbol{\omega}_{\!D}^{(k)} \big\}$\nopagebreak[4]
\begin{numcases}{}
\!\!\underset{\textbf{V}_{\!k}^{}}{\arg \max} \!\!\!\!\!\!\!\!\! &
$\displaystyle\min_{\substack{\forall\, l \,\in\,
\mathcal{L}_{k}^{}}} \textstyle\big\{ \big( [\,
\boldsymbol{\omega}_{\!D}^{(k)} \,]_{(l)}^{} \big)_{}^{-1}
\gamma_{D}^{(k,\,l)} \big\}$.\label{Eqn:ProbSndStageMsEqualizer}
\end{numcases}

First, for Problem~$\mathcal{B}_{B}^{}$ and
Problem~$\mathcal{B}_{k}^{}$, the per stream SINRs are maximized
with a minimum mean squared error (MMSE) equalizer matrix
\cite{Jnl:Unified_framework_convex:Palomar}. Hence, the BS equalizer
matrix is given by
\begin{IEEEeqnarray*}{l}
\textbf{V}_{\!B}^{\,\star} = \big( ( \textbf{H}_{R\,,B}^{}
)_{}^{T\,} \textbf{F}_{\!R}^{} \, \boldsymbol{\Phi}
\big)_{}^{\!\dag\,} \big( ( \textbf{H}_{R\,,B}^{} )_{}^{T\,}
\textbf{F}_{\!R}^{} \, \boldsymbol{\Phi} \big( (
\textbf{H}_{R\,,B}^{} )_{}^{T\,} \textbf{F}_{\!R}^{} \,
\boldsymbol{\Phi} \big)_{}^{\!\dag} + \boldsymbol{\Omega}_{B}^{}
\big)_{}^{-1},\IEEEyesnumber\label{Eqn:BsEqualizerSoln}
\end{IEEEeqnarray*}
where $\boldsymbol{\Omega}_{B}^{} \triangleq N_{\,0}^{} (
\textbf{H}_{R\,,B}^{} )_{}^{T\,} \textbf{F}_{\!R\,}^{}
\textbf{A}_{R}^{} \big( ( \textbf{H}_{R\,,B}^{} )_{}^{T\,}
\textbf{F}_{\!R\,}^{} \textbf{A}_{R}^{} \big)_{}^{\!\dag} +
N_{\,0}^{} \, \textbf{I}_{N_{B}^{}}^{}$ is the covariance matrix of
the aggregate noise at the BS. Likewise, the \hbox{$k$-th} MS
equalizer matrix is given by
\begin{IEEEeqnarray*}{l}
\textbf{V}_{k}^{\,\star\,} = \big( ( \textbf{H}_{R\,,\,k}^{}
)_{}^{T\,} \textbf{F}_{\!R}^{\,(k)} \boldsymbol{\psi}_{}^{\,(k)}
\big)_{}^{\!\dag\,} \big( ( \textbf{H}_{R\,,\,k}^{} )_{}^{T\,}
\textbf{F}_{\!R}^{} \, [\, \boldsymbol{\Psi},
\widetilde{\boldsymbol{\Phi}}_{k}^{} \,] \big( (
\textbf{H}_{R\,,\,k}^{} )_{}^{T\,} \textbf{F}_{\!R}^{} \, [\,
\boldsymbol{\Psi}, \widetilde{\boldsymbol{\Phi}}_{k}^{} \,]
\big)_{}^{\!\dag} + \boldsymbol{\Omega}_{\,k}^{}
\big)_{}^{-1},\IEEEyesnumber\label{Eqn:MsEqualizerSoln}
\end{IEEEeqnarray*}
where $\boldsymbol{\Omega}_{\,k}^{} \triangleq N_{\,0}^{} (
\textbf{H}_{R\,,\,k}^{} )_{}^{T\,} \textbf{F}_{\!R\,}^{}
\textbf{A}_{R}^{} \big( ( \textbf{H}_{R\,,\,k}^{} )_{}^{T\,}
\textbf{F}_{\!R\,}^{} \textbf{A}_{R}^{} \big)_{}^{\!\dag} +
N_{\,0}^{} \, \textbf{I}_{N_{k}^{}}^{}$ is the covariance matrix of
the aggregate noise at the \hbox{$k$-th} MS.

Second, note that Problem~$\mathcal{B}_{R}^{}$ corresponds to
designing precoders for $L$ two-user multicast groups, where the
precoder $[\, \textbf{F}_{\!R}^{\,(k)} \,]_{(:,\,l)}^{}$ is used for
multicasting the signal stream that encapsulates the UL data stream
$[\,\textbf{s}_{\,U}^{\,(k)}\,]_{(l)}^{}$ and the DL data stream
$[\,\textbf{s}_{D}^{\,(k)}\,]_{(l)}^{}$. As per
\cite[Claim~2]{Jnl:MBS-SDMA_using_SDR_with_perfect_CSI:Sidiropoulos_Luo},
this multigroup multicast problem is NP-hard\footnote{Please refer
to \cite{Jnl:Linear_precoding_conic:Eldar,
Jnl:Transceiver_QoS_per_antenna_power:Tolli} for discussion on the
unicast precoder design problem, which is not NP-hard.}. We propose
to solve for the RS precoder matrix $\textbf{F}_{\!R}^{}$ using
Algorithm~\ref{Algorithm:RsPrecoder} as derived in Appendix~D. In a
nutshell, we cast Problem~$\mathcal{B}_{R}^{}$ as a quasi-convex
problem and solve it using the bisection method
\cite[Section~4.2.5]{Bok:Convex_optimization:Boyd}. To do so, we
define the SOCP feasibility problem of designing
$\textbf{F}_{\!R}^{}$ that achieves a \emph{target} value of the
minimum weighted per stream SINR $\gamma_{\,0}^{}$ as\footnote{Due
to page limit, please refer to Appendix~D for the expressions for
\eqref{Eqn:ProbSndStagePrecoderFeasibilityProbSinrConstraints} and
\eqref{Eqn:ProbSndStagePrecoderFeasibilityProbPowerConstraint}.}\\\indent\;$\textbf{F}_{\!R}^{\,\star}
:= \widetilde{\mathcal{B}}_{R}^{}\big\{ \gamma_{\,0}^{},\!
\textbf{V}_{\!B}^{}, \{ \textbf{V}_{k}^{} \}_{k=1}^{K},
\textbf{W}_{\!B}^{\,\star}, \{\:\! \textbf{W}_{k}^{\,\star\,}
\}_{k=1}^{K}, \textbf{A}_{R}^{\!\star}, P_{R}^{}, \{
\boldsymbol{\omega}_{U}^{(k)}\!, \boldsymbol{\omega}_{\!D}^{(k)\,}
\}_{k=1}^{K} \big\}$\nopagebreak[4]
\begin{subnumcases}{\label{Eqn:ProbSndStagePrecoderFeasibilityProb}}
\textrm{find} & \!\!\!\!\!\!\!\!\!\!
$\textbf{F}_{\!R}^{}$\label{Eqn:ProbSndStagePrecoderFeasibilityProbObjective}\\
\,\textrm{s.t.} & \!\!\!\!\!\!\!\!\!\!\! $\big[\,
\widetilde{\alpha}_{\,U}^{\,(k,\,l)} \,;\,
\boldsymbol{\beta}_{\,U}^{\,(k,\,l)} \,; \delta_{\,U}^{\,(k,\,l)}
\,\big] \!\succeq_{K}^{}\! 0, \;\; \big[\,
\widetilde{\alpha}_{D}^{\,(k,\,l)} \,;\,
\boldsymbol{\beta}_{D}^{\,(k,\,l)} \,; \delta_{D}^{\,(k,\,l)}
\,\big] \!\succeq_{K}^{}\! 0, \;\;\forall k \in \mathcal{K}, \forall
l \in
\mathcal{L}_{\,k}^{},\;$\label{Eqn:ProbSndStagePrecoderFeasibilityProbSinrConstraints}\\
& \!\!\!\!\!\!\!\!\!\!\! $\big[\, \sqrt{P_{R}^{}} \;;\,
\boldsymbol{\rho} \,\big] \!\succeq_{K}^{}\!
0.$\label{Eqn:ProbSndStagePrecoderFeasibilityProbPowerConstraint}
\end{subnumcases}
Starting with an interval that is expected to contain the
\emph{optimum} value of $\gamma_{\,0}^{}$, we repeatedly bisect the
interval and select the subinterval in which
Problem~$\widetilde{\mathcal{B}}_{R}^{}$ is feasible until
$\gamma_{\,0}^{}$ converges.

\subsection{Implementation Considerations} \label{Sec:Implementation}
First, we make the following assumptions about the synchronization
requirement on the UL and DL signals.
\begin{assumption}[Synchronization
Requirement]\label{Assumption:Synchronization} The transmitted
signals of the BS and the MSs are frequency and time synchronous
\cite[Section~2.2]{Thesis:Two_way_relay:Unger}. For instance, in a
practical system such as IEEE~802.16m, the BS and MSs would be
scheduled to transmit and receive over the same frequency-time
resource units \cite[Section~16.3.4.1]{Std:16m}.~
\hfill\IEEEQEDclosed
\end{assumption}
Second, we make the following assumptions on the side information
available at each node to facilitate transceiver design and
self-interference cancelation. Under these assumptions, the proposed
transceiver design problem can be solved in a distributed fashion.

\begin{assumption}[Side Information at the
RS]\label{Assumption:SideInfoRs} The RS has knowledge of global
channel state information (CSI) $\mathcal{H}_{R}^{} = \{
\textbf{H}_{R\,,1}^{}, \ldots, \textbf{H}_{R\,,K}^{},
\textbf{H}_{R\,,B}^{} \}$. For instance, the RS can accurately
estimate the channel matrices of all links by observing the
reciprocal reverse channels.~ \hfill\IEEEQEDclosed
\end{assumption}

As per Assumption~\ref{Assumption:SideInfoRs}, the RS can locally
solve the transceiver design problem (using
Algorithm~\ref{Algorithm:TopLevel}) and broadcast the RS
transformation matrix to the BS and MSs.

\begin{assumption}[Side Information at the BS and
MSs]\label{Assumption:SideInfoBsMs} The BS and each MS has knowledge
of the channel matrix between itself and the RS, the two-hop
effective channel matrix, and the RS transformation matrix. Thus,
the side information at the BS and the \hbox{$k$-th} MS include
\begin{IEEEeqnarray*}{l}
\textrm{BS: } \mathcal{H}_{B}^{} \!=\! \{ \textbf{H}_{R\,,B}^{}, (
\textbf{H}_{R\,,B}^{} )_{}^{T\,} \textbf{F}_{\!R}^{} \,
\boldsymbol{\Phi}, \textbf{F}_{\!R\,}^{}, \textbf{A}_{R\,}^{} \}, \;
\textrm{$k$-th MS: } \mathcal{H}_{\,k}^{} \!=\! \{
\textbf{H}_{R\,,\,k}^{}, ( \textbf{H}_{R\,,\,k}^{} )_{}^{T\,}
\textbf{F}_{\!R}^{} \, [\, \boldsymbol{\Psi},
\widetilde{\boldsymbol{\Phi}}_{k}^{} \,], \textbf{F}_{\!R\,}^{},
\textbf{A}_{R\,}^{} \}.
\end{IEEEeqnarray*}
For instance, the BS and each MS can estimate the channel matrix
between itself and the RS by observing the reciprocal reverse
channel, and can estimate the two-hop effective channel matrix using
pilot-assisted techniques.~ \hfill\IEEEQEDclosed
\end{assumption}

As per Assumption~\ref{Assumption:SideInfoBsMs}, the BS and MSs can
locally determine their precoder and equalizer matrices (using
Lemma~\ref{Lemma:BeamSelPowerAllocation},
\eqref{Eqn:BsEqualizerSoln}, and \eqref{Eqn:MsEqualizerSoln}), and
have sufficient information to deduce and cancel self-interference.

\begin{remark}[Channel Estimation and Feedback]
In a practical system such as IEEE~802.16m, pilot symbols are
embedded in frequency-time resource units to facilitate channel
estimation \cite[Section~16.3.4.4]{Std:16m}, and each node can
perform channel estimation using techniques such as those defined in
\cite{Cnf:Analog_feedback_Moto1:Thomas} and references therein. On
the other hand, the RS can broadcast the RS transformation matrix to
the BS and MSs by means of high fidelity unquantized feedback
\cite[Section~16.3.6.2.5.6]{Std:16m}.~ \hfill\IEEEQEDclosed
\end{remark}

\section{Simulation Results and Discussions} \label{Sec:SimsAndDiscussions}
In this section, we provide numerical simulation results to assess
the performance of the proposed transceiver design. For
illustration, we consider the following simulation settings.

\subsection{Simulation Settings} We consider a system with $K = 3$
MSs. In particular, we focus on MIMO configurations similar to those
defined in the IEEE~802.16m standard \cite{Std:16m}: the BS is
equipped with \emph{up to} $N_{B}^{} = 8$ antennas and the MSs are
equipped with $N_{\,k}^{} = \{ 2, 4 \}$ antennas. As an example, we
investigate the scenario in which the BS exchanges $L_{\,1}^{} = 2$,
$L_{\,2}^{} = 1$, and $L_{\,3}^{} = 1$ data streams with the MSs.

We evaluate the performance of the proposed scheme using the packet
error rate (PER) and the average sum rate\footnote{The average sum
rate is defined as $\mathbb{E} \,\big[ \sum_{k = 1}^{K} \sum_{\,l =
1}^{L_{k}^{}} ( C_{\,U}^{\,(k,\,l)} \!+ C_{D}^{\,(k,\,l)} )
\,\big]$, where $C_{\,U}^{\,(k,\,l)}$ and $C_{D}^{\,(k,\,l)}$ are
the UL and DL per stream achievable data rates, respectively, as
given in \eqref{Eqn:Cap}.} as performance metrics. In the PER
simulations, we employ the convolutional turbo code (CTC) defined in
the IEEE~802.16m standard \cite[Section~16.3.10.1.5]{Std:16m}: each
packet contains eight information bytes coded at rate $1/3$ and
modulated using QPSK. We compare the performance of the proposed
scheme against the following prominent baseline schemes. Since these
schemes were originally designed for single-antenna MSs, they do not
consider MS precoder and equalizer designs. We extend these schemes
to generate the \hbox{$k$-th} MS precoder matrix $\textbf{W}_{k}^{}$
from the principal right singular vectors of the channel matrix
$\textbf{H}_{R\,,\,k}^{}$ with equal power allocation across the
data streams, and we obtain the \hbox{$k$-th} MS equalizer matrix as
$\textbf{V}_{k}^{} = ( \textbf{W}_{k}^{} )_{}^{T}$.
\begin{list}{\labelitemi}{\leftmargin=0.5em}
\item \emph{\hbox{Baseline 1 (Bidirectional Channel Inversion Naive
Algorithm
\cite{Jnl:Multiuser_two_way_relaying_scheme_and_analysis:Ding}):}}
The BS precoder and equalizer matrices and the RS transformation
matrix are determined using pseudo-inverse methods.

\item \emph{\hbox{Baseline 2 (Bidirectional Channel Inversion Greedy
Algorithm \cite{Cnf:Two_way_one_BS_multi_MS_AF:Sun}):}} The BS
precoder and equalizer matrices are determined using pseudo-inverse
methods. A greedy iterative algorithm is employed to determine the
RS transformation matrix that maximizes the \emph{asymptotic} per
stream SINRs.

\item \emph{\hbox{Baseline 3 (Two-Way Relaying using Conventional
SDMA Processing
\cite{Jnl:Multiuser_two_way_relaying_no_SI_cancellation:Sayed}):}}
The RS transformation matrix is devised to spatially multiplex
\emph{all} data streams. Since this scheme does not provide BS
precoder and equalizer designs, we generate the BS precoder matrix
$\textbf{W}_{\!B}^{}$ from the principal right singular vectors of
the channel matrix $\textbf{H}_{R\,,B}^{}$ with equal power
allocation across the data streams, and we obtain the BS equalizer
matrix as $\textbf{V}_{\!B}^{} = ( \textbf{W}_{\!B}^{} )_{}^{T}$.
\end{list}

In the simulation results we define the signal-to-noise ratio (SNR)
as $P_{B\,}^{} / N_{\,0}^{}$. We set the RS and MS transmit powers
such that $P_{B\,}^{} / L = P_{R\,}^{} / L = P_{\,k\,}^{} /
L_{\,k}^{}$, so the transmit power per data stream is the same for
all nodes. We assume i.i.d. Rayleigh fading, so the channel matrices
are given by\footnote{Note that $\textrm{rank} (
\textbf{H}_{R\,,B}^{} ) = \min\{ N_{B}^{}, N_{R}^{} \}$ and
$\textrm{rank} ( \textbf{H}_{R\,,k}^{} ) = \min\{ N_{\,k}^{},
N_{R}^{} \}$ with probability 1.} $\textrm{vec} (
\textbf{H}_{R\,,B}^{} ) \sim \mathcal{CN}\big(
\textbf{0}_{N_{B}^{}N_{R}^{} \times 1}^{},
\textbf{I}_{N_{B}^{}N_{R}^{}}^{} \big)$ and $\textrm{vec} (
\textbf{H}_{R\,,\,k}^{} ) \sim \mathcal{CN}\big(
\textbf{0}_{N_{k}^{}N_{R}^{} \times 1}^{},
\textbf{I}_{N_{k}^{}N_{R}^{}}^{} \big)$.

\subsection{Performance Comparisons} In
\figurename~\ref{Fig:Per} and \figurename~\ref{Fig:SumRateAll}, we
present the performance results when the RS is equipped with
$N_{R}^{} = 4$ antennas. Note that in this setting the number of
spatial dimensions at the RS does not suffice for performing two-way
relaying using conventional SDMA processing (i.e., $N_{R}^{} < 2L$),
and so Baseline~3 is not feasible. Moreover, we assume that
\hbox{User 2} has higher service priority than the other users; as
an example, we set the priority weight factors to $[\,
\boldsymbol{\omega}_{U}^{(2)} \,]_{(1)}^{} = [\,
\boldsymbol{\omega}_{\!D}^{(2)} \,]_{(1)}^{} = 2$ and $[\,
\boldsymbol{\omega}_{U}^{(k)} \,]_{(l)}^{} = [\,
\boldsymbol{\omega}_{\!D}^{(k)} \,]_{(l)}^{} = 1$ otherwise.

First, in \figurename~\ref{Fig:Per} we show the PER performance
results when the BS is equipped with $N_{B}^{} = 4$ antennas and the
MSs are equipped with $N_{\,k}^{} = 2$ antennas. It can be seen that
the proposed scheme exhibits better error performance than the
baseline schemes. For instance, the proposed scheme achieves in
excess of 10~dB SNR gain over the baseline schemes at $10_{}^{-2}$
PER. This is attributed to the fact that the proposed scheme
efficiently exploits the multiple spatial dimensions at the MSs,
whereas the baseline schemes were originally designed for single
antenna MSs and cannot efficiently exploit the available spatial
dimensions. On the other hand, reflecting the QoS priority settings,
for the proposed scheme \hbox{User 2} has approximately 3~dB SNR
gain over the other users for all PER values smaller than
$10_{}^{-1}$.

Second, in \figurename~\ref{Fig:SumRateAll} we show the average sum
rate performance results. In \figurename~\ref{Fig:SumRate}, we show
the average data rate versus SNR when the BS is equipped with
$N_{B}^{} = 4$ antennas. It can be seen that the proposed scheme
achieves significant data rate gain over the baseline schemes.
Moreover, the proposed scheme alleviates the half-duplex loss (cf.
Remark~\ref{Remark:RelayChannelDoF} and
Corollary~\ref{Corollary:RelayChannelDoFIid}) and achieves the DoF
equal to $\min\{ N_{B}^{}, N_{R}^{}, \sum_{k = 1}^{K} \! N_{\,k}^{}
\} = 4$. In \figurename~\ref{Fig:SumRateVsNb}, we show the average
sum rate versus the number of BS antennas at 25~dB SNR. It can be
seen that the data rate of the proposed scheme improves
monotonically with the number of antennas at the BS and MSs. Note
that the inferior performance of Baseline~3 is due to the fact this
scheme requires more spatial dimensions at the RS to be feasible.

Finally, in \figurename~\ref{Fig:PerSdma} we show the PER
performance results when the BS is equipped with $N_{B}^{} = 4$
antennas, the MSs are equipped with $N_{\,k}^{} = 2$ antennas, and
the RS is equipped with $N_{R}^{} = 8$ antennas. In this setting,
there are enough spatial dimensions at the RS (i.e., $N_{R}^{} \ge
2L$) for Baseline~3 to be feasible\footnote{Note that Baseline~1 is
infeasible as it requires $N_{B}^{} \ge N_{R}^{}$. Therefore, we
have excluded it from the comparison.}. For simplicity of
comparison, we assume all the users have the same service priority
and we set all priority weight factors to $[\,
\boldsymbol{\omega}_{U}^{(k)} \,]_{(l)}^{} = [\,
\boldsymbol{\omega}_{\!D}^{(k)} \,]_{(l)}^{} = 1$. It can be seen
that the proposed scheme substantially outperforms Baseline~3 (e.g.,
up to 19~dB SNR gain at $10_{}^{-2}$ PER). This is because the
proposed scheme can efficiently exploit the spatial dimensions at
the RS to mitigate interference \emph{as well as} to achieve
beamforming gain, whereas Baseline~3 uses all the spatial dimensions
to null interference.

\section{Conclusions} \label{Sec:Conclusions} In cellular multi-user
two-way AF relaying systems, each node experiences self-induced
backward propagated interference as well as multi-user interference.
As a result, conventional self-interference cancelation approaches
for single-user two-way relay systems do not suffice to mitigate the
impact of interference. We applied an interference management model
exploiting signal space alignment and proposed a linear MIMO
transceiver design algorithm, which allows for alleviating the
half-duplex loss and providing flexible performance optimization
accounting for each user's QoS priorities. Numerical comparisons to
two-way relaying schemes based on bidirectional channel inversion
and SDMA-only processing show that the proposed scheme achieves
superior error rate and average data rate performance.

\section*{Appendix~A: Proof of Lemma~\ref{Lemma:Decomposition}}
We first show that the end-to-end per stream SINRs are predicated by
the first hop. Specifically, the end-to-end SINRs of the data stream
estimates \eqref{Eqn:Sinr2Ul}-\eqref{Eqn:Sinr2Dl} can be expressed
as
\begin{IEEEeqnarray*}{l}
\textrm{UL: }\gamma_{\,U}^{(k,\,l)} = \xi_{\,U}^{(k,\,l)} \,
\widetilde{\gamma}_{\,U}^{(k,\,l)}, \;\; \textrm{DL:
}\gamma_{D}^{(k,\,l)} = \xi_{D}^{(k,\,l)} \,
\widetilde{\gamma}_{D}^{(k,\,l)},\IEEEyesnumber\label{Eqn:Sinr1st2nd}
\end{IEEEeqnarray*}
where $\widetilde{\gamma}_{\,U}^{(k,\,l)}$ and
$\widetilde{\gamma}_{D}^{(k,\,l)}$ are the SINRs of the RS forwarded
signals \eqref{Eqn:Sinr1st1} and
\begin{IEEEeqnarray*}{l}
\xi_{\,U}^{(k,\,l)} \triangleq
\frac{\substack{\textstyle N_{\,0}^{} |\, [\,
\textbf{V}_{\!B}^{\,(k)} \,]_{(l,\,:)}^{} ( \textbf{H}_{R\,,B}^{}
)_{}^{T} [\, \textbf{F}_{\!R}^{\,(k)} \,]_{(:,\,l)}^{}
\,|_{}^{\,2\,} ||\, [\, \textbf{A}_{R}^{\!(k)} \,]_{(l,\,:)}^{}
\,||_{}^{\,2\,}}}{\left(\substack{\textstyle\sum_{m = 1}^{K}
\sum_{\substack{\,q = 1\\q \neq \,l}}^{\,L_{m}^{}} \! |\, [\,
\textbf{V}_{\!B}^{\,(k)} \,]_{(l,\,:)}^{} ( \textbf{H}_{R\,,B}^{}
)_{}^{T} [\, \textbf{F}_{\!R}^{\,(m)} \,]_{(:,\,q)}^{} [\,
\boldsymbol{\phi}_{}^{\,(m)} \,]_{(q)}^{} \,|_{}^{\,2\,}\\
\textstyle+ N_{\,0}^{} ( ||\, [\, \textbf{V}_{\!B}^{\,(k)}
\,]_{(l,\,:)}^{} ( \textbf{H}_{R\,,B}^{} )_{}^{T\,}
\textbf{F}_{\!R\,}^{} \textbf{A}_{R}^{} \,||_{}^{\,2\,} \!+ ||\, [\,
\textbf{V}_{\!B}^{\,(k)} \,]_{(l,\,:)}^{} \,||_{}^{\,2\,}
)}\right)},\IEEEyessubnumber\label{Eqn:Sinr1st2ndUlWeight}\\
\xi_{D}^{(k,\,l)} \triangleq \frac{\substack{\textstyle N_{\,0}^{}
|\, [\, \textbf{V}_{k}^{} \,]_{(l,\,:)}^{} ( \textbf{H}_{R\,,\,k}^{}
)_{}^{T} [\, \textbf{F}_{\!R}^{\,(k)} \,]_{(:,\,l)}^{}
\,|_{}^{\,2\,} ||\, [\, \textbf{A}_{R}^{\!(k)} \,]_{(l,\,:)}^{}
\,||_{}^{\,2\,}}}{\left(\substack{\textstyle\sum_{m = 1}^{K}
\sum_{\substack{\,q = 1\\q \neq \,l}}^{\,L_{m}^{}} \! |\, [\,
\textbf{V}_{k}^{} \,]_{(l,\,:)}^{} ( \textbf{H}_{R\,,\,k}^{}
)_{}^{T} [\, \textbf{F}_{\!R}^{\,(m)} \,]_{(:,\,q)}^{} [\,
\boldsymbol{\psi}_{}^{\,(m)} \,]_{(q)}^{}\,|_{}^{\,2\,}\\
\textstyle+ \sum_{\substack{a = 1\\a \neq \,k}}^{K} \sum_{\,b =
1}^{\,L_{a}^{}} \! |\, [\, \textbf{V}_{k}^{} \,]_{(l,\,:)}^{} (
\textbf{H}_{R\,,\,k}^{} )_{}^{T} [\, \textbf{F}_{\!R}^{\,(a)}
\,]_{(:,\,b)}^{} [\, \boldsymbol{\phi}_{}^{\,(a)} \,]_{(b)}^{}
\,|_{}^{\,2\,}\\
\textstyle+ N_{\,0}^{} ( ||\, [\, \textbf{V}_{k}^{} \,]_{(l,\,:)}^{}
( \textbf{H}_{R\,,\,k}^{} )_{}^{T\,} \textbf{F}_{\!R\,}^{}
\textbf{A}_{R}^{} \,||_{}^{\,2\,} \!+ ||\, [\, \textbf{V}_{k}^{}
\,]_{(l,\,:)}^{}
\,||_{}^{\,2\,})}\right)}.\IEEEyessubnumber\label{Eqn:Sinr1st2ndDlWeight}
\end{IEEEeqnarray*}
Consider $\xi_{\,U}^{(k,\,l)}$ for example. Note that
\begin{IEEEeqnarray*}{l}
||\, [\, \textbf{V}_{\!B}^{\,(k)} \,]_{(l,\,:)}^{} (
\textbf{H}_{R\,,B}^{} )_{}^{T\,} \textbf{F}_{\!R\,}^{}
\textbf{A}_{R}^{} \,||_{}^{\,2\,}\\
\textstyle= ||\, \sum_{m = 1}^{K} \sum_{\substack{\,q = 1\\q \neq
\,l}}^{\,L_{m}^{}} [\, \textbf{V}_{\!B}^{\,(k)} \,]_{(l,\,:)}^{} (
\textbf{H}_{R\,,B}^{} )_{}^{T} [\, \textbf{F}_{\!R}^{\,(m)}
\,]_{(:,\,q)}^{} [\, \textbf{A}_{R}^{\!(m)} \,]_{(q,\,:)}^{} + [\,
\textbf{V}_{\!B}^{\,(k)} \,]_{(l,\,:)}^{} ( \textbf{H}_{R\,,B}^{}
)_{}^{T} [\, \textbf{F}_{\!R}^{\,(k)} \,]_{(:,\,l)}^{} [\,
\textbf{A}_{R}^{\!(k)} \,]_{(l,\,:)}^{} \,||_{}^{\,2\,}\\
\textstyle\stackrel{(a)}{\approx} ||\, [\, \textbf{V}_{\!B}^{\,(k)}
\,]_{(l,\,:)}^{} ( \textbf{H}_{R\,,B}^{} )_{}^{T} [\,
\textbf{F}_{\!R}^{\,(k)} \,]_{(:,\,l)}^{} [\, \textbf{A}_{R}^{\!(k)}
\,]_{(l,\,:)}^{} \,||_{}^{\,2\,},
\end{IEEEeqnarray*}
where (a) follows from the fact that the terms $\big\{ [\,
\textbf{V}_{\!B}^{\,(k)} \,]_{(l,\,:)}^{} ( \textbf{H}_{R\,,B}^{}
)_{}^{T} [\, \textbf{F}_{\!R}^{\,(m)} \,]_{(:,\,q)}^{} \big\}$
should be \emph{negligible} to suppress interference. It can be
deduced that $\xi_{\,U}^{(k,\,l)} \leq 1$ and similarly
$\xi_{D}^{(k,\,l)} \leq 1$, so the end-to-end per stream SINRs are
limited by the SINRs of the RS forwarded signals, i.e.,
\begin{IEEEeqnarray*}{l}
\textrm{UL: }\gamma_{\,U}^{(k,\,l)} \leq
\widetilde{\gamma}_{\,U}^{(k,\,l)}, \;\; \textrm{DL:
}\gamma_{D}^{(k,\,l)} \leq
\widetilde{\gamma}_{D}^{(k,\,l)}.\IEEEyesnumber\label{Eqn:Sinr1st2nd2}
\end{IEEEeqnarray*}

As per \eqref{Eqn:Sinr1st2nd2}, the minimum weighted end-to-end SINR
of the data stream estimates is limited by the minimum weighted SINR
of the RS forwarded signals, i.e.,
\begin{IEEEeqnarray*}{l}
\!\!\!\!\!\!\displaystyle\min_{\substack{\forall\, k \,\in\,
\mathcal{K}\\\forall\, l \,\in\, \mathcal{L}_{k}^{}}}
\!\!\textstyle\big\{ \big( [\, \boldsymbol{\omega}_{U}^{(k)}
\,]_{(l)}^{} \big)_{}^{-1} \! \gamma_{\,U}^{(k,\,l)}, \big( [\,
\boldsymbol{\omega}_{\!D}^{(k)} \,]_{(l)}^{} \big)_{}^{-1} \!
\gamma_{D}^{(k,\,l)} \, \big\} \leq
\displaystyle\min_{\substack{\forall\, k \,\in\,
\mathcal{K}\\\forall\, l \,\in\, \mathcal{L}_{k}^{}}}
\!\!\textstyle\big\{ \big( [\, \boldsymbol{\omega}_{U}^{(k)}
\,]_{(l)}^{} \big)_{}^{-1} \widetilde{\gamma}_{\,U}^{(k,\,l)}, \big(
[\, \boldsymbol{\omega}_{\!D}^{(k)} \,]_{(l)}^{} \big)_{}^{-1}
\widetilde{\gamma}_{D}^{(k,\,l)} \,
\big\}.\;\;\;\;\;\;\;\IEEEyesnumber\label{Eqn:ProbMainObjectiveSplit}
\end{IEEEeqnarray*}
By this property, the transceiver design problem,
Problem~$\mathcal{Q}$, can be decomposed into two stages. In the
first stage processing, we find the BS and MS precoder matrices and
the RS equalizer matrix that maximize the minimum weighted SINR of
the RS forwarded signals, i.e.,
$\underset{\substack{\textbf{W}_{\!B}^{}, \{\:\! \textbf{W}_{\!k}^{}
\}_{k=1\!}^{K},\, \textbf{A}_{R}^{}}}{\arg \max}
\displaystyle\min_{\substack{\forall\, k \,\in\,
\mathcal{K}\\\forall\, l \,\in\, \mathcal{L}_{k}^{}}}
\!\!\textstyle\big\{ \big( [\, \boldsymbol{\omega}_{U}^{(k)}
\,]_{(l)}^{} \big)_{}^{-1} \widetilde{\gamma}_{\,U}^{(k,\,l)}, \big(
[\, \boldsymbol{\omega}_{\!D}^{(k)} \,]_{(l)}^{} \big)_{}^{-1}
\widetilde{\gamma}_{D}^{(k,\,l)} \, \big\}$, and thereby implicitly
maximize the \emph{achievable} minimum weighted end-to-end SINR of
the data streams estimates. Then, in the second stage processing, we
find the RS precoder matrix and the BS and MS equalizer matrices to
\emph{holistically} maximize the minimum weighted end-to-end SINR of
the data streams estimates, i.e.,
$\underset{\substack{\textbf{F}_{\!R}^{}, \textbf{V}_{\!B}^{}, \{
\textbf{V}_{k}^{} \}_{k=1}^{K}}}{\arg \max}
\displaystyle\min_{\substack{\forall\, k \,\in\,
\mathcal{K}\\\forall\, l \,\in\, \mathcal{L}_{k}^{}}}
\!\!\textstyle\big\{ \big( [\, \boldsymbol{\omega}_{U}^{(k)}
\,]_{(l)}^{} \big)_{}^{-1} \! \gamma_{\,U}^{(k,\,l)}, \big( [\,
\boldsymbol{\omega}_{\!D}^{(k)} \,]_{(l)}^{} \big)_{}^{-1} \!
\gamma_{D}^{(k,\,l)} \, \big\}$.

\section*{Appendix~B: Proof of Lemma~\ref{Lemma:BeamSelPowerAllocation}}
Substituting \eqref{Eqn:Sinr1st2} into
\eqref{Eqn:ProbFstStageObjective}, the problem of maximizing the
minimum weighted SINR of the RS forwarded signals can be
reformulated as
\begin{IEEEeqnarray*}{l}
\!\!\!\!\underset{\substack{\{ \textbf{g}_{k}^{(l)} \}, \{
\textbf{g}_{B}^{(k,\,l)} \}, \{ \lambda_{k}^{(l)} \}, \{
\lambda_{B}^{(k,\,l)} \}}}{\arg \max} \min_{\substack{\forall\, k
\,\in\, \mathcal{K}\\\forall\, l \,\in\, \mathcal{L}_{k}^{}}} \big\{
\big( [\, \boldsymbol{\omega}_{U}^{(k)} \,]_{(l)}^{} \big)_{}^{-1}
\widetilde{\gamma}_{\,U}^{(k,\,l)}, \big( [\,
\boldsymbol{\omega}_{\!D}^{(k)} \,]_{(l)}^{} \big)_{}^{-1}
\widetilde{\gamma}_{D}^{(k,\,l)} \,
\big\}\IEEEyesnumber\label{Eqn:PowerAllocationReformulate}\\
\!\!\!\!\Leftrightarrow\!\!\! \underset{\{ \textbf{g}_{k}^{(l)} \},
\{ \textbf{g}_{B}^{(k,\,l)} \}}{\arg \max} \!\! \min \!\Bigg\{\!\!
\min_{\forall\, k \,\in\, \mathcal{K}} \!\Bigg\{
\underset{\lambda_{k}^{(l)}}{\arg \max} \displaystyle\min_{\forall\,
l \,\in\, \mathcal{L}_{k}^{}} \!\! \big\{ \big( [\,
\boldsymbol{\omega}_{U}^{(k)} \,]_{(l)}^{} \big)_{}^{-1}
\widetilde{\gamma}_{\,U}^{(k,\,l)} \, \big\} \!\Bigg\}, \underset{\{
\lambda_{B}^{(k,\,l)} \}}{\arg \max} \min_{\substack{\forall\, k
\,\in\, \mathcal{K}\\\forall\, l \,\in\, \mathcal{L}_{k}^{}}} \big\{
\big( [\, \boldsymbol{\omega}_{\!D}^{(k)} \,]_{(l)}^{} \big)_{}^{-1}
\widetilde{\gamma}_{D}^{(k,\,l)} \, \big\} \!\Bigg\}.
\end{IEEEeqnarray*}
Therefore, for fixed beam directions $\{\,
\textbf{g}_{\,k}^{\,(l)}\!,\, \textbf{g}_{B}^{\,(k,\,l)} \,\}$, the
power allocation at each node can be \emph{separately} determined.
For instance, the power allocation at the \hbox{$k$-th MS} can be
determined according to $\big( \lambda_{\,k}^{(l)}
\big)_{}^{\!\star} = \arg \max_{\lambda_{k}^{(l)}}
\displaystyle\min_{\forall\, l \,\in\, \mathcal{L}_{k}^{}} \!\!
\big\{ \big( [\, \boldsymbol{\omega}_{U}^{(k)} \,]_{(l)}^{}
\big)_{}^{-1} \widetilde{\gamma}_{\,U}^{(k,\,l)} \, \big\}$, whereby
\begin{IEEEeqnarray*}{l}
\big( [\, \boldsymbol{\omega}_{U}^{(k)} \,]_{(1)}^{} \big)_{}^{-1}
\widetilde{\gamma}_{\,U}^{(k,\,1)} = \ldots = \big( [\,
\boldsymbol{\omega}_{U}^{(k)} \,]_{(L_{k}^{})}^{} \big)_{}^{-1}
\widetilde{\gamma}_{\,U}^{(k,\,L_{k}^{})}\IEEEyesnumber\label{Eqn:PowerAllocationMsCondition}\\
\Leftrightarrow \big( [\, \boldsymbol{\omega}_{U}^{(k)} \,]_{(1)}^{}
\big)_{}^{-1} \kappa_{\,U}^{(k,\,1)} \big( \lambda_{\,k}^{(1)}
\big)_{}^{\!\star} = \ldots = \big( [\,
\boldsymbol{\omega}_{U}^{(k)} \,]_{(L_{k}^{})}^{} \big)_{}^{-1}
\kappa_{\,U}^{(k,\,L_{k}^{})} \big( \lambda_{\,k}^{(L_{k}^{})}
\big)_{}^{\!\star}.
\end{IEEEeqnarray*}
It can be shown that \eqref{Eqn:PowerAllocationMsCondition} is
satisfied with $\big( \lambda_{\,k}^{(l)} \big)_{}^{\!\star} \!=\!
\frac{\textstyle[\, \boldsymbol{\omega}_{U}^{(k)} \,]_{(l)}^{} \big(
\kappa_{\,U}^{(k,\,l)} \big)_{}^{-1} P_{\,k}^{}}{\textstyle\sum_{\,q
= 1}^{\,L_{k}^{}} \, [\, \boldsymbol{\omega}_{U}^{(k)} \,]_{(q)}^{}
\big( \kappa_{\,U}^{(k,\,q)} \big)_{}^{-1}}$, which in turn yields
weighted UL SINRs of $\big( [\, \boldsymbol{\omega}_{U}^{(k)}
\,]_{(l)}^{} \big)_{}^{-1} \widetilde{\gamma}_{\,U}^{(k,\,l)} =
\frac{\textstyle P_{\,k}^{}}{\textstyle\sum_{\,q = 1}^{\,L_{k}^{}}
\, [\, \boldsymbol{\omega}_{U}^{(k)} \,]_{(q)}^{} \big(
\kappa_{\,U}^{(k,\,q)} \big)_{}^{-1}}$. Analogously, the power
allocation at the BS is given by $\big( \lambda_{B}^{(k,\,l)}
\big)_{}^{\!\star} \!=\! \frac{\textstyle[\,
\boldsymbol{\omega}_{\!D}^{(k)} \,]_{(l)}^{} \big(
\kappa_{D}^{(k,\,l)} \big)_{}^{-1} P_{B}^{}}{\textstyle\sum_{m =
1}^{K} \sum_{\,q = 1}^{\,L_{m}^{}} \, [\,
\boldsymbol{\omega}_{\!D}^{(m)} \,]_{(q)}^{} \big(
\kappa_{D}^{(m,\,q)} \big)_{}^{-1}}$, which yields weighted DL SINRs
of $\big( [\, \boldsymbol{\omega}_{\!D}^{(k)} \,]_{(l)}^{}
\big)_{}^{-1} \widetilde{\gamma}_{D}^{(k,\,l)} = \frac{\textstyle
P_{B}^{}}{\textstyle\sum_{m = 1}^{K} \sum_{\,q = 1}^{\,L_{m}^{}} \,
[\, \boldsymbol{\omega}_{\!D}^{(m)} \,]_{(q)}^{} \big(
\kappa_{D}^{(m,\,q)} \big)_{}^{-1}}$.

\section*{Appendix~C: Convergence of the Second Stage Processing}
At the \hbox{$q$-th} iteration of the second stage processing, we
denote the RS precoder matrix as $\textbf{F}_{\!R\,}^{}[\, q \,]$,
the BS equalizer matrix as $\textbf{V}_{\!B\,}^{}[\, q \,]$, the
\hbox{$k$-th} MS equalizer matrix as $\textbf{V}_{k\,}^{}[\, q \,]$,
and the minimum weighted per stream SINR as $\gamma[\, q \,]$.
Moreover, we denote as $\gamma_{\,U}^{(k,\,l)}\big\{
\textbf{F}_{\!R\,}^{}[\, a \,],\! \textbf{V}_{\!B\,}^{}[\, b \,]
\big\}$ and $\gamma_{D}^{(k,\,l)}\big\{ \textbf{F}_{\!R\,}^{}[\, a
\,],\! \textbf{V}_{k\,}^{}[\, b \,] \big\}$ the UL and DL per stream
SINR given $\textbf{F}_{\!R\,}^{}[\, a \,]$,
$\textbf{V}_{\!B\,}^{}[\, b \,]$, and $\textbf{V}_{k\,}^{}[\, b
\,]$. We show that each iteration of the second stage processing
monotonically increases the minimum weighted per stream SINR.

In Step~2.1, given the BS and MS equalizer matrices $\big\{
\textbf{V}_{\!B\,}^{}[\, q-1 \,], \{ \textbf{V}_{k\,}^{}[\, q-1 \,]
\}_{k=1}^{K} \big\}$, we solve for the RS precoder matrix
$\textbf{F}_{\!R\,}^{}[\, q \,]$ to improve the minimum weighted per
stream SINR, i.e.,
\begin{IEEEeqnarray*}{l}
\gamma_{\,0}^{} \!=\! \min_{\substack{\forall\, k \,\in\,
\mathcal{K}\\\forall\, l \,\in\, \mathcal{L}_{k}^{}}}
\textstyle\big\{ \big( [\, \boldsymbol{\omega}_{U}^{(k)}
\,]_{(l)}^{} \big)_{}^{-1} \gamma_{\,U}^{(k,\,l)}\big\{
\textbf{F}_{\!R\,}^{}[\, q \,],\! \textbf{V}_{\!B\,}^{}[\, q\!-\!1
\,] \big\}, \big( [\, \boldsymbol{\omega}_{\!D}^{(k)} \,]_{(l)}^{}
\big)_{}^{-1} \gamma_{D}^{(k,\,l)}\big\{ \textbf{F}_{\!R\,}^{}[\, q
\,],\! \textbf{V}_{k\,}^{}[\, q\!-\!1 \,] \big\}
\big\}\IEEEyesnumber\label{Eqn:Convergence2_1}\\
\ge \min_{\substack{\forall\, k \,\in\, \mathcal{K}\\\forall\, l
\,\in\, \mathcal{L}_{k}^{}}} \textstyle\big\{ \big( [\,
\boldsymbol{\omega}_{U}^{(k)} \,]_{(l)}^{} \big)_{}^{-1}
\gamma_{\,U}^{(k,\,l)}\big\{ \textbf{F}_{\!R\,}^{}[\, q\!-\!1 \,],\!
\textbf{V}_{\!B\,}^{}[\, q\!-\!1 \,] \big\}, \big( [\,
\boldsymbol{\omega}_{\!D}^{(k)} \,]_{(l)}^{} \big)_{}^{-1}
\gamma_{D}^{(k,\,l)}\big\{ \textbf{F}_{\!R\,}^{}[\, q\!-\!1 \,],\!
\textbf{V}_{k\,}^{}[\, q\!-\!1 \,] \big\} \big\}\;\;\\
= \gamma[\, q-1 \,].
\end{IEEEeqnarray*}

In Step~2.2 and Step~2.3, given the RS precoder matrix
$\textbf{F}_{\!R\,}^{}[\, q \,]$, we solve for the BS and MS
equalizer matrices $\big\{ \textbf{V}_{\!B\,}^{}[\, q \,], \{
\textbf{V}_{k\,}^{}[\, q \,] \}_{k=1}^{K} \big\}$ to improve the per
stream SINRs, i.e.,
\begin{IEEEeqnarray*}{l}
\gamma_{\,U}^{(k,\,l)}\big\{ \textbf{F}_{\!R\,}^{}[\, q \,],\!
\textbf{V}_{\!B\,}^{}[\, q \,] \big\} \!\!\ge\!
\gamma_{\,U}^{(k,\,l)}\big\{ \textbf{F}_{\!R\,}^{}[\, q \,],\!
\textbf{V}_{\!B\,}^{}[\, q\!-\!1 \,] \big\}, \;
\gamma_{D}^{(k,\,l)}\big\{ \textbf{F}_{\!R\,}^{}[\, q \,],\!
\textbf{V}_{k\,}^{}[\, q \,] \big\} \!\!\ge\!
\gamma_{D}^{(k,\,l)}\big\{ \textbf{F}_{\!R\,}^{}[\, q \,],\!
\textbf{V}_{k\,}^{}[\, q\!-\!1 \,] \big\};
\end{IEEEeqnarray*}
hence the minimum weighted per stream SINR satisfies
\begin{IEEEeqnarray*}{l}
\gamma[\, q \,] \!=\! \min_{\substack{\forall\, k \,\in\,
\mathcal{K}\\\forall\, l \,\in\, \mathcal{L}_{k}^{}}}
\textstyle\big\{ \big( [\, \boldsymbol{\omega}_{U}^{(k)}
\,]_{(l)}^{} \big)_{}^{-1} \gamma_{\,U}^{(k,\,l)}\big\{
\textbf{F}_{\!R\,}^{}[\, q \,],\! \textbf{V}_{\!B\,}^{}[\, q \,]
\big\}, \big( [\, \boldsymbol{\omega}_{\!D}^{(k)} \,]_{(l)}^{}
\big)_{}^{-1} \gamma_{D}^{(k,\,l)}\big\{ \textbf{F}_{\!R\,}^{}[\, q
\,],\! \textbf{V}_{k\,}^{}[\, q \,] \big\} \big\} \!\ge\!
\gamma_{\,0}^{}.\;\;\;\;\;\;\IEEEyesnumber\label{Eqn:Convergence2_2}
\end{IEEEeqnarray*}

It follows from \eqref{Eqn:Convergence2_1} and
\eqref{Eqn:Convergence2_2} that the weighted minimum per stream SINR
increases with each iteration, i.e., $\gamma[\, q \,] \ge
\gamma_{\,0}^{} \ge \gamma[\, q-1 \,]$, and the second stage
processing must converge.

\section*{Appendix~D: Derivation of Algorithm~\ref{Algorithm:RsPrecoder}}
For ease of exposition, we express the constraints in
Problem~$\mathcal{B}_{R}^{}$ in vector form. The transmit power
constraint \eqref{Eqn:ProbSndStagePrecoderStdProbPowerConstraint}
can be expressed as
\begin{IEEEeqnarray*}{l}
\sqrt{P_{R}^{}} \ge \sqrt{ ||\, \textbf{F}_{\!R\,}^{}
\boldsymbol{\Phi} ||^{\,2\,} + ||\, \textbf{F}_{\!R\,}^{}
\boldsymbol{\Psi} ||^{\,2\,} + N_{\,0}^{} ||\, \textbf{F}_{\!R\,}^{}
\textbf{A}_{R}^{} \,||^{\,2\,} } \stackrel{(a)}{=} ||\,
\boldsymbol{\rho}
\,||,\IEEEyesnumber\label{Eqn:ProbSndStagePrecoderStdProbPowerConstraintDerive1}
\end{IEEEeqnarray*}
\begin{IEEEeqnarray*}{l}
\boldsymbol{\rho} \triangleq \Big( \big( \big(\, \boldsymbol{\Phi} (
\boldsymbol{\Phi} )_{}^{\:\!\!\dag\,} \!+\! \boldsymbol{\Psi} (
\boldsymbol{\Psi} )_{}^{\:\!\!\dag\,} \!+\! N_{\,0}^{}\,
\textbf{A}_{R\,}^{} ( \textbf{A}_{R\,}^{} )_{}^{\:\!\!\dag}
\,\big)_{}^{\!1\!/\:\!2\,} \big)_{}^{T} \!\otimes
\textbf{I}_{N_{R}^{}}^{} \Big) \, \textrm{vec} (\,
\textbf{F}_{\!R}^{} \,)
\IEEEyesnumber\label{Eqn:ProbSndStagePrecoderStdProbPowerConstraintsDerive}
\end{IEEEeqnarray*}
and equality (a) follows from the Kronecker product property
$\textrm{vec} ( \textbf{X} \textbf{Y} \textbf{Z} ) = ( ( \textbf{Z}
)_{}^{T} \!\otimes\! \textbf{X} ) \, \textrm{vec} ( \textbf{Y} )$.
With some algebraic manipulations and using the aforementioned
Kronecker product property, the UL SINR constraints
\eqref{Eqn:ProbSndStagePrecoderStdProbSinrConstraints} can be
expressed as
\begin{IEEEeqnarray*}{l}
\big( [\, \boldsymbol{\omega}_{U}^{(k)} \,]_{(l)}^{} \big)_{}^{-1}
\gamma_{\,U}^{(k,\,l)} \ge \gamma_{\,0}^{} \Leftrightarrow
\alpha_{\,U}^{\,(k,\,l)} \geq \big|\big| \big[\,
\boldsymbol{\beta}_{\,U}^{\,(k,\,l)} ; \delta_{\,U}^{\,(k,\,l)}
\big]
\big|\big|,\IEEEyesnumber\label{Eqn:ProbSndStagePrecoderStdProbSinrConstraintsUlDerive1}
\end{IEEEeqnarray*}
\begin{IEEEeqnarray*}{l}
\!\!\!\!\!\!\!\!\alpha_{\,U}^{\,(k,\,l)} \triangleq \sqrt{1 \!+\!
\big( [\, \boldsymbol{\omega}_{U}^{(k)} \,]_{(l)}^{} \gamma_{\,0}^{}
\big)_{}^{-1}} \, |\, [\, \boldsymbol{\phi}_{}^{\,(k)}
\,]_{(l)}^{}\,| \, |\, [\, \textbf{V}_{\!B}^{\,(k)} \,]_{(l,\,:)}^{}
( \textbf{H}_{R\,,B}^{} )_{}^{T} [\, \textbf{F}_{\!R}^{\,(k)}
\,]_{(:,\,l)}^{} \,|,\; \delta_{\,U}^{\,(k,\,l)} \triangleq
\sqrt{N_{\,0}^{}}\, ||\, [\, \textbf{V}_{\!B}^{\,(k)}
\,]_{(l,\,:)}^{} \,||,\\
\!\!\!\!\!\!\!\!\boldsymbol{\beta}_{\,U}^{\,(k,\,l)} \triangleq
\Big( \big( \big(\, \boldsymbol{\Phi} ( \boldsymbol{\Phi}
)_{}^{\:\!\!\dag\,} \!+\! N_{\,0}^{}\, \textbf{A}_{R\,}^{} (
\textbf{A}_{R\,}^{} )_{}^{\:\!\!\dag} \,\big)_{}^{\!1\!/\:\!2\,}
\big)_{}^{T} \!\otimes [\, \textbf{V}_{\!B}^{\,(k)} \,]_{(l,\,:)}^{}
( \textbf{H}_{R\,,B}^{} )_{}^{T\,} \Big) \, \textrm{vec} (\,
\textbf{F}_{\!R}^{} \,).
\end{IEEEeqnarray*}
In the same manner, the DL SINR constraints can be expressed as
\begin{IEEEeqnarray*}{l}
\big( [\, \boldsymbol{\omega}_{\!D}^{(k)} \,]_{(l)}^{} \big)_{}^{-1}
\! \gamma_{D}^{(k,\,l)} \ge \gamma_{\,0}^{} \Leftrightarrow
\alpha_{D}^{\,(k,\,l)} \geq \big|\big| \big[\,
\boldsymbol{\beta}_{D}^{\,(k,\,l)} ; \delta_{D}^{\,(k,\,l)} \big]
\big|\big|,\IEEEyesnumber\label{Eqn:ProbSndStagePrecoderStdProbSinrConstraintsDlDerive1}
\end{IEEEeqnarray*}
\begin{IEEEeqnarray*}{l}
\!\!\!\!\!\!\!\!\alpha_{D}^{\,(k,\,l)} \triangleq \sqrt{1 \!+\!
\big( [\, \boldsymbol{\omega}_{\!D}^{(k)} \,]_{(l)}^{}
\gamma_{\,0}^{} \big)_{}^{-1}} \, |\, [\,
\boldsymbol{\psi}_{}^{\,(k)} \,]_{(l)}^{}\,| \, |\, [\,
\textbf{V}_{k}^{} \,]_{(l,\,:)}^{} ( \textbf{H}_{R\,,\,k}^{}
)_{}^{T} [\, \textbf{F}_{\!R}^{\,(k)} \,]_{(:,\,l)}^{} \,|,\;
\delta_{D}^{\,(k,\,l)} \triangleq \sqrt{N_{\,0}^{}}\, ||\, [\,
\textbf{V}_{k}^{} \,]_{(l,\,:)}^{} \,||,\\
\!\!\!\!\!\!\!\!\boldsymbol{\beta}_{D}^{\,(k,\,l)} \triangleq \Big(
\big( \big(\, \boldsymbol{\Psi} ( \boldsymbol{\Psi}
)_{}^{\:\!\!\dag\,} \!+\! \widetilde{\boldsymbol{\Phi}}_{k}^{} (
\widetilde{\boldsymbol{\Phi}}_{k}^{} )_{}^{\:\!\!\dag\,} \!+\!
N_{\,0}^{}\, \textbf{A}_{R\,}^{} ( \textbf{A}_{R\,}^{}
)_{}^{\:\!\!\dag} \,\big)_{}^{\!1\!/\:\!2\,} \big)_{}^{T} \!\otimes
[\, \textbf{V}_{k}^{} \,]_{(l,\,:)}^{} ( \textbf{H}_{R\,,\,k}^{}
)_{}^{T} \Big) \, \textrm{vec} (\, \textbf{F}_{\!R}^{} \,).
\end{IEEEeqnarray*}
Therefore, Problem~$\mathcal{B}_{R}^{}$ can be \emph{equivalently}
expressed as\\\indent\;\;$\big\{ \textbf{F}_{\!R}^{\,\star},
\gamma_{0}^{} \big\} \!:= \mathcal{B}_{R}^{}\big\{
\textbf{V}_{\!B}^{}, \{ \textbf{V}_{k}^{} \}_{k=1}^{K},
\textbf{W}_{\!B}^{\,\star}, \{\:\! \textbf{W}_{k}^{\,\star\,}
\}_{k=1}^{K}, \textbf{A}_{R}^{\!\star}, P_{R}^{}, \{
\boldsymbol{\omega}_{U}^{(k)}\!, \boldsymbol{\omega}_{\!D}^{(k)\,}
\}_{k=1}^{K} \big\}$\nopagebreak[4]
\begin{subnumcases}{\label{Eqn:ProbSndStagePrecoderStdProbAlt}}
\!\underset{\substack{\textbf{F}_{\!R\,}^{},\, \gamma_{0}^{}}}{\max}
\!\!\!\!\!\!\!\!\!\! & \!\!\!\!\!\!\!\!
$\gamma_{\,0}^{}$\label{Eqn:ProbSndStagePrecoderStdProbAltObjective}\\
\;\textrm{s.t.} & \!\!\!\!\!\!\!\! $\alpha_{\,U}^{\,(k,\,l)} \geq
\big|\big| \big[\, \boldsymbol{\beta}_{\,U}^{\,(k,\,l)} ;
\delta_{\,U}^{\,(k,\,l)} \big] \big|\big|, \;\;
\alpha_{D}^{\,(k,\,l)} \geq \big|\big| \big[\,
\boldsymbol{\beta}_{D}^{\,(k,\,l)} ; \delta_{D}^{\,(k,\,l)} \big]
\big|\big|, \;\;\forall k \in \mathcal{K}, \forall l \in
\mathcal{L}_{\,k}^{},$\;\;\label{Eqn:ProbSndStagePrecoderStdProbAltSinrConstraints}\\
& \!\!\!\!\!\!\!\! $\sqrt{P_{R}^{}} \ge ||\, \boldsymbol{\rho}
\,||.$\label{Eqn:ProbSndStagePrecoderStdProbAltPowerConstraint}
\end{subnumcases}

Note that the transmit power constraint
\eqref{Eqn:ProbSndStagePrecoderStdProbAltPowerConstraint} is convex
in the RS precoder matrix $\textbf{F}_{\!R}^{}$, but the SINR
constraints
\eqref{Eqn:ProbSndStagePrecoderStdProbAltSinrConstraints} are
non-convex in $\textbf{F}_{\!R}^{}$ and the minimum weighted per
stream SINR slack variable $\gamma_{\,0}^{}$ since
$\alpha_{\,U}^{\,(k,\,l)}$ and $\alpha_{D}^{\,(k,\,l)}$ are not
affine in $\textbf{F}_{\!R}^{}$ and $\gamma_{\,0}^{}$.

In order to obtain a mathematically tractable solution to
Problem~$\mathcal{B}_{R}^{}$, we cast the SINR constraints as convex
functions in $\textbf{F}_{\!R}^{}$ by \emph{tightening} these
constraints as follows. We define
\begin{IEEEeqnarray*}{l}
\widetilde{\alpha}_{\,U}^{\,(k,\,l)} \triangleq \sqrt{1 \!+\! \big(
[\, \boldsymbol{\omega}_{U}^{(k)} \,]_{(l)}^{} \gamma_{\,0}^{}
\big)_{}^{-1}} \, |\, [\, \boldsymbol{\phi}_{}^{\,(k)}
\,]_{(l)}^{}\,| \, \Re \big( [\, \textbf{V}_{\!B}^{\,(k)}
\,]_{(l,\,:)}^{} ( \textbf{H}_{R\,,B}^{} )_{}^{T} [\,
\textbf{F}_{\!R}^{\,(k)} \,]_{(:,\,l)}^{}
\big),\IEEEyesnumber\label{Eqn:ProbSndStagePrecoderStdProbSinrConstraintsUlDlDerive2aLbU}\\
\widetilde{\alpha}_{D}^{\,(k,\,l)} \triangleq \sqrt{1 \!+\! \big(
[\, \boldsymbol{\omega}_{\!D}^{(k)} \,]_{(l)}^{} \gamma_{\,0}^{}
\big)_{}^{-1}} \, |\, [\, \boldsymbol{\psi}_{}^{\,(k)}
\,]_{(l)}^{}\,| \, \Re \big( [\, \textbf{V}_{k}^{} \,]_{(l,\,:)}^{}
( \textbf{H}_{R\,,\,k}^{} )_{}^{T} [\, \textbf{F}_{\!R}^{\,(k)}
\,]_{(:,\,l)}^{}
\big),\IEEEyesnumber\label{Eqn:ProbSndStagePrecoderStdProbSinrConstraintsUlDlDerive2aLbD}
\end{IEEEeqnarray*}
which are affine in $\textbf{F}_{\!R}^{}$. Since $|\, [\,
\textbf{V}_{\!B}^{\,(k)} \,]_{(l,\,:)}^{} ( \textbf{H}_{R\,,B}^{}
)_{}^{T} [\, \textbf{F}_{\!R}^{\,(k)} \,]_{(:,\,l)}^{} \,| \ge \Re (
[\, \textbf{V}_{\!B}^{\,(k)} \,]_{(l,\,:)}^{} (
\textbf{H}_{R\,,B}^{} )_{}^{T} [\, \textbf{F}_{\!R}^{\,(k)}
\,]_{(:,\,l)}^{} )$ and $|\, [\, \textbf{V}_{k}^{} \,]_{(l,\,:)}^{}
( \textbf{H}_{R\,,\,k}^{} )_{}^{T} [\, \textbf{F}_{\!R}^{\,(k)}
\,]_{(:,\,l)}^{} \,| \ge \Re ( [\, \textbf{V}_{k}^{}
\,]_{(l,\,:)}^{} ( \textbf{H}_{R\,,\,k}^{} )_{}^{T} [\,
\textbf{F}_{\!R}^{\,(k)} \,]_{(:,\,l)}^{} )$,
$\alpha_{\,U}^{\,(k,\,l)}$ is lower bounded by
$\widetilde{\alpha}_{\,U}^{\,(k,\,l)}$ and $\alpha_{D}^{\,(k,\,l)}$
is lower bounded by $\widetilde{\alpha}_{D}^{\,(k,\,l)}$. If we
\emph{tighten} the SINR constraints as
\begin{IEEEeqnarray*}{l}
\widetilde{\alpha}_{\,U}^{\,(k,\,l)} \geq \big|\big| \big[\,
\boldsymbol{\beta}_{\,U}^{\,(k,\,l)} ; \delta_{\,U}^{\,(k,\,l)}
\big] \big|\big|, \;\; \widetilde{\alpha}_{D}^{\,(k,\,l)} \geq
\big|\big| \big[\, \boldsymbol{\beta}_{D}^{\,(k,\,l)} ;
\delta_{D}^{\,(k,\,l)} \big]
\big|\big|,\IEEEyesnumber\label{Eqn:ProbSndStagePrecoderStdProbSinrConstraintsUlDl2aLb}
\end{IEEEeqnarray*}
then the SINR constraints degenerate into convex functions in
$\textbf{F}_{\!R}^{}$. Yet, the SINR constraints are still
non-convex in $\gamma_{\,0}^{}$. Altogether, the precoder design
problem is quasi-convex and it can be solved using the bisection
method \cite[Section~4.2.5]{Bok:Convex_optimization:Boyd}.
Specifically, we define the SOCP feasibility problem of designing
$\textbf{F}_{\!R}^{}$ that achieves a \emph{target} value of
$\gamma_{\,0}^{}$ as\\\indent\;\;$\textbf{F}_{\!R}^{\,\star} :=
\widetilde{\mathcal{B}}_{R}^{}\big\{ \gamma_{\,0}^{},\!
\textbf{V}_{\!B}^{}, \{ \textbf{V}_{k}^{} \}_{k=1}^{K},
\textbf{W}_{\!B}^{\,\star}, \{\:\! \textbf{W}_{k}^{\,\star\,}
\}_{k=1}^{K}, \textbf{A}_{R}^{\!\star}, P_{R}^{}, \{
\boldsymbol{\omega}_{U}^{(k)}\!, \boldsymbol{\omega}_{\!D}^{(k)\,}
\}_{k=1}^{K} \big\}$\nopagebreak[4]
\begin{subnumcases}{\label{Eqn:ProbSndStagePrecoderFeasibilityProbAppendix}}
\textrm{find} & \!\!\!\!\!\!\!\!\!\!
$\textbf{F}_{\!R}^{}$\label{Eqn:ProbSndStagePrecoderFeasibilityProbObjectiveAppendix}\\
\,\textrm{s.t.} & \!\!\!\!\!\!\!\!\!\!\! $\big[\,
\widetilde{\alpha}_{\,U}^{\,(k,\,l)} \,;\,
\boldsymbol{\beta}_{\,U}^{\,(k,\,l)} \,; \delta_{\,U}^{\,(k,\,l)}
\,\big] \!\succeq_{K}^{}\! 0, \;\; \big[\,
\widetilde{\alpha}_{D}^{\,(k,\,l)} \,;\,
\boldsymbol{\beta}_{D}^{\,(k,\,l)} \,; \delta_{D}^{\,(k,\,l)}
\,\big] \!\succeq_{K}^{}\! 0, \;\;\forall k \in \mathcal{K}, \forall
l \in
\mathcal{L}_{\,k}^{},\;$\label{Eqn:ProbSndStagePrecoderFeasibilityProbSinrConstraintsAppendix}\\
& \!\!\!\!\!\!\!\!\!\!\! $\big[\, \sqrt{P_{R}^{}} \;;\,
\boldsymbol{\rho} \,\big] \!\succeq_{K}^{}\!
0,$\label{Eqn:ProbSndStagePrecoderFeasibilityProbPowerConstraintAppendix}
\end{subnumcases}
where
\eqref{Eqn:ProbSndStagePrecoderFeasibilityProbSinrConstraintsAppendix}
and
\eqref{Eqn:ProbSndStagePrecoderFeasibilityProbPowerConstraintAppendix}
correspond to
\eqref{Eqn:ProbSndStagePrecoderStdProbSinrConstraintsUlDl2aLb} and
\eqref{Eqn:ProbSndStagePrecoderStdProbPowerConstraintDerive1}
expressed as SOC constraints, respectively. Starting with an
interval that is expected to contain the \emph{optimum} value of
$\gamma_{\,0}^{}$, we repeatedly bisect the interval and select the
subinterval in which Problem~$\widetilde{\mathcal{B}}_{R}^{}$ is
feasible until $\gamma_{\,0}^{}$ converges.

%%%%%%%%%%%%%%%%%%%%%%%%%%%%%%%%%%%%%%%%%%%%%%%%%%
% References
%%%%%%%%%%%%%%%%%%%%%%%%%%%%%%%%%%%%%%%%%%%%%%%%%%
\bibliographystyle{IEEEtran}
\bibliography{IEEEabrv,myBibFile}

\begin{algorithm}
\caption{Top-Level Algorithm}\label{Algorithm:TopLevel}
\begin{algorithmic}[0]
\State \textbf{Outputs}: $\textbf{W}_{\!B}^{\,\star}, \{
\textbf{W}_{k}^{\,\star\,} \}_{k=1}^{K},\!
\textbf{W}_{\!R}^{\,\star}, \textbf{V}_{\!B}^{\,\star}, \{
\textbf{V}_{k}^{\,\star\,} \}_{k=1}^{K}$\;\;\; \textbf{Inputs}:
$P_{B}^{}, \{ P_{\,k}^{} \}_{k=1}^{K}, P_{R}^{}, \{
\boldsymbol{\omega}_{U}^{(k)}\!, \boldsymbol{\omega}_{\!D}^{(k)\,}
\}_{k=1}^{K}$ \State \emph{First Stage Processing} \State ~~
\textbf{Step 1}: Solve $\big\{ \textbf{W}_{\!B}^{\,\star}, \{\:\!
\textbf{W}_{k}^{\,\star\,} \}_{k=1}^{K}, \textbf{A}_{R}^{\!\star}
\big\} := \mathcal{M}\big\{ P_{B}^{}, \{ P_{\,k}^{} \}_{k=1}^{K}, \{
\boldsymbol{\omega}_{U}^{(k)}\!, \boldsymbol{\omega}_{\!D}^{(k)\,}
\}_{k=1}^{K} \big\}$ using Lemma~\ref{Lemma:BeamSelPowerAllocation}.
\State \emph{Second Stage Processing} \State ~~ \textbf{Step 2.\,0}:
Initialize $\textbf{V}_{\!B}^{} = ( \textbf{W}_{\!B}^{\,\star\,}
)_{}^{T}$ and $\textbf{V}_{k}^{} = ( \textbf{W}_{k}^{\,\star\,}
)_{}^{T}$, $\forall k \in \mathcal{K}$. \State ~~ \textbf{Repeat}
\State ~~~~~ \textbf{Step 2.\,1}: Solve $\big\{
\textbf{F}_{\!R}^{\,\star}, \gamma_{0}^{} \big\} \!:=
\mathcal{B}_{R}^{}\big\{ \textbf{V}_{\!B}^{}, \{ \textbf{V}_{k}^{}
\}_{k=1}^{K}, \textbf{W}_{\!B}^{\,\star}, \{\:\!
\textbf{W}_{k}^{\,\star\,} \}_{k=1}^{K}, \textbf{A}_{R}^{\!\star},
P_{R}^{}, \{ \boldsymbol{\omega}_{U}^{(k)}\!,
\boldsymbol{\omega}_{\!D}^{(k)\,} \}_{k=1}^{K} \big\}$ \State
~~~~~~~~~~~~~~~~~~using Algorithm~\ref{Algorithm:RsPrecoder}. \State
~~~~~ \textbf{Step 2.\,2}: Solve $\textbf{V}_{\!B}^{\,\star} :=
\mathcal{B}_{B}^{}\big\{ \textbf{F}_{\!R}^{},
\textbf{W}_{\!B}^{\,\star}, \{\:\! \textbf{W}_{k}^{\,\star\,}
\}_{k=1}^{K}, \textbf{A}_{R}^{\!\star}, \{
\boldsymbol{\omega}_{U}^{(k)} \}_{k=1}^{K} \big\}$ using
\eqref{Eqn:BsEqualizerSoln}. \State ~~~~~ \textbf{Step 2.\,3}: Solve
$\textbf{V}_{k}^{\,\star\,} := \mathcal{B}_{k}^{}\big\{
\textbf{F}_{\!R}^{}, \textbf{W}_{\!B}^{\,\star}, \{\:\!
\textbf{W}_{k}^{\,\star\,} \}_{k=1}^{K}, \textbf{A}_{R}^{\!\star},
\boldsymbol{\omega}_{\!D}^{(k)} \big\}$, $\forall k \in
\mathcal{K}$, using \eqref{Eqn:MsEqualizerSoln}. \State ~~
\textbf{Until} the minimum weighted per stream SINR converges.
\State \textbf{Step 3}: Set $\textbf{W}_{\!R}^{\,\star} =
\textbf{F}_{\!R\,}^{\,\star} \textbf{A}_{R\,}^{\!\star}$.
\end{algorithmic}
\end{algorithm}
\vspace{-2em}
\begin{algorithm}
\caption{RS Precoder Matrix
Optimization}\label{Algorithm:RsPrecoder}
\begin{algorithmic}[0]
\State \textbf{Outputs}: $\textbf{F}_{\!R}^{\,\star}$\;\;\;
\textbf{Inputs}: $\textbf{V}_{\!B}^{}, \{ \textbf{V}_{k}^{}
\}_{k=1}^{K}, \textbf{W}_{\!B}^{\,\star}, \{\:\!
\textbf{W}_{k}^{\,\star\,} \}_{k=1}^{K}, \textbf{A}_{R}^{\!\star},
P_{R}^{}, \{ \boldsymbol{\omega}_{U}^{(k)}\!,
\boldsymbol{\omega}_{\!D}^{(k)\,} \}_{k=1}^{K}$ \State \textbf{Step
0}: Initialize $\displaystyle \gamma_{\textrm{min}}^{} =\!
\min_{\substack{\forall\, k \,\in\, \mathcal{K}, \forall\, l \,\in\,
\mathcal{L}_{k}^{}}} \!\!\textstyle\big\{ \big( [\,
\boldsymbol{\omega}_{U}^{(k)} \,]_{(l)}^{} \big)_{}^{-1}
\gamma_{\,U}^{(k,\,l)}, \big( [\, \boldsymbol{\omega}_{\!D}^{(k)}
\,]_{(l)}^{} \big)_{}^{-1} \gamma_{D}^{(k,\,l)} \big\}$ \State
~~~~~~~~\, and $\displaystyle \gamma_{\textrm{max}}^{} =\!
\max_{\substack{\forall\, k \,\in\, \mathcal{K}, \forall\, l \,\in\,
\mathcal{L}_{k}^{}}} \!\!\textstyle\big\{ \big( [\,
\boldsymbol{\omega}_{U}^{(k)} \,]_{(l)}^{} \big)_{}^{-1}
\gamma_{\,U}^{(k,\,l)}, \big( [\, \boldsymbol{\omega}_{\!D}^{(k)}
\,]_{(l)}^{} \big)_{}^{-1} \gamma_{D}^{(k,\,l)} \big\}$. \State
\textbf{Repeat} \Comment{Bisection Method} \State ~~ \textbf{Step
1}: Set the target minimum weighted per stream SINR $\gamma_{\,0}^{}
= ( \gamma_{\textrm{min}}^{} \!+ \gamma_{\textrm{max}\,}^{} ) / 2$.
\State ~~~~~~~~~~~\, Solve $\textbf{F}_{\!R}^{\,\star} :=
\widetilde{\mathcal{B}}_{R}^{}\big\{ \gamma_{\,0}^{},\!
\textbf{V}_{\!B}^{}, \{ \textbf{V}_{k}^{} \}_{k=1}^{K},
\textbf{W}_{\!B}^{\,\star}, \{\:\! \textbf{W}_{k}^{\,\star\,}
\}_{k=1}^{K}, \textbf{A}_{R}^{\!\star}, P_{R}^{}, \{
\boldsymbol{\omega}_{U}^{(k)}\!, \boldsymbol{\omega}_{\!D}^{(k)\,}
\}_{k=1}^{K} \big\}$. \State ~~ \textbf{Step 2}: If
Problem~$\widetilde{\mathcal{B}}_{R}^{}$ is feasible, set
$\gamma_{\textrm{min}}^{} = \gamma_{\,0}^{}$; else set
$\gamma_{\textrm{max}\,}^{} = \gamma_{\,0}^{}$. \State
\textbf{Until} $\gamma_{\,0}^{}$ converges.
\end{algorithmic}
\end{algorithm}

\newpage

\begin{figure}[!ht]
\centering \subfloat[][]{\includegraphics[width=1.5in]{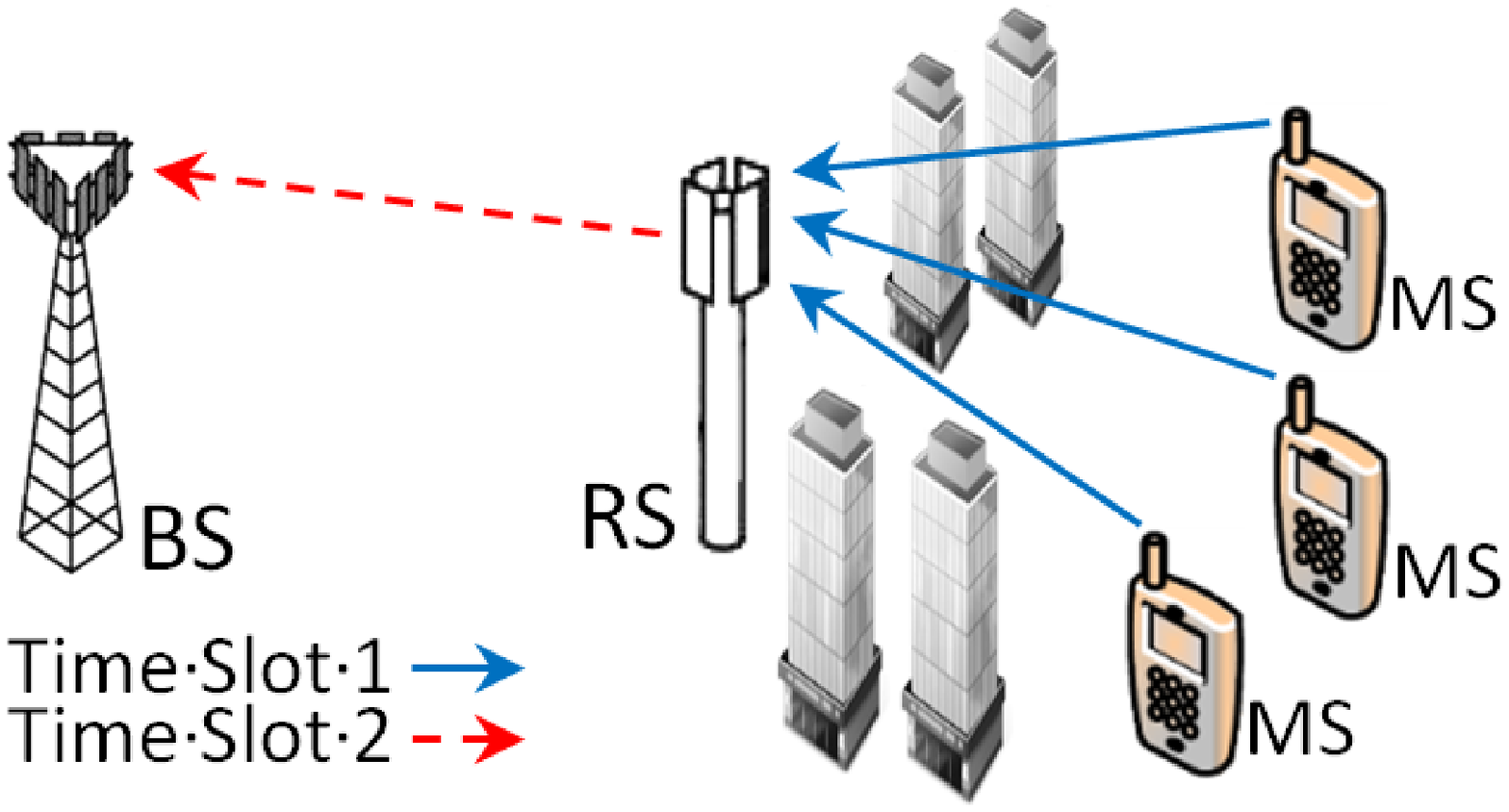}
\label{Fig:OneWayUl}}~~~~~
\subfloat[][]{\includegraphics[width=1.5in]{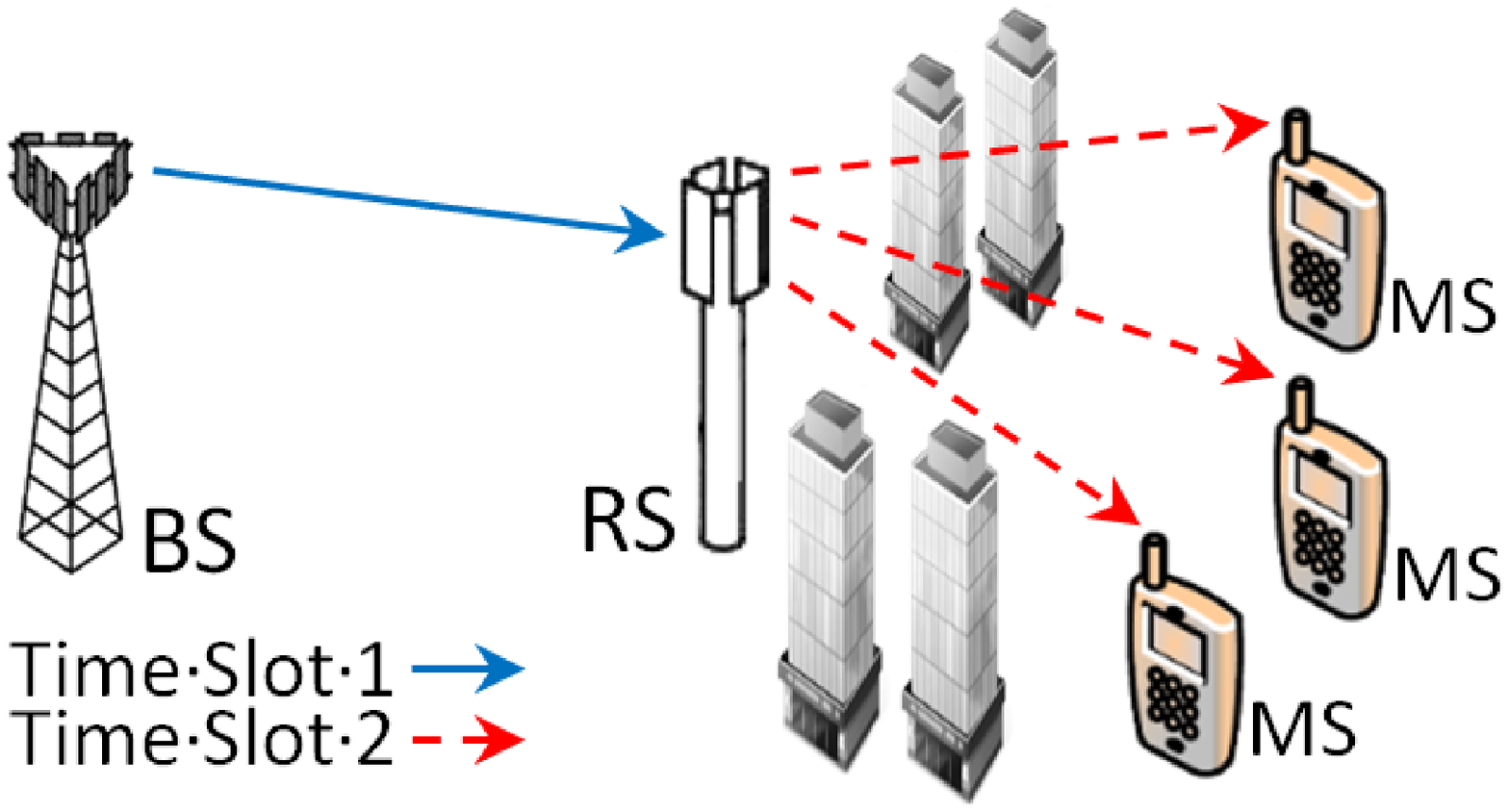}
\label{Fig:OneWayDl}}\\
\subfloat[][]{\includegraphics[width=1.5in]{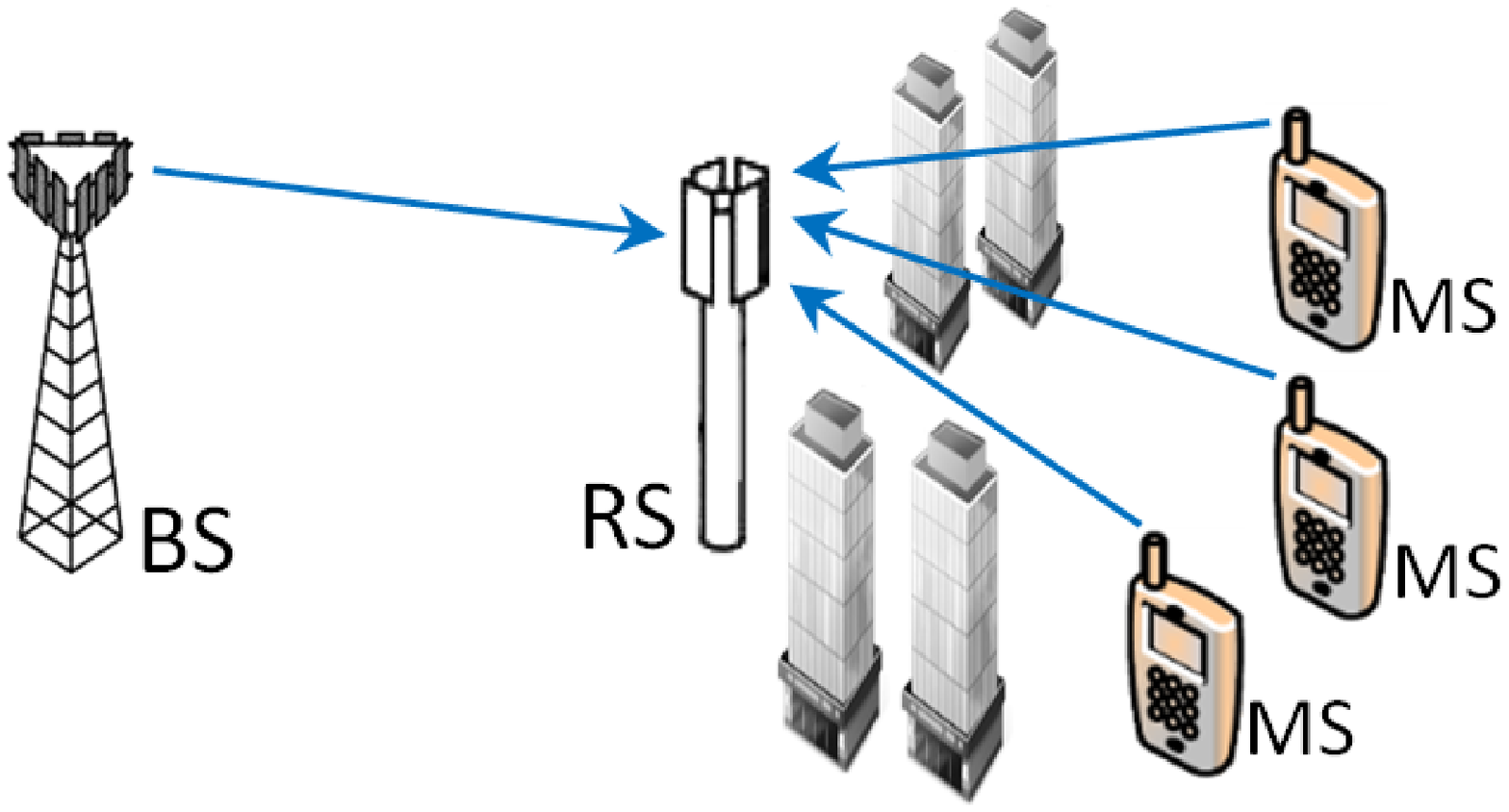}
\label{Fig:TwoWayMac}}~~~~~
\subfloat[][]{\includegraphics[width=1.5in]{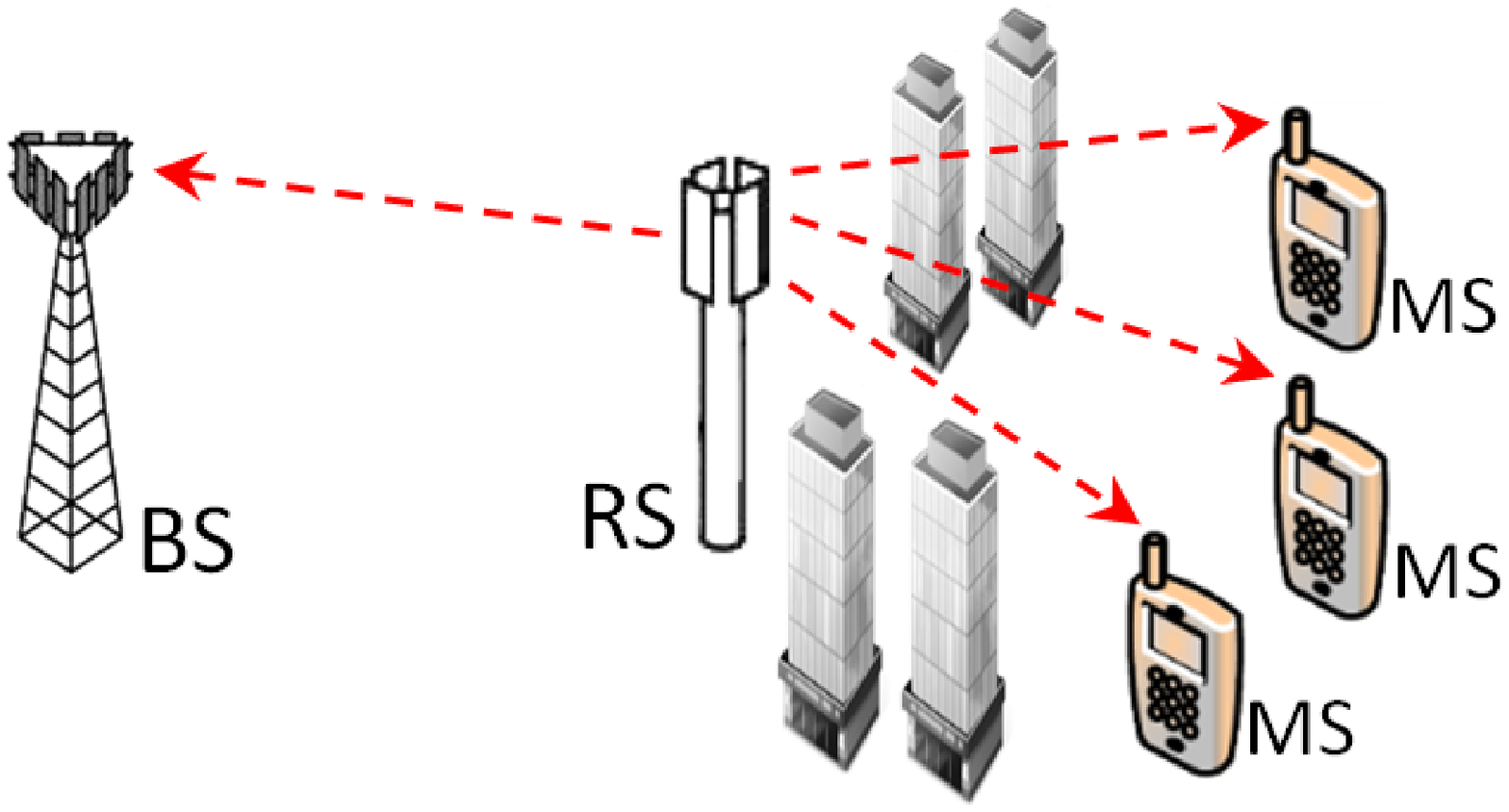}
\label{Fig:TwoWayBc}} \caption{Multi-user relay system. (a) One-way
transmission UL. (b) One-way transmission DL. (c) Two-way
transmission MAC phase. (d) Two-way transmission BC phase.}
\label{Fig:MultiUserRelaySystem}
\end{figure}

\begin{figure}[!ht]
\centering
\includegraphics[width = 3.5in]{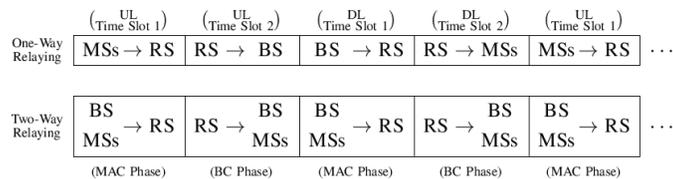}
\caption{Timing diagram for multi-user one- and two-way relaying in
TDD systems.}\label{Fig:TimeLine}
\end{figure}

\begin{figure}[!ht]
\centering
\includegraphics[width = 2.6in]{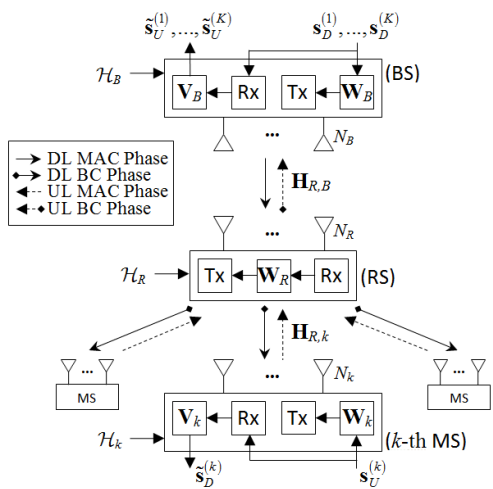}
\caption{System model of two-way relaying between one BS and $K$
MSs. The BS is equipped with $N_{B}^{}$ antennas, the RS is equipped
with $N_{R}^{}$ antennas, and the \hbox{$k$-th} MS is equipped with
$N_{k}^{}$ antennas. The BS and the \hbox{$k$-th} MS exchange
$L_{k}^{}$ data streams.}\label{Fig:SystemModel}
\end{figure}

\begin{figure}[!ht]
\centering
\includegraphics[width=4.5in]{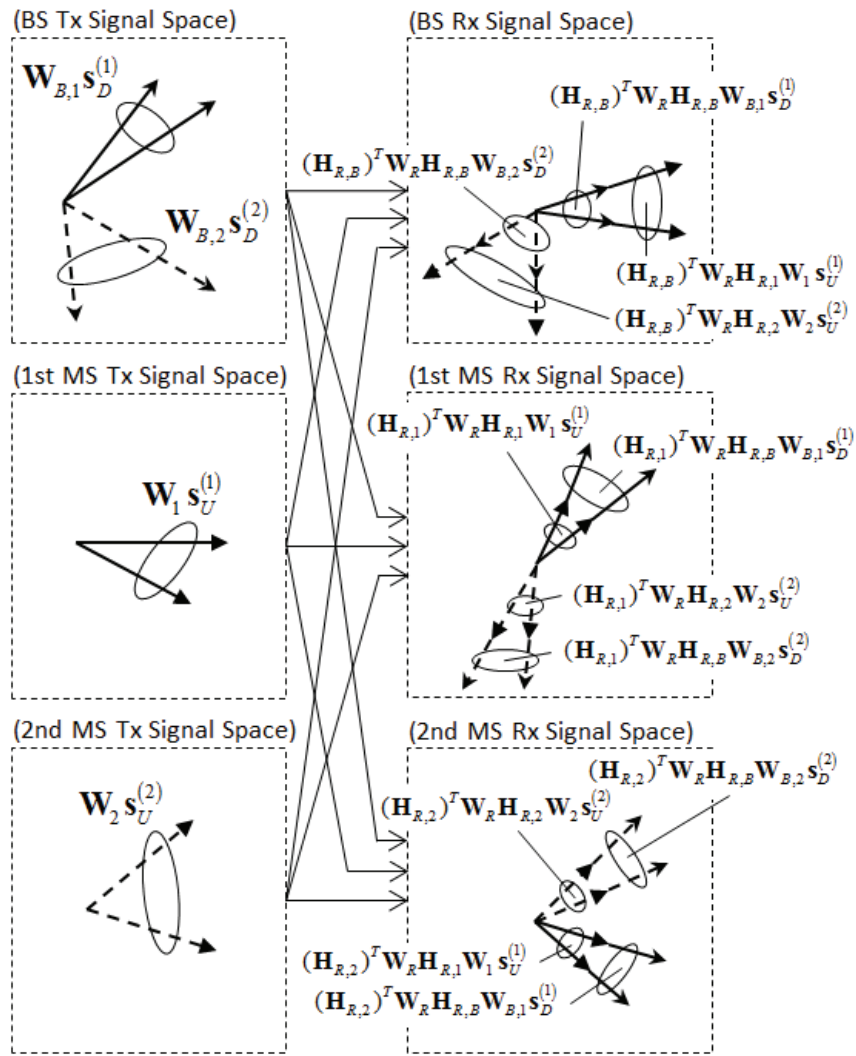}
\caption{Illustration of signal space alignment. The BS exchanges
two data streams with each of two MSs. At each node, desired signals
are aligned with the backward propagated
self-interference.}\label{Fig:SignalSpace}
\end{figure}

\begin{figure}[!ht]
\centering
\includegraphics[width = 4.5in]{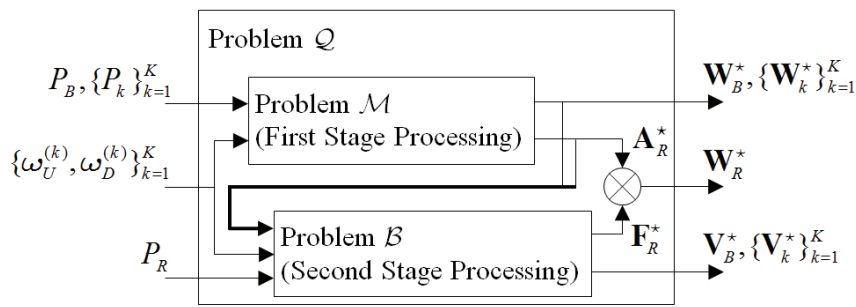}
\caption{Illustration of the two-stage transceiver design
algorithm.}\label{Fig:Problems}
\end{figure}

\begin{figure}[!ht]
\centering
\includegraphics[width = 4in]{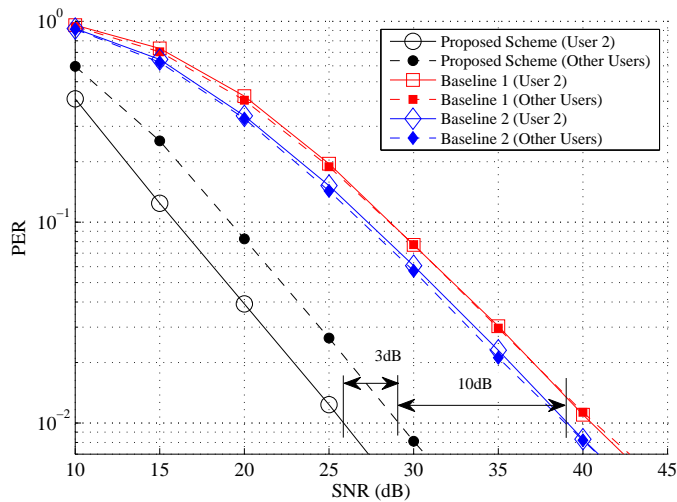}
\caption{Packet error rate versus SNR. The BS, the RS, and the MSs
are equipped with $N_{B}^{} = 4$, $N_{R}^{} = 4$, and $N_{k}^{} = 2$
antennas, respectively. The BS exchanges $L_{\,1}^{} = 2$,
$L_{\,2}^{} = 1$, and $L_{\,3}^{} = 1$ data streams with the MSs.
\hbox{User 2} has higher service priority than the other users: $[\,
\boldsymbol{\omega}_{U}^{(2)} \,]_{(1)}^{} = [\,
\boldsymbol{\omega}_{\!D}^{(2)} \,]_{(1)}^{} = 2$ and $[\,
\boldsymbol{\omega}_{U}^{(k)} \,]_{(l)}^{} = [\,
\boldsymbol{\omega}_{\!D}^{(k)} \,]_{(l)}^{} = 1$
otherwise.}\label{Fig:Per}
\end{figure}

\begin{figure}[!ht]
\centering
\includegraphics[width = 4in]{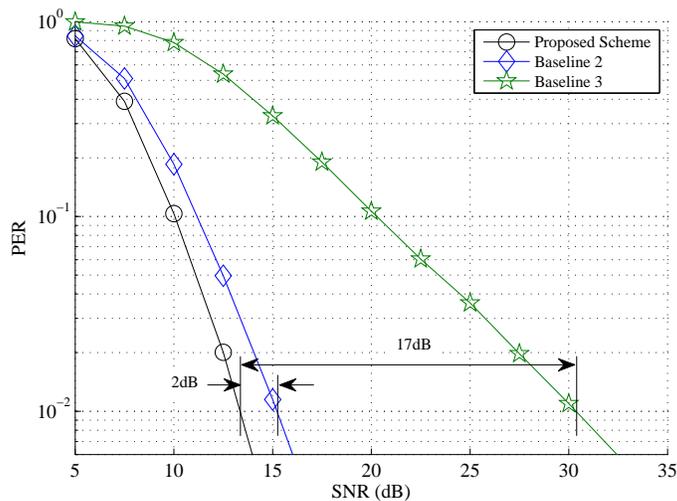}
\caption{Packet error rate versus SNR. The BS, the RS, and the MSs
are equipped with $N_{B}^{} = 4$, $N_{R}^{} = 8$, and $N_{k}^{} = 2$
antennas, respectively. The BS exchanges $L_{\,1}^{} = 2$,
$L_{\,2}^{} = 1$, and $L_{\,3}^{} = 1$ data streams with the MSs.
All users have the same service priority: $[\,
\boldsymbol{\omega}_{U}^{(k)} \,]_{(l)}^{} = [\,
\boldsymbol{\omega}_{\!D}^{(k)} \,]_{(l)}^{} = 1$, $\forall k,
l$.}\label{Fig:PerSdma}
\end{figure}

\begin{figure}[!ht]
\centering
\subfloat[][]{\includegraphics[width=4in]{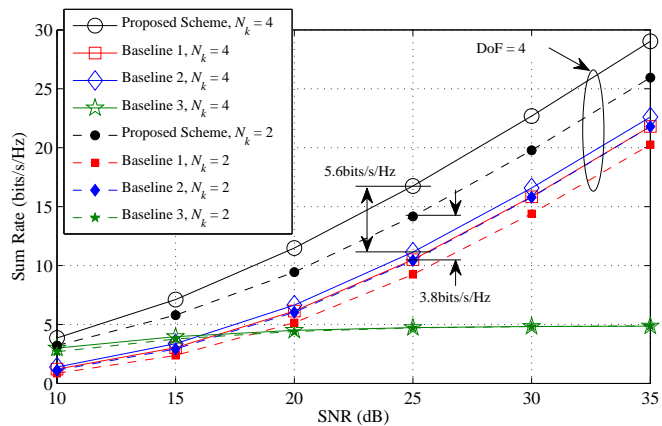}
\label{Fig:SumRate}}\\
\subfloat[][]{\includegraphics[width=4in]{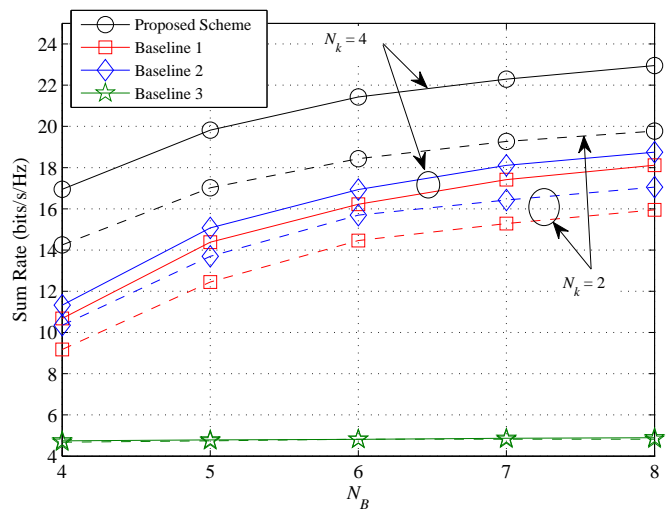}\label{Fig:SumRateVsNb}}
\caption{Average sum rate comparison. The RS and the MSs are
equipped with $N_{R}^{} = 4$ and $N_{\,k}^{} = \{ 2, 4 \}$ antennas,
respectively. The BS exchanges $L_{\,1}^{} = 2$, $L_{\,2}^{} = 1$,
and $L_{\,3}^{} = 1$ data streams with the MSs. \hbox{User 2} has
higher service priority than the other users: $[\,
\boldsymbol{\omega}_{U}^{(2)} \,]_{(1)}^{} = [\,
\boldsymbol{\omega}_{\!D}^{(2)} \,]_{(1)}^{} = 2$ and $[\,
\boldsymbol{\omega}_{U}^{(k)} \,]_{(l)}^{} = [\,
\boldsymbol{\omega}_{\!D}^{(k)} \,]_{(l)}^{} = 1$ otherwise. (a)
Average sum rate versus SNR when the BS is equipped with $N_{B}^{} =
4$ antennas. (b) Average sum rate versus the number of BS antennas
$N_{B}^{}$ at 25~dB SNR.} \label{Fig:SumRateAll}
\end{figure}

\end{document}